\documentclass[runningheads]{llncs}
\pdfoutput=1
\usepackage[title]{appendix}
\usepackage{graphicx}
\usepackage{subcaption}
\captionsetup{compatibility=false}

\usepackage{amsmath,amssymb,textcomp, amsthm}
\usepackage{enumerate}
\usepackage{algpseudocode}
\usepackage{algorithm}
\usepackage{wrapfig}
\usepackage{gensymb}
\usepackage{bold-extra}
\usepackage{mdframed}
\usepackage{thmtools,thm-restate}
\usepackage[htt]{hyphenat}

\begin{document}

\title{Voronoi diagram of orthogonal polyhedra\\ in two and three dimensions}
\titlerunning{Voronoi diagram of orthogonal polyhedra}
\author{Ioannis Z.~Emiris\inst{1,2}
\and Christina Katsamaki\inst{1}}
\authorrunning{I.Z.~Emiris and C.~Katsamaki}
\institute{Department of Informatics and Telecommunications\\
National and Kapodistrian University of Athens, Greece \and
ATHENA Research and Innovation Center, Maroussi, Greece\\
\email{\{emiris,ckatsamaki\}@di.uoa.gr}
}
\maketitle  
\begin{abstract}
Voronoi diagrams are a fundamental geometric data structure for obtaining proximity relations. 
We consider collections of axis-aligned orthogonal polyhedra in two and three-dimensional space under the max-norm, which is a particularly useful scenario in certain application domains. We construct the exact Voronoi diagram inside an orthogonal polyhedron with holes defined by such polyhedra. 
Our approach avoids creating full-dimensional elements on the Voronoi diagram and yields a skeletal representation of the input object. 
We introduce a complete algorithm in 2D and 3D that follows the subdivision paradigm relying on a bounding-volume hierarchy; this is an original approach to the problem. The complexity is adaptive and comparable to that of previous methods. Under a mild assumption it is
$O(n/ \Delta + 1/\Delta^2)$ in 2D or $O(n\alpha^2/\Delta^2 +1/\Delta^3)$ in 3D, where $n$ is the number of sites, namely edges or facets resp., $\Delta$ is the maximum cell size for the subdivision to stop, and $\alpha$ bounds vertex cardinality per facet.
We also provide a numerically stable, open-source implementation in Julia, illustrating the practical nature of our algorithm.

\keywords{max norm \and axis-aligned \and rectilinear \and straight skeleton \and subdivision method \and numeric implementation}
\end{abstract}

\section{Introduction}
Orthogonal shapes are ubiquitous in numerous applications including raster graphics and VLSI design. We address Voronoi diagrams of 2- and 3-dimensional orthogonal shapes. We focus on the $L_\infty$ metric which is used in the relevant applications and has been studied much less than $L_2$.

A \textit{Voronoi diagram} partitions space into regions based on distances to a given set $\mathcal{S}$ of geometric objects in $\mathbb{R}^d$. Every $s\in \mathcal{S}$ is a \textit{Voronoi site} (or simply a \textit{site}) and its \textit{Voronoi region} under metric $\mu$, is 
 $
 V_\mu(s) = \{x\in \mathbb{R}^d \mid \mu(s,x) < \mu(x,s'),\text{ }s' \in \mathcal{S} \setminus s \}.
 $ 
 The \textit{Voronoi diagram} is the set $\mathcal{V}_{\mu}(\mathcal{S}) = \mathbb{R}^d \setminus \bigcup_{s\in \mathcal{S}} V_{\mu}(s)$, consisting of all points that attain their minimum distance to $\mathcal{S}$ by at least two Voronoi sites.  For general input, the Voronoi diagram is a collection of faces of dimension $0,1,\dots , d-1$. A face of dimension $k $ comprises  points equidistant to at least $d+1-k$ sites. Faces of dimension 0 and 1 are called \textit{Voronoi vertices} and \textit{Voronoi edges} respectively.
The union of Voronoi edges and vertices is the \textit{1-skeleton}.
Equivalently, a Voronoi diagram is defined as the \textit{minimization diagram}
of the distance functions to the sites. The diagram is a partitioning of space into regions, each region consisting of points where some function has lower value than any other function.

\begin{wrapfigure}{r}{0.4\textwidth}
\vspace{-0.6cm}
\centerline{\includegraphics[scale=0.3]{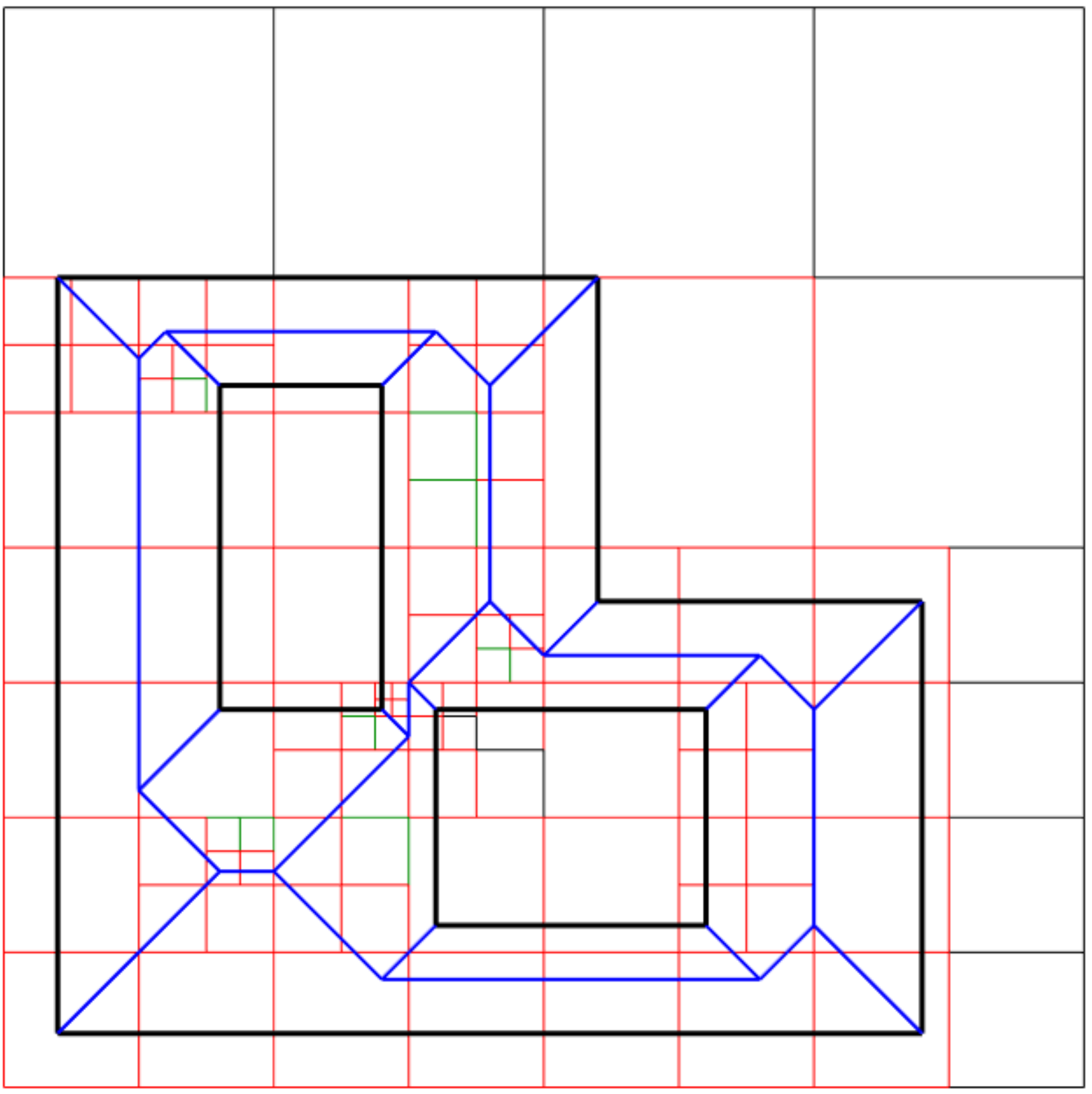}}
\caption{Voronoi diagram of a rectilinear polygon with 2 holes.\label{fig1}}
\vspace{-0.5cm}
\end{wrapfigure}

In this paper, we study Voronoi diagrams in the interior of an axis-aligned \textit{orthogonal polyhedron}; its faces meet at right angles, and the edges are aligned with the axes of a coordinate system. It may have arbitrarily high genus with holes defined by axis-aligned orthogonal polyhedra, not necessarily convex. Facets are simply connected (without holes) for simplicity.
The sites are the facets on the boundary of all polyhedra. 

The $L_\infty$ or Chebyshev distance is a metric defined on a vector space where the distance between two vectors is the greatest of their differences along any coordinate dimension. For two points $x,y \in \mathbb{R}^d$, with standard coordinates $x_{i}$ and $y_{i}$ respectively, their $L_\infty$ distance is 
$\mu_\infty(x,y) = \max_i \{ |x_i-y_i| \}$. The $L_\infty$ distance of $x$ to a set $S \subset \mathbb{R}^d$ is
$\mu_\infty(x,S) = \inf \{ \mu_\infty(x,y) \mid y\in S \}$.

The $\mathbf{L_\infty}$ \textbf{Voronoi diagram} uses the $L_\infty$ metric to measure distances.
In Figure~\ref{fig1}, the Voronoi diagram\footnote{computed by our software and visualized with {\tt Axl} viewer.} of a 
rectilinear polygon with 2 holes is shown in blue. Our algorithm follows the \textit{Subdivision Paradigm} and handles 2D and 3D sites. It reads in a region bounding all input sites 
and performs a recursive subdivision into cells (using quadtrees or octrees).
Then, a reconstruction technique is applied to produce an isomorphic representation of the Voronoi diagram. 

\subsubsection{Previous Work.}
If $V$ is the number of polyhedral vertices, the combinatorial complexity of our Voronoi diagrams  equals $O(V)$
in 2D \cite{PapLee01} and $O(V^2)$ in 3D \cite{straight3d}. In 3D, it is estimated experimentally to be, in general, $O(V)$ \cite{DBLP:journals/cvgip/MartinezGA13}.

Related work in 2D concerns $L_\infty$ Voronoi diagrams of segments. In \cite{PapLee01}, they introduce an $O(n\log n)$ sweep-line algorithm, where $n$ is the number of segments; they offer a robust implementation for segments with $O(1)$ number of orientations. 
Another algorithm implemented in library CGAL \cite{CDGP14} is incremental.
The $L_\infty$ Voronoi diagram of orthogonal polyhedra (with holes) is addressed in \cite{DBLP:journals/cvgip/MartinezGA13} in view of generalizing the sweep-line paradigm to 3D: in 2D it runs in $O(n\log n)$ as in \cite{PapLee01}, and in 3D the sweep-plane version runs in $O(kV)$, where $k=O(V^2)$ is the number of events. 

When the diagram is restricted in the interior of a polygon or polyhedron, it serves as a skeletal representation. A skeleton reduces the dimension of the input capturing its boundary's geometric and topological properties. In particular, \textit{straight skeletons} are very related to the $L_\infty$ Voronoi diagram of rectilinear polygons \cite{straight}. 
An algorithm for the straight skeleton of a simple polygon (not necessarily rectilinear) has complexity $O(V^{\frac{17}{11} +\varepsilon})$ for fixed $\varepsilon>0$ \cite{EppSkeleton}. For $x$-monotone rectilinear polygons, a linear time algorithm was recently introduced \cite{Held19}. In 3D, an analogous equivalence of the straight skeleton of orthogonal polyhedra and the $L_\infty$ Voronoi diagram exists \cite{straight3d} and a complete analysis of 3D straight skeletons is provided in \cite{Aurenhammer2016}. Specifically for 3D orthogonal polyhedra, in \cite{straight3d} they offer two algorithms that construct the skeleton in $O(\min\{V^2 \log V, k \log ^{O(1)} V\})$, where $k=O(V^2)$ is the number of skeleton features. Both algorithms are rather theoretical and follow a wavefront propagation process. Recently, the straight skeleton of a 3D polyhedral terrain was addressed \cite{Held18}.

For polyhedral objects it is common to consider as Voronoi sites the vertices, edges and higher-dimensional faces (when $d\ge 3$) that form their boundary \cite{polygon,Etzion:2002:CVS:511635.511636,Koltun:2002:PVD:513400.513428}. However, the combination of the underlying metric and the structure of the input polyhedron may cause the appearance of full-dimensional faces in the Voronoi diagram. Under this event, the Voronoi diagram cannot serve as a skeletal representation of the input shape and the design of an effective algorithm for its computation is significantly more difficult. For example, the $L_2$ Voronoi diagram of a polygon in the plane, under the standard definition, contains two dimensional regions of points equidistant to a reflex vertex and its adjacent edges. 
A usual approach towards eliminating the appearance of these two-dimensional regions is to consider as Voronoi sites the \textit{open} edges and the polygon vertices and define by convention 1-dimensional bisectors between them. A similar concept is this of defining the `cones of influence'' \cite{cone} or ``zones'' \cite{Yap-2012} of each site.

Full-dimensional faces on the Voronoi diagram in our setting are very frequent \cite{DBLP:journals/cvgip/MartinezGA13};
under $L_\infty$, when two points have same coordinate value, their bisector is full dimensional. Conventions have been adopted, to ensure bisectors between sites are not full-dimensional
\cite{PapLee01,DBLP:journals/cvgip/MartinezGA13,CDGP14}. We address this issue in the next section.

The literature on subdivision algorithms for Voronoi diagrams is vast, e.g.\ \cite{Bennett-2016,EMIRIS2013511,Yap-2012} and references therein. 
 Our work is closely related to \cite{Yap-2012,Bennett-2016}.
  These algorithms are quite efficient, since they adapt to the input,
and rather simple to implement. However, none exists for our problem.

\subsubsection{Our contribution.} 
We express the problem by means of the minimization diagram of a set of algebraic functions with restricted domain, that express the $L_\infty$ distance of points to the boundary.
The resulting Voronoi diagram, for general input, is $(d-1)-$dimensional.
We focus on 2D and 3D orthogonal polyhedra with holes,
where the resulting Voronoi diagram is equivalent to the straight skeleton. We introduce an efficient and complete algorithm for both dimensions, following the subdivision paradigm which is, to the best of our knowledge, the first subdivision algorithm for this problem. 
We compute the exact Voronoi diagram (since $L_\infty$ bisectors are linear). The output data structure can also be used for nearest-site searching. 

The overall complexity is output-sensitive, which is a major advantage. Under Hypothesis~1, which captures the expected geometry of the input as opposed to worst-case behaviour, the complexity is $O(n/\Delta+1/\Delta^2)$ in 2D, where $n$ the number of sites (edges) and $\Delta$ the separation bound (maximum edge length of cells that guarantees termination). This bound is to be juxtaposed to the worst-case bound of $O(n\log n)$ of previous methods.
In 3D, it is $O( n\, \alpha^2/ \Delta^{2} + 1/\Delta^3)$ where $\alpha$ bounds vertex cardinality per facet (typically constant). Under a further assumption (Remark~\ref{Rmk3d}) this bound becomes $O(V/\Delta^2 +1/\Delta^3)$ whereas existing worst-case bounds are quasi-quadratic or cubic in $V$. 
$\Delta$ is measured under appropriate scaling for the bounding box to have edge length~1. Scaling does not affect arithmetic complexity, but may be adapted to reduce the denominators' size in rational calculations.
The algorithm's relative simplicity has allowed us to develop a numerically stable software in Julia\footnote{\url{https://gitlab.inria.fr/ckatsama/L\_infinity\_Voronoi/}}, a user-friendly platform for efficient numeric computation; it consists of about 5000 lines of code and is the first open-source code in 3D.

The rest of this paper is organized as follows. The next section provides structural properties of Voronoi diagrams. In Sect.~\ref{Splanar} we introduce our 2D algorithm: the 2D and 3D versions share some basic ideas which are discussed in detail in this section. In particular, we describe a hierarchical data structure of bounding volumes, used to accelerate the 2D algorithm for certain inputs and is necessary for the efficiency of the 3D algorithm. Then we provide the complexity analysis of the 2D algorithm. In Sect.~\ref{S3d} we extend our algorithm and analysis to 3D. In Sect.~\ref{SImpl} we conclude with some remarks, examples and implementation details. 

\section{Basic definitions and properties}\label{Sbasics}

We introduce useful concepts in general dimension.
Let $\mathcal{P}$ be an orthogonal polyhedron of full dimension in $d$ dimensions, whose boundary consists of $n$ \textit{simply connected} (without holes) facets; these are edges or flats in 2D and 3D, resp. Note that $\mathcal{P}$ includes the shape's interior and boundary. Now $\mathcal{S}$ consists of the \textit{closed facets} that form the boundary of $\mathcal{P}$ including all facets of the interior polyhedra. There are as many such polyhedra as the genus.

Under the $L_\infty$, the Voronoi diagram $\mathcal{V}_\infty(\mathcal{S})$ often contains full-dimensional regions (e.g. Figure~\ref{fig:2dbisa}). Aiming at a $(d-1)$-dimensional Voronoi diagram, we will define an appropriate set of distance functions and restrict their minimization diagram to $\mathcal{P}$.

Let $V_\infty(s)$ denote the Voronoi region of site $s$ under the $L_\infty$ metric. Lemma~\ref{lem:prop} gives a property of standard $L_\infty$ Voronoi diagram preserved by Def.~\ref{DourVD}.

\begin{figure}[h]
\centering
\begin{subfigure}{0.4\textwidth}
\centering
\includegraphics[scale=0.6]{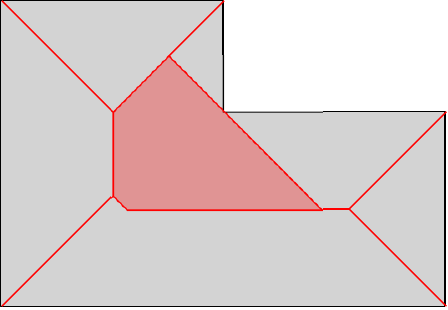}
\caption{ }
\label{fig:2dbisa}
\end{subfigure}
\begin{subfigure}{0.4\textwidth}
\centering
\includegraphics[scale=0.6]{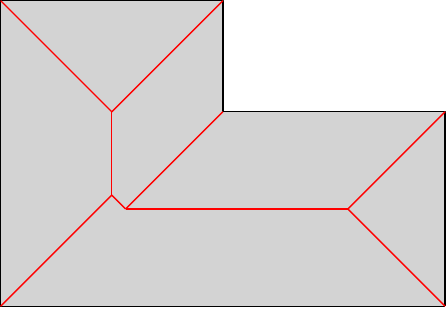}
\caption{ \label{fig:2dbisb}}
\end{subfigure}
\caption{Voronoi diagrams (in red): (a) standard, under $L_\infty$, (b) under Def.~\ref{DourVD}.\label{fig:2dbis}}
\end{figure}
\begin{restatable}{lemma}{prop} \label{lem:prop}
Let $s\in \mathcal{S}$. For every point $p \in V_\infty(s)$ it holds that $\mu_\infty(p,s) = \mu_\infty(p, \text{aff}(s))$, where aff$(s)$ is the affine hull of $s$.
\end{restatable}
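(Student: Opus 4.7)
The plan is to establish the two directions separately. One direction, namely $\mu_\infty(p,s) \ge \mu_\infty(p,\text{aff}(s))$, is immediate from $s \subseteq \text{aff}(s)$ and the definition of distance to a set. For the reverse direction I would argue by contradiction: if $\mu_\infty(p, s) > \mu_\infty(p, \text{aff}(s))$, I will extract another site $s' \in \mathcal{S} \setminus \{s\}$ that is at most as far from $p$ as $s$ is, contradicting $p \in V_\infty(s)$, whose definition uses strict inequality.

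First I would pick a closest point $q \in s$ realizing $\mu_\infty(p,q) = \mu_\infty(p,s)$; such a $q$ exists since $s$ is compact. The key observation is that under the strict-inequality assumption, $q$ must lie on the relative boundary $\partial s$ of the facet inside $\text{aff}(s)$. Indeed, the function $y \mapsto \mu_\infty(p,y)$ is convex on $\text{aff}(s)$; if its restriction to $s$ had a minimizer in the relative interior, that minimizer would also be a local (and hence, by convexity, global) minimizer on $\text{aff}(s)$, forcing $\mu_\infty(p,s) = \mu_\infty(p,\text{aff}(s))$ and contradicting the standing assumption.

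The second key step uses the structure of $\mathcal{P}$. Since $\mathcal{P}$ is an orthogonal polyhedron whose closed facets tile its boundary, every $(d-2)$-dimensional piece of the relative boundary of a facet is shared with an adjacent facet meeting it at a right angle (in 2D, each vertex belongs to two edges; in 3D, each edge belongs to two polygonal faces). Consequently there exists $s' \in \mathcal{S}$, $s' \neq s$, with $q \in s'$. Then
\[
\mu_\infty(p, s') \;\le\; \mu_\infty(p,q) \;=\; \mu_\infty(p,s),
\]
which is incompatible with $p \in V_\infty(s)$ and yields the desired contradiction.

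The main obstacle I anticipate is making the ``closest point lies on $\partial s$'' step fully airtight in all configurations, especially when several coordinate directions tie for the maximum in $\mu_\infty(p,q)$, so that perturbations along a single axis do not reduce the distance. The convexity argument sidesteps this cleanly, which is why I would lean on it rather than on axis-by-axis case analysis; a direct coordinate verification is possible but more tedious.
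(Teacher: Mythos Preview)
Your proposal is correct and follows the same overall architecture as the paper's proof: assume $\mu_\infty(p,s) > \mu_\infty(p,\text{aff}(s))$, take a nearest point $q\in s$, argue that $q$ must lie on the relative boundary $\partial s$, pick an adjacent facet $s'\ni q$, and derive $\mu_\infty(p,s')\le\mu_\infty(p,s)$, contradicting $p\in V_\infty(s)$. The one substantive difference is how you establish $q\in\partial s$. The paper does this by a coordinate perturbation: since aff$(s)\subset\{x_k=c\}$ and the assumed strict inequality forces $\mu_\infty(p,q)=|p_j-q_j|$ for some $j\neq k$, it slides $q$ along coordinate $j$ to decrease the distance. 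You instead invoke convexity of $y\mapsto\mu_\infty(p,y)$ on aff$(s)$, so an interior minimizer on $s$ would be a local and hence global minimizer on aff$(s)$. Your route is cleaner and, as you note, avoids the case analysis when several coordinates tie for the maximum (the paper's one-coordinate perturbation is a bit terse on that point, though it is easily repaired by moving $q$ toward the orthogonal projection of $p$ onto aff$(s)$). It also does not require $s$ to be convex, which matters in 3D where facets may be nonconvex rectilinear polygons.
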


\begin{proof}
Assume w.l.o.g. that $s \subset \{ x\in \mathbb{R}^d \mid x_k = c \}$, $k\in [d]$ and $c \in \mathbb{R}$. If $\mu_\infty(p,s) \neq \mu_\infty(p, \text{aff}(s))$, then $\mu_\infty(p, s) = \inf\{ \max_{i\in [d]\setminus k}\{|p_i-q_i|\} \mid \forall q\in s\}$ and there is $q\in \partial s$ such that $\mu_\infty(p,s)=\mu_\infty(p,q)= |p_j-q_j| $, $j\in [d]\setminus k $. To see this, suppose on the contrary that $q$ is in the interior of $s$. Then, we can find $q'\in s$ $\varepsilon$-close to $q$ such that $|p_j-q_j'|=|p_j-q_j|-\varepsilon \Rightarrow \mu_\infty(p,q') = \mu_\infty(p, q) - \varepsilon$, for any $\varepsilon>0$. This leads to a contradiction. 
Therefore,  there is a site $s'\neq s$ with $q\in s'$. 
Since $p \in V_\infty(s)$, then
$ \mu_\infty(p, s) <  \mu_\infty(p, s')  \le \mu_\infty(p, q)$; contradiction. 
\end{proof}

\begin{wrapfigure}{R}{0.3\textwidth}
\vspace{-0.5cm}\begin{center}
\includegraphics[scale=0.6]{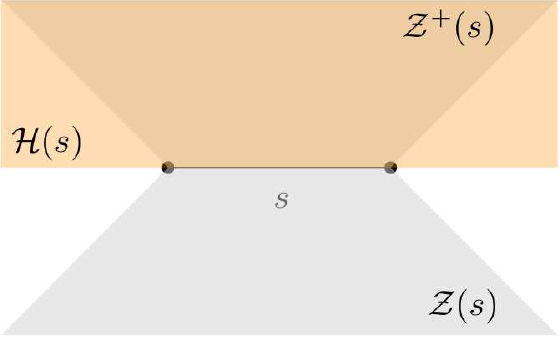}
\end{center}
\caption{$\mathcal{H}(s)$, $\mathcal{Z}(s)$, $\mathcal{Z}^+(s)$ for segment $s$.\label{fig3}}
\vspace{-0.5cm}
\end{wrapfigure}

 For $s\in \mathcal{S}$ let $\mathcal{H}(s)$ be the \textit{closed} halfspace of $\mathbb{R}^d$ induced by aff$(s)$ such that for every $p\in s$ there exists a point $q \in \mathcal{H}(s)$ : $q \in int(\mathcal{P})$ and $\mu_\infty(p, q)<\epsilon$, $\forall \epsilon>0$.
We define the \textbf{(unoriented) zone} of $s$ as $ \mathcal{Z}(s) :=\{ p \in \mathbb{R}^d  \mid \mu_\infty(p,s) = \mu_\infty(p,\text{aff}(s)) \}$.  The \textbf{oriented zone} of $s$ is $\mathcal{Z}^+(s) := \mathcal{H}(s) \cap \mathcal{Z}(s)$ (Figure~\ref{fig3}). 
We associate to $s$ the distance function 
$$
D_s(\cdot): \mathbb{R}^d \rightarrow \mathbb{R}: 
p\mapsto \begin{cases} \mu_\infty(p,s), \text{ if } p \in \mathcal{Z}^+(s), \\ \infty, \text{ otherwise.} \end{cases}
$$
The minimization diagram of $\mathcal{D}= \{D_s \mid s \in \mathcal{S} \}$ restricted to $\mathcal{P}$ yields a Voronoi partitioning. The \textit{Voronoi region} of $s$ with respect to $D_s(\cdot)$ is 
$$
V_{\mathcal{D}}(s) = \{ p \in \mathcal{P}  \mid D_s(p) < \infty \text{ and } \forall s' \in \mathcal{S}\setminus s \mkern9mu D_{s}(p)<D_{s'}(p) \}.
$$

\begin{definition}\label{DourVD}
The \textit{Voronoi diagram} of $\mathcal{P}$ w.r.t. $\mathcal{D}$ is $\mathcal{V}_\mathcal{D}(\mathcal{P})=\mathcal{P}\setminus \bigcup_{s\in \mathcal{S}} V_\mathcal{D}(s)$.
\end{definition}

This means one gets the Voronoi diagram of Figure~\ref{fig:2dbisb}. Clearly $\mathcal{P} \subset \bigcup_{s\in \mathcal{S}} \mathcal{Z}^+(s)$ (Figure~\ref{fig:2dbisa}). Denoting by $\overline{X}$ the closure of a set $X$, then $V_\infty(s) \subseteq V_\mathcal{D}(s) \subseteq \overline{V_\infty(s)} \subseteq \mathcal{Z}^+(s)$. The bisector of $s,s' \in \mathcal{S}$ w.r.t.\ $\mathcal{D}$ is $\text{bis}_{\mathcal{D}}(s,s') = \{ x\in \mathbb{R}^d \mid D_s(x) = D_{s'}(x) <\infty\}$. Then $\text{bis}_{\mathcal{D}}(s,s') \subset \text{affbis}(s,s')$, where $\text{affbis}(s,s')$ denotes the $L_\infty$ (affine) bisector of $\text{aff}(s), \text{aff}(s')$. In 2D (resp.\ 3D) if sites have not the same affine hull, bisectors under $\mathcal{D}$ lie on lines (resp.\ planes) parallel to one coordinate axis (resp.\ plane) or to the bisector of two perpendicular coordinate axes (resp.\ planes). 
Although the latter consists of two lines (resp.\ planes), $\text{bis}_{\mathcal{D}}$ lies only on one, and it can be uniquely determined by the orientation of the zones. 

Degeneracy of full-dimensional bisectors, between sites with the same affine hull, is avoided by infinitesimal perturbation of corresponding sites. This is equivalent to assigning priorities to the sites; the full dimensional region of the former diagram is `to the limit' assigned to the site with the highest priority (Figure~\ref{degen3ab}). Such a perturbation always exists, both for 2D \cite[Lem.~13]{DBLP:journals/cvgip/MartinezGA13} and 3D \cite[Lem.~31]{DBLP:journals/cvgip/MartinezGA13}.

\begin{figure}
\centering
\begin{subfigure}{.45\textwidth}
  \centering
  \includegraphics[scale=0.5]{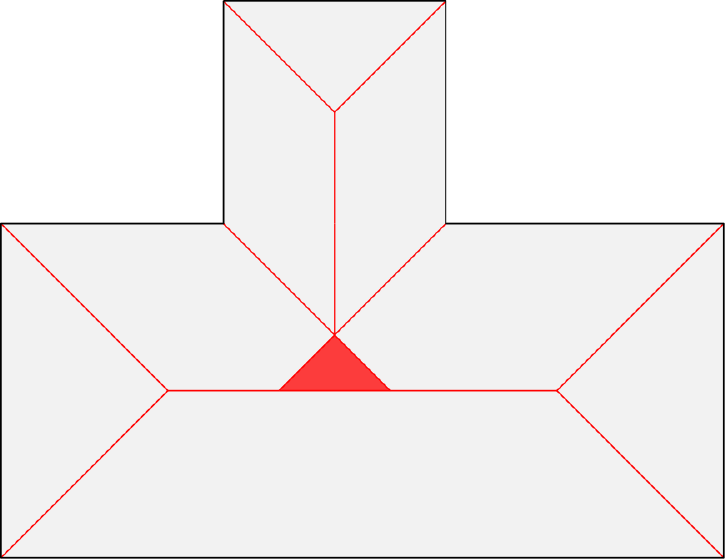}
  \caption{ }
  \label{degen3aa}
\end{subfigure}%
\hspace{0.2cm}
\begin{subfigure}{.45\textwidth}
  \centering
  \includegraphics[scale=0.5]{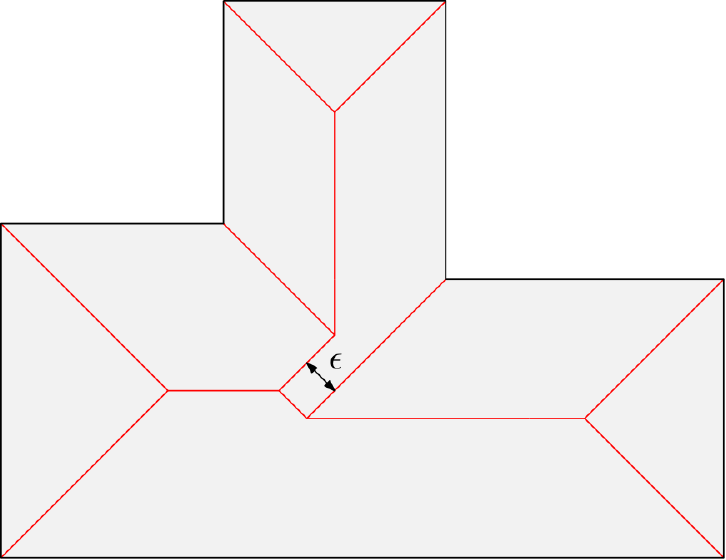}
 \caption{ \label{degen3ab}}
\end{subfigure}
\caption{(a) 2D Voronoi diagram for polygon with colinear edges. (b) 1D Voronoi diagram after infinitesimal perturbation of edges, where $\epsilon\rightarrow 0^+$.}
\label{degen3a}
\end{figure}

Set $X$ is \textit{weakly star shaped} with respect to $Y \subseteq X$ if $\forall x \in X$, $\exists y\in Y$ such that the segment $(x,y)$ belongs to $X$.

\begin{restatable}{lemma}{star}\label{star}
For every $s \in S$, $\overline{V_\mathcal{D}(s)}$ is weakly star shaped with respect to $s$.
\end{restatable}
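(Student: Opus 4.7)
My plan is to take $y\in s$ to be a specific $L_\infty$-nearest point of $s$ to $p$ and show that the segment $[p,y]$ stays inside $\overline{V_\mathcal{D}(s)}$. Since $\overline{V_\mathcal{D}(s)}\subseteq\mathcal{Z}^+(s)$ and the latter is closed, $p\in\mathcal{Z}^+(s)$, so I may assume without loss of generality that $s\subset\{x_k=c\}$, $\mathcal{H}(s)=\{x_k\ge c\}$, and $D_s(p)=p_k-c$. Using $p\in\mathcal{Z}(s)$, I pick $y\in s$ with $y_k=c$ and $|p_i-y_i|\le p_k-c$ for every $i\ne k$, so $\mu_\infty(p,y)=p_k-c$. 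For $t\in[0,1]$ set $q_t:=(1-t)p+ty$; a coordinate-wise computation yields $q_t\in\mathcal{Z}^+(s)$, $D_s(q_t)=(1-t)(p_k-c)$, and $\mu_\infty(p,q_t)=t(p_k-c)$, so $D_s$ decreases linearly along the segment.

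The core task is to prove $D_{s'}(q_t)\ge D_s(q_t)$ for every $s'\ne s$ and every $t\in[0,1]$; this simultaneously keeps $q_t$ inside $\mathcal{P}$, as leaving $\mathcal{P}$ through some boundary site $s_0$ at $t_0<1$ would give $0=\mu_\infty(q_{t_0},s_0)\ge(1-t_0)(p_k-c)>0$, a contradiction. If $q_t\notin\mathcal{Z}^+(s')$ then $D_{s'}(q_t)=\infty$ and the inequality is trivial, so assume $q_t\in\mathcal{Z}^+(s')$ and $D_{s'}(q_t)=\mu_\infty(q_t,s')$. The main tool is the $L_\infty$ triangle inequality
\[
\mu_\infty(q_t,s')\;\ge\;\mu_\infty(q_{t^*},s')-(t-t^*)(p_k-c),\qquad 0\le t^*\le t.
\]
When $p\in\mathcal{Z}^+(s')$, I set $t^*=0$ and combine with $\mu_\infty(p,s')=D_{s'}(p)\ge D_s(p)=p_k-c$ (the hypothesis $p\in\overline{V_\mathcal{D}(s)}$) to obtain the required bound $\mu_\infty(q_t,s')\ge(1-t)(p_k-c)=D_s(q_t)$.

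The remaining case, and the main obstacle, is $p\notin\mathcal{Z}^+(s')$ yet $q_t\in\mathcal{Z}^+(s')$: I let $t^*$ be the smallest parameter in $(0,t]$ at which $q_{t^*}\in\mathcal{Z}^+(s')$, so $q_{t^*}\in\partial\mathcal{Z}^+(s')\subseteq\partial\mathcal{H}(s')\cup\partial\mathcal{Z}(s')$. In either sub-case the axis-aligned structure of the boundary forces the existence of a neighbouring site $s''$ sharing the corresponding vertex, edge, or face of $s'$ with $q_{t^*}\in\mathcal{Z}^+(s'')$ and $\mu_\infty(q_{t^*},s')\ge\mu_\infty(q_{t^*},s'')$; recursing along this finite zone-adjacency relation reduces to the already-treated case and yields $\mu_\infty(q_{t^*},s')\ge D_s(q_{t^*})$. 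Plugging this into the displayed triangle inequality produces $\mu_\infty(q_t,s')\ge(1-t)(p_k-c)=D_s(q_t)$, so $[p,y]\subseteq\overline{V_\mathcal{D}(s)}$. The non-routine step is exactly this propagation through adjacent zones; everything else is a clean consequence of the axis-aligned structure and the $L_\infty$ triangle inequality.
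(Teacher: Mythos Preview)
Your setup matches the paper's: pick a nearest point $y\in s$, parametrise the segment, and observe $D_s(q_t)=(1-t)\rho$ with $\rho=D_s(p)$. The divergence is in how you control $D_{s'}(q_t)$.

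The paper's argument is a single line you are missing. Because $p\in\overline{V_\mathcal{D}(s)}$, one has $\rho=\mu_\infty(p,\partial\mathcal{P})$, so the open ball $B_\infty(p,\rho)$ contains no site at all. Since $\mu_\infty(p,q_t)=t\rho$ and $D_s(q_t)=(1-t)\rho$, the ball $B_\infty(q_t,(1-t)\rho)$ is nested in $B_\infty(p,\rho)$. If some $s'$ satisfied $D_{s'}(q_t)<D_s(q_t)$, then $\mu_\infty(q_t,s')<(1-t)\rho$, so $s'$ would meet $B_\infty(q_t,(1-t)\rho)\subseteq B_\infty(p,\rho)$, a contradiction. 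No zone bookkeeping is needed.

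Your sub-case 2b is where the gap lies. The ``propagation through adjacent zones'' is asserted but not proved: you need that at every boundary point $q_{t^*}\in\partial\mathcal{Z}^+(s')$ there is an adjacent site $s''$ with $q_{t^*}\in\mathcal{Z}^+(s'')$ and $\mu_\infty(q_{t^*},s'')\le\mu_\infty(q_{t^*},s')$, \emph{and} that the resulting recursion terminates. Neither is obvious. In particular, when $q_{t^*}\in\partial\mathcal{H}(s')=\text{aff}(s')$ you get $\mu_\infty(q_{t^*},s')=0$, which is strictly less than $D_s(q_{t^*})>0$; ruling this out is exactly the statement that the open segment cannot hit another site, i.e.\ that $B_\infty(p,\rho)$ is site-free --- the paper's key observation. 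So your hard case cannot be closed without that observation, and once you have it the entire zone-adjacency machinery becomes unnecessary. Replace sub-case~2b by the nested-ball argument and the proof is complete in two lines.
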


\begin{proof}
Let $p \in \overline{V_\mathcal{D}(s)}$ and $\rho= D_s(p)= \mu_\infty(p,q)$, $q\in s$. The open ball $B_\infty(p,\rho)$ centered at $p$ with radius $\rho$ is empty of sites. For $t\in (0,1)$, let $w= tp+(1-t)q$ on the line segment $(p,q)$. 
Then, since for every $i\in [d]$ it is $|w_i-q_i| = t|p_i-q_i|$, it holds that $w\in \mathcal{Z}^+(s)$ and $D_s(w)= t\rho$. If $w \not \in \overline{V_\mathcal{D}(s)}$ there is a site $s'$ such that $D_{s'}(w) < t\rho$. But $B_\infty(w,t\rho) \subseteq B_\infty(p,\rho)$ and $s'$ intersects $B_\infty(p,\rho)$, leading to a contradiction. 
\end{proof}


Therefore, since every $s$ is simply connected, from Lemma~\ref{star} $\overline{V_{\mathcal{D}}(s)}$ is \textit{simply connected} and $V_{\mathcal{D}}(s)$ is also simply connected. 

Let the \textit{degree} of a Voronoi vertex be the number of sites to which it is equidistant. If the degree is $>d+1$, the vertex is \textit{degenerate}. 
Lemma~\ref{Ldegree} is nontrivial: in metrics like $L_2$ 
degree is arbitrarily large.  For $d=2,3$ this bound is tight \cite{DBLP:journals/cvgip/MartinezGA13}.

\begin{restatable}{lemma}{degree}\label{Ldegree}
(a) The maximum degree of a Voronoi vertex is less than or equal to $2^d d$. (b) When $d=2$, a Voronoi vertex cannot have degree 7.
\end{restatable}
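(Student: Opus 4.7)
The plan is to exploit the axis-aligned structure of the $L_\infty$ ball $B = B_\infty(v,\rho)$ around $v$, which is a cube with $2d$ facets, and to bound the degree by counting the equidistant sites associated with each facet.

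First I would note that for every site $s$ equidistant to $v$, Lemma~\ref{lem:prop} together with the zone condition gives $\mu_\infty(v,s)=\mu_\infty(v,\text{aff}(s))=\rho$, so $\text{aff}(s)$ is perpendicular to a coordinate axis at $L_\infty$-distance exactly $\rho$ from $v$. Consequently $\text{aff}(s)$ contains one of the $2d$ facets of $B$, and the equidistant sites partition according to which facet they are associated with. The reduction is then to show that each facet carries at most $2^{d-1}$ equidistant sites; summing over the $2d$ facets yields $k\le 2d\cdot 2^{d-1}=2^d d$.

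For the per-facet bound, fix a facet $F$ of $B$ and let $v'$ be the projection of $v$ onto $\text{aff}(F)$, which is the center of $F$. The $d-1$ median hyperplanes through $v'$ cut $F$ into $2^{d-1}$ closed axis-aligned quadrants $F_\tau$, one per $\tau\in\{-1,+1\}^{d-1}$, each anchored at a vertex of $F$. The core claim is that every equidistant site $s$ with $\text{aff}(s)\supseteq F$ contains a whole quadrant $F_\tau$; since the sites are pairwise interior-disjoint and each quadrant has non-empty interior, this immediately yields at most one site per quadrant, hence at most $2^{d-1}$ sites per facet. I expect this quadrant-covering claim to be the main obstacle, and would argue it by contradiction: if $s$ covered no $F_\tau$, then on some $(d-2)$-face of $s$ separating $s$ from a vertex of $F$, the polyhedron boundary would continue with a perpendicular facet $s^{\perp}$. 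Orienting $s^{\perp}$ toward the polyhedron interior (which by $\mathcal{H}(s)$ contains points arbitrarily close to $v$), an empty-ball argument in the spirit of Lemma~\ref{star} places $s^{\perp}$ at $L_\infty$-distance strictly less than $\rho$ from $v$, contradicting the minimality of $\rho$. The only alternative orientation forces $s$ to extend across that $(d-2)$-face all the way to the corresponding vertex of $F$, and iterating this over all sides of $s$ directed toward $v'$ forces a complete $F_\tau$ inside $s$.

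For part (b), assume the degree equals $7$. The per-facet bound of $2^{d-1}=2$ forces the distribution across the four facets of the square $B$ to be $(2,2,2,1)$ up to ordering, and the quadrant characterization from (a) says each bi-site facet has its two sites covering the two opposite endpoints of its 1D span, while the singleton facet has its unique site covering one or both of its endpoints. I would then track the interior orientations $\mathcal{H}(s)$ of the equidistant sites in angular order around $v$: the polygon boundary locally traverses the equidistant sites cyclically, and enumerating the three possible positions of the singleton-facet's site relative to the three bi-site facets shows that no such arrangement closes up into a consistent orthogonal boundary. Hence degree $7$ is impossible, while degrees $4$, $6$, and $8$ are attained by simple examples such as rectangles, L-shapes, and plus-shapes.
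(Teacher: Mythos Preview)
Your route for part~(a) is genuinely different from the paper's. The paper places $v$ at the origin and argues via the $2^d$ \emph{orthants} around it: at most one Voronoi edge emanating from $v$ can lie in the interior of any given orthant, and since each such edge is equidistant to $d$ sites, this yields at most $d$ Voronoi regions per orthant and hence $2^d d$ overall. You instead decompose by the $2d$ \emph{facets} of $B_\infty(v,\rho)$ and try to bound the equidistant sites per facet by $2^{d-1}$ via a quadrant-covering claim. The paper's argument is diagram-centric (counting edges/regions), yours is site-centric (counting sites on a facet); both land on the same product $2^d d$, but through different factorizations $2^d\cdot d$ versus $2d\cdot 2^{d-1}$.

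Your quadrant-covering step, however, has a real gap. The contradiction relies on the perpendicular facet $s^\perp$ at a boundary $(d{-}2)$-face of $s$ being directed \emph{toward} $v$, so that it enters the open ball and produces a point at distance $<\rho$. At a reflex corner this fails. In 2D, take $v=(0,0)$, $\rho=1$, and a polygon whose boundary along $y=1$ is broken by two upward ``chimneys'' over $x\in[-0.5,-0.3]$ and $x\in[0.3,0.5]$. The middle site $s_2=[-0.3,0.3]\times\{1\}$ satisfies $D_{s_2}(v)=1=\rho$, yet it contains neither half $[-1,0]\times\{1\}$ nor $[0,1]\times\{1\}$ of the top facet, and both perpendicular edges at $x=\pm 0.3$ go \emph{upward}, away from $v$, so nothing enters $B_\infty(v,1)$. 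Your clause ``the only alternative orientation forces $s$ to extend across that face all the way to the corresponding vertex of $F$'' is false in this reflex situation: the alternative orientation means $s$ \emph{stops} there, it does not extend. Thus the per-facet bound of $2^{d-1}$ is not established by the argument as written.

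For part~(b) the paper again takes a different route: it directly enumerates, up to symmetry, the two possible angular arrangements of seven Voronoi edges around $v$ (with reference to a figure) and derives a contradiction in each case from the allowed directions of the affine bisectors---in one case two vertical sites are forced to share an affine hull, in the other a single site is forced to be simultaneously horizontal and vertical. Your argument for (b) infers a $(2,2,2,1)$ split across the four facets and then tracks interior orientations cyclically; since this rests on the per-facet bound from (a), it inherits the gap above.
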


\begin{proof}
(a) Consider the vertex placed at the origin; $2^d$ orthants are formed around the vertex. To obtain the maximum number of Voronoi regions in each orthant, we count the maximum number of 
Voronoi edges in the interior of an orthant that have this Voronoi vertex as endpoint; at most one such edge can exist in each orthant. Since these Voronoi edges are equidistant to $d$ sites, result follows.

(b) 
Let $v^*= (x^*, y^*)$ be a  Voronoi vertex of degree 7. Since 7 Voronoi edges meet at $v^*$, due to symmetry, we examine the two cases of Figure~\ref{degen1c}. When the configuration of Voronoi regions around the vertex is like in Figure~\ref{degen1c}a, 
then $s_1$ is a horizontal segment and $s_2, s_7$ are vertical. Then,  $\text{aff}(s_2), \text{aff}(s_7) \subset \{(x,y) \in \mathbb{R}^2 \mid x > x^*\}$. Since $v^*\in \mathcal{Z}^+(s_2) \cap \mathcal{Z}^+(s_7)$ and is equidistant to both $s_2$ and $s_7$, the affine hulls of $s_2, s_7$ coincide. Then, whichever is the orientation of $s_1$, the affine bisectors of $s_1, s_2$ and $s_1, s_7$ cannot meet like in Figure~\ref{degen1c}a.
 When like in Figure~\ref{degen1c}b, since $b_3$ is vertical, $s_1$ is vertical. But since $b_1$ is horizontal, $s_1$ must be horizontal; a contradiction. 
\end{proof}

\begin{figure}[h]
\centering
\begin{subfigure}{.45\textwidth}
  \centering
  \includegraphics[scale=0.6]{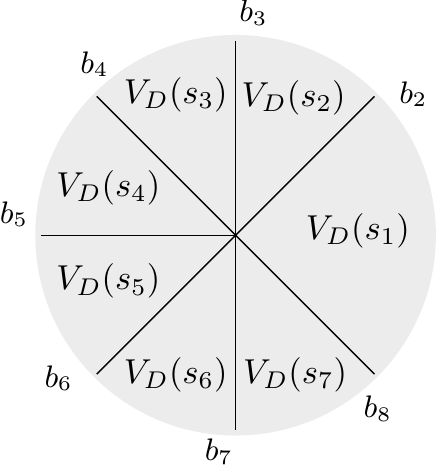}
  \caption{ \label{fig:deg7}}
\end{subfigure}%
\hspace{0.2cm}
\begin{subfigure}{.45\textwidth}
  \centering
  \includegraphics[scale=0.6]{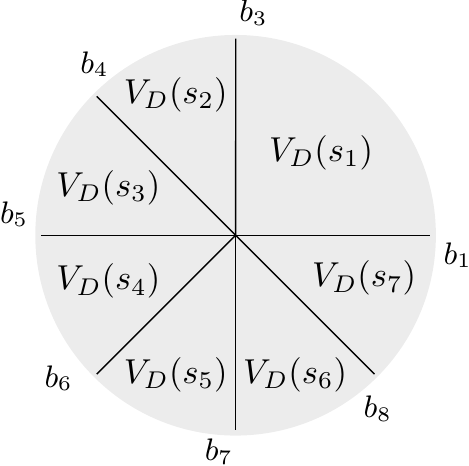}
 \caption{ \label{fig:deg7b}}
\end{subfigure}
\caption{The two cases in proof of Lemma \ref{Ldegree}. }
\label{degen1c}
\end{figure}

\section{Subdivision algorithm in two dimensions}\label{Splanar}

Given manifold rectilinear polygon $\mathcal{P}$, i.e.\ every vertex being shared by exactly two edges, the input consists of $\mathcal{S}$ and a box $\mathcal{C}_0$ bounding $\mathcal{P}$. Non-manifold vertices can be trivially converted to manifold with an infinitesimal perturbation. Subdivision algorithms include two phases. 
First, recursively subdivide $\mathcal{C}_0$ to 4 identical cells until certain criteria are satisfied, and the diagram's topology can be determined in $O(1)$ inside each cell. 
The diagram is reconstructed in the second phase.

\subsection{Subdivision Phase}
We consider subdivision cells as closed. Given cell $\mathcal{C}$, let $\phi(\mathcal{C})$ be the set of sites whose closed Voronoi region intersects $\mathcal{C}$: $ \phi(\mathcal{C})= \big\{ s \in \mathcal{S} \mid \overline{V_{\mathcal{D}}(s)} \cap \mathcal{C} \neq \emptyset \big\}$. 
For point $p \in \mathcal{P}$ we define its \textbf{label set}
$\lambda(p) = \{s \in \mathcal{S} \mid p \in  \overline{V_{\mathcal{D}}(s)} \}$.
When $p \in \mathcal{P} ^c$, where $\mathcal{P}^c$ is the complement of $\mathcal{P}$, then $\lambda(p)= \emptyset.$ 
Using the definition of label sets, one can alternatively write $\phi(\mathcal{C})$ as
$\phi(\mathcal{C}) = \bigcup_{p \in \mathcal{C}} \lambda(p) $.

Intuitively, storing $\phi(\mathcal{C})$ for every cell of the subdivision would ensure that upon termination and when the cardinality of this set for every leaf cell is a small constant, we could determine the topology of the Voronoi diagram inside each of them in constant time. 
The computation of $\phi(\mathcal{C})$ is hereditary, since $\phi(\mathcal{C}) \subseteq \phi(\mathcal{C}')$, if $\mathcal{C}'$ is the parent of $\mathcal{C}$. But it is rather costly; given $\phi(\mathcal{C}')$ with $|\phi(\mathcal{C}')| =\kappa$, it takes $O(\kappa^2)$ to compute $\phi(\mathcal{C})$, since the relative position of $\mathcal{C}$ to the bisector of every pair of sites in $\phi(\mathcal{C}')$ must be specified. 
Alternatively,  following the work of \cite{Yap-2012,Bennett-2016}, instead of implementing the exact predicate $\phi(\cdot)$, we compute an approximate one. We denote by  $p_{_\mathcal{C}}$, $r_{_\mathcal{C}}$ the center and the $L_\infty$-radius of $\mathcal{C}$ and define the \textbf{active set} of $\mathcal{C}$ as:
$$
\widetilde{\phi}(\mathcal{C}):=\big\{ s\in \mathcal{S}  \mid  \mathcal{Z}^+(s) \cap \mathcal{C} \neq \emptyset, \text{ and } \mu_\infty(p_{_\mathcal{C}}, s) \le2 r_{_\mathcal{C}} + \delta_{_\mathcal{C}}  \big\},
$$
\noindent where $\delta_{_\mathcal{C}} = \min_s D_s(p_{_\mathcal{C}})\text{, if } p_{_\mathcal{C}} \in \mathcal{P}$, and 0 otherwise.
We now explain how $\widetilde{\phi}$ approximates $\phi$ by adapting \cite[Lem.2]{Bennett-2016}, where $\widetilde{\phi}$ appears as a \textit{soft version} of $\phi$. 

\begin{restatable}{lemma}{yap}\label{lem:yap}
(a) For every cell $\mathcal{C}$, $\phi(\mathcal{C}) \subseteq \widetilde{\phi}(\mathcal{C})$.
(b) For a sequence of cells $(\mathcal{C})_i$ monotonically convergent to point $p\in \mathcal{P}$,  $\widetilde{\phi}(\mathcal{C}_i) = \phi(p)$ for $i$ large enough. 
\end{restatable}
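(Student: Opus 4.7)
My plan is to prove both inclusions directly from the definitions using the triangle inequality, exploiting the chain $V_\mathcal{D}(s)\subseteq\overline{V_\infty(s)}\subseteq\mathcal{Z}^+(s)$ noted just after Definition~\ref{DourVD}. The middle inclusion is the crucial one, because it yields the \emph{unoriented} comparison $\mu_\infty(q,s)\le \mu_\infty(q,s')$ for every $s'\in\mathcal{S}$, bypassing the question of whether $q$ lies in the oriented zone $\mathcal{Z}^+(s')$.

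For part~(a), fix $s\in\phi(\mathcal{C})$ and a witness $q\in\overline{V_\mathcal{D}(s)}\cap\mathcal{C}$; since $q\in\mathcal{Z}^+(s)$, the zone condition of $\widetilde{\phi}$ is immediate. For the ball condition, I would choose a reference site $s^\star$ close to $p_\mathcal{C}$: if $p_\mathcal{C}\in\mathcal{P}$, let $s^\star$ attain $\delta_\mathcal{C}=D_{s^\star}(p_\mathcal{C})$; otherwise the segment $[p_\mathcal{C},q]\subset\mathcal{C}$ must cross $\partial\mathcal{P}$, picking up some $s^\star$ with $\mu_\infty(p_\mathcal{C},s^\star)\le r_\mathcal{C}$ while $\delta_\mathcal{C}=0$. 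In both cases $\mu_\infty(p_\mathcal{C},s^\star)\le \delta_\mathcal{C}+r_\mathcal{C}\cdot[p_\mathcal{C}\notin\mathcal{P}]$, so chaining the triangle inequality with the unoriented comparison gives
\[
\mu_\infty(p_\mathcal{C},s)\;\le\;\mu_\infty(p_\mathcal{C},q)+\mu_\infty(q,s)\;\le\;r_\mathcal{C}+\mu_\infty(q,s^\star)\;\le\;2\,r_\mathcal{C}+\delta_\mathcal{C}.
\]

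For part~(b), monotone convergence gives $p_{\mathcal{C}_i}\to p$, $r_{\mathcal{C}_i}\to 0$, and eventually $p_{\mathcal{C}_i}\in\mathcal{P}$. A preliminary step shows $\delta_{\mathcal{C}_i}\to D^\star(p):=\min_{s'}D_{s'}(p)$ with $|\delta_{\mathcal{C}_i}-D^\star(p)|\le r_{\mathcal{C}_i}$: the lower bound follows by applying the unoriented inequality at $p$ to any $s'$ realizing $\delta_{\mathcal{C}_i}$, and the upper bound by taking the standard $L_\infty$-nearest site of $p_{\mathcal{C}_i}$, which automatically lies in its oriented zone since $p_{\mathcal{C}_i}\in\mathcal{P}\subseteq\bigcup_s\mathcal{Z}^+(s)$. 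For $\lambda(p)\subseteq\widetilde{\phi}(\mathcal{C}_i)$, take $s\in\lambda(p)$; then $p\in\mathcal{Z}^+(s)\cap\mathcal{C}_i$, and reusing the computation of part~(a) with $q=p$ yields $\mu_\infty(p_{\mathcal{C}_i},s)\le r_{\mathcal{C}_i}+D^\star(p)\le 2\,r_{\mathcal{C}_i}+\delta_{\mathcal{C}_i}$. For the reverse inclusion, if $s\in\widetilde{\phi}(\mathcal{C}_i)$ for infinitely many $i$, closedness of $\mathcal{Z}^+(s)$ forces $p\in\mathcal{Z}^+(s)$, and letting $i\to\infty$ in the ball inequality gives $D_s(p)\le D^\star(p)$, so $s\in\lambda(p)$.

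The main obstacle I anticipate is the careful handling of oriented zones: at the limit point $p$ some sites may have $p$ on the boundary of $\mathcal{Z}^+$, so $D_{s'}(p_{\mathcal{C}_i})$ can flip between finite and $\infty$ along the sequence, preventing a naive continuity argument. Controlling $\delta_{\mathcal{C}_i}$ from above thus requires choosing a witness site that is guaranteed to be in the oriented zone of $p_{\mathcal{C}_i}$---the standard $L_\infty$-nearest site to $p_{\mathcal{C}_i}$ serves this purpose precisely because $\mathcal{P}$ is covered by oriented zones, and this same observation is what rules out the analogous pitfall in part~(a) when $p_\mathcal{C}\notin\mathcal{P}$.
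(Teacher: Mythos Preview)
Your proposal is correct and follows essentially the same route as the paper: in (a) a triangle-inequality chain via a reference site $s^\star$ (chosen as the $\mathcal{D}$-nearest site of $p_{\mathcal{C}}$ when $p_{\mathcal{C}}\in\mathcal{P}$, and as a boundary site hitting $\mathcal{C}$ otherwise), and in (b) a limit argument using $p_{\mathcal{C}_i}\to p$, $r_{\mathcal{C}_i}\to 0$. Your treatment is in fact more careful than the paper's in two places---you explicitly check the zone condition $\mathcal{Z}^+(s)\cap\mathcal{C}\neq\emptyset$ in (a), and you give the quantitative control $|\delta_{\mathcal{C}_i}-D^\star(p)|\le r_{\mathcal{C}_i}$ together with the closedness-of-$\mathcal{Z}^+(s)$ step in (b)---but the skeleton is the same.
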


\begin{proof}
(a)  If $\phi(\mathcal{C})= \emptyset$, assertion follows trivially. Let $s\in \phi(\mathcal{C})$ and $p\in \mathcal{C} \cap \overline{V_D(s)}$. It holds that $D_s(p) \le D_{s'}(p) \Rightarrow \mu_\infty(p,s) \le \mu_\infty(p,s')$ for every $s'\in \mathcal{S}$. We distinguish two cases according to the position of $p_{_\mathcal{C}}$ relatively to $\mathcal{P}$. If $p_{_\mathcal{C}}\in \mathcal{P}$ and $p_{_\mathcal{C}} \in \overline{V_{\mathcal{D}}(s^*)}$, then:
\begin{align} \mu_\infty(p_{_\mathcal{C}}, s) & \le \mu_\infty(p_{_\mathcal{C}}, p) +\mu_\infty(p,s)  
  \le \mu_\infty(p_{_\mathcal{C}}, p) +\mu_\infty(p,s^*) \le\nonumber \\
 & \le 2\mu_\infty(p_{_\mathcal{C}}, p) +\mu_\infty(p_{_\mathcal{C}},s^*)  
  \le 2r_{_\mathcal{C}} +\mu_\infty(p_{_\mathcal{C}}, s^*). \nonumber
\end{align} 
\noindent Otherwise, if $p_{_\mathcal{C}}\not \in \mathcal{P}$, since $\mathcal{C}\cap \mathcal{P}\neq \emptyset$, there is a site $s'$ intersecting $\mathcal{C}$ such that $\mu_\infty(p,s') \le r_{_\mathcal{C}}$. Therefore,
$
\mu_\infty(p_{_\mathcal{C}}, s) 
\le \mu_\infty(p_{_\mathcal{C}}, p) +\mu_\infty(p,s) 
  \le \mu_\infty(p_{_\mathcal{C}}, p) +\mu_\infty(p,s') 
  \le  2r_{_\mathcal{C}}. 
$

(b) There exists $i_0\in \mathbb{N}$ such that for $i \ge i_0$ $\mathcal{C}_i\cap \mathcal{P} \neq \emptyset$. Therefore, for $i\gg i_0$, since $p_{_{\mathcal{C}_i}} \rightarrow p$ and $r_{_{\mathcal{C}_i}} \rightarrow 0$, for every $s \in \mathcal{C}_i$, (a) implies that $s\in \lambda(p)=\phi(p)$. Since $\phi(p) \subseteq \widetilde{\phi}(\mathcal{C}_i)$, result follows.

\end{proof}

The previous lemma is crucial in proving correctness of the algorithm. It is also worth mentioning that a preliminary version of $\widetilde{\phi}$ first appeared in \cite{Milenkovic93robustconstruction}.

One can easily verify $\widetilde{\phi}(\mathcal{C}) \subseteq \widetilde{\phi}(\mathcal{C}')$, therefore the complexity of computing $\widetilde{\phi}(\mathcal{C})$ is linear in the size of $\widetilde{\phi}(\mathcal{C}')$.  
The algorithm proceeds as follows: For each subdivision cell we maintain the label sets of its corner points and of its central point, and $\widetilde{\phi}$. The subdivision of a cell stops whenever at least one of the \textit{termination criteria} below holds (checked in turn). Upon subdivision, we propagate $\widetilde{\phi}$ and the label sets of the parent cell to its children. For every child we compute the remaining label sets and refine its active set. Let $M$ be the maximum degree of a Voronoi vertex $(M\le 8)$.

\subsubsection{Termination criteria:} 
(T1) $\mathcal{C} \subseteq V_{\mathcal{D}}(s)$ for some $s\in \mathcal{S}$;
(T2) $int(\mathcal{C} ) \cap \mathcal{P} = \emptyset$;
(T3) $|\widetilde{\phi}(\mathcal{C} ) | \le 3$;  
(T4) $|\widetilde{\phi}(\mathcal{C} ) | \le M$ and the sites in $\widetilde{\phi}(\mathcal{C} ) $ define a unique Voronoi vertex $v \in \mathcal{C}$.

When (T1) holds, $\mathcal{C}$ is contained in a Voronoi region so no part of the diagram is in it. (T2) stops the subdivision when the open cell is completely outside the polygon. If (T3) holds, we determine in $O(1)$ time the diagram's topology in $\mathcal{C}$ since there are $\le 3$ Voronoi regions intersected. (T4) stops cell subdivision if it contains a single degenerate Voronoi vertex.
The process is summarized in Alg.~\ref{algo1}.

\renewcommand{\thealgorithm}{1}
\begin{algorithm}[h]
	\caption{Subdivision2D($\mathcal{P}$)}
	\label{algo1}
	\begin{algorithmic}[1]
\State root $\gets$ bounding box of $\mathcal{P}$
\State $Q \gets$ root
\While {$Q \neq \emptyset$}
	\State $\mathcal{C} \gets$ pop$(Q)$
        \State Compute $\widetilde{\phi}(\mathcal{C})$ and the label sets of the vertices and the central point.
	\If {(T1) $\lor$ (T2) $\lor$ (T3) $\lor$ (T4)}
		\State \Return
	\Else 
\State Subdivide $\mathcal{C}$ into $\mathcal{C}_1, \mathcal{C}_2, \mathcal{C}_3, \mathcal{C}_4$
\State $Q \gets$ $Q \cup \{ \mathcal{C}_1, \mathcal{C}_2, \mathcal{C}_3, \mathcal{C}_4\}$
	\EndIf
\EndWhile
\end{algorithmic}
\end{algorithm}

\begin{restatable}{theorem}{thmsub}
\label{halt}
Algorithm \ref{algo1} halts.
\end{restatable}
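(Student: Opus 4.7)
I plan to argue by contradiction. Assume Algorithm~\ref{algo1} does not halt. Then the quadtree it builds contains an infinite descending chain of closed cells $\mathcal{C}_0 \supset \mathcal{C}_1 \supset \mathcal{C}_2 \supset \cdots$ at none of which any of (T1)--(T4) fires. Since each subdivision halves the side length, $r_{\mathcal{C}_i} \to 0$, and by nested compactness the cells intersect in a single point $p$.

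First I would dispose of the easy case $p \notin \mathcal{P}$: since $\mathcal{P}$ is closed, an open $L_\infty$-ball around $p$ is disjoint from $\mathcal{P}$, so for $i$ large enough $\mathcal{C}_i$ lies inside this ball, giving $\mathrm{int}(\mathcal{C}_i) \cap \mathcal{P} = \emptyset$ and firing (T2), a contradiction.

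The main case is $p \in \mathcal{P}$. Then Lemma~\ref{lem:yap}(b) gives $\widetilde{\phi}(\mathcal{C}_i) = \phi(p) = \lambda(p)$ for all sufficiently large $i$; set $m := |\lambda(p)|$. If $m \leq 3$, then $|\widetilde{\phi}(\mathcal{C}_i)| \leq 3$ and (T3) fires. If $m \geq 4$, the point $p$ is equidistant to $m$ sites and is therefore a (degenerate) Voronoi vertex, so Lemma~\ref{Ldegree} bounds $m \leq 2^d d = M$, hence $|\widetilde{\phi}(\mathcal{C}_i)| \leq M$. Since $p \in \mathcal{C}_i$ for every $i$, $p$ is a candidate for the Voronoi vertex $v$ required by (T4).

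The delicate step will be the uniqueness clause of (T4): establishing that $p$ is the \emph{only} Voronoi vertex in $\mathcal{C}_i$ defined by the sites of $\widetilde{\phi}(\mathcal{C}_i)$. By the perturbation convention adopted in Section~\ref{Sbasics}, bisectors under $\mathcal{D}$ lie on affine lines, so the common equidistance locus of any three active sites is a finite set; any second common equidistant point would sit at a fixed positive distance from $p$ and would therefore fall outside $\mathcal{C}_i$ as soon as $r_{\mathcal{C}_i}$ shrinks below that distance. Hence for $i$ large enough (T4) fires, yielding the required contradiction and completing the proof.
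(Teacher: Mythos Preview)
Your argument is correct and follows essentially the same route as the paper: assume an infinite descending chain, pass to the limit point $p$, invoke Lemma~\ref{lem:yap}(b) to identify $\widetilde{\phi}(\mathcal{C}_i)$ with $\lambda(p)$, and use the degree bound $|\lambda(p)|\le M$ from Lemma~\ref{Ldegree} to fire (T3) or (T4). The paper's own proof is terser---it folds your case $p\notin\mathcal{P}$ and the case $m\le 3$ into one sentence and does not spell out the uniqueness clause of (T4)---so your version is, if anything, more complete; in particular your finiteness argument for the common equidistance locus is the right way to close the uniqueness gap the paper leaves implicit.
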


\begin{proof}
Consider an infinite sequence of boxes $\mathcal{C}_{1}\supseteq \mathcal{C}_{2} \supseteq \dots$ such that none of the termination criteria holds. Since (T1) and (T2) do not hold for any $\mathcal{C}_i$ with $i \ge 1$, the sequence converges to a point $p \in \mathcal{V}_\mathcal{D}(\mathcal{P})$. From Lemma~\ref{lem:yap}(b), there exists $i_0\in \mathbb{N}$ such that $\widetilde{\phi}(\mathcal{C}_{i_0}) = \phi(p) = \lambda(p)$. Since $| \lambda(p)|\le 8$, (T4) will hold. 
\end{proof}

\begin{restatable}{lemma}{inpred}\label{term}
For a subdivision cell $\mathcal{C}$, let $v_1,\dots, v_4 $ its corner vertices. For $s\in \mathcal{S}$, $\mathcal{C}\subseteq V_{\mathcal{D}}(s)$ if and only if $v_1,\dots, v_4 \in V_{\mathcal{D}}(s)$.
\end{restatable}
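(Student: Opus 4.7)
$(\Rightarrow)$ is immediate since $v_i\in\mathcal{C}$. For $(\Leftarrow)$, my approach rests on two structural facts. First, the oriented zone $\mathcal{Z}^+(s)$ is convex: for $s\subset\{x_k=c\}$ with projection $[a_1,a_2]$ onto the complementary axis $x_j$ and, say, $\mathcal{H}(s)=\{x_k\ge c\}$, it equals $\{p:p_k-c\ge\max(0,p_j-a_2,a_1-p_j)\}$, the epigraph of a piecewise-linear convex function. Second, on $\mathcal{Z}^+(s)$ the distance $D_s(p)=p_k-c$ is affine. Each corner $v_i\in V_{\mathcal{D}}(s)\subseteq\mathcal{Z}^+(s)$, so convexity of the zone forces $\mathcal{C}\subseteq\mathcal{Z}^+(s)$, and hence $D_s|_{\mathcal{C}}$ is affine.

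It remains to prove $D_s(p)<D_{s'}(p)$ for every $p\in\mathcal{C}$ and every $s'\neq s$. If $\mathcal{Z}^+(s')\cap\mathcal{C}=\emptyset$ this holds trivially, because $D_{s'}\equiv\infty$ there. Otherwise $K:=\mathcal{Z}^+(s')\cap\mathcal{C}$ is a compact convex polygon on which both $D_s$ and $D_{s'}$ are affine; hence the affine function $D_s-D_{s'}$ attains its maximum at an extreme point of $K$. These extreme points are either (a) corners of $\mathcal{C}$ lying in $\mathcal{Z}^+(s')$, where the hypothesis gives $D_s<D_{s'}$ directly; or (b) points on $\partial\mathcal{Z}^+(s')\cap\mathcal{C}$. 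For (b), the closest point of $s'$ to such a $v$ is an endpoint $q$ of $s'$, shared with an adjacent site $s''$, and a short local calculation shows that the $45^\circ$ ray bounding $\partial\mathcal{Z}^+(s')$ at $q$ lies in $\mathcal{Z}^+(s'')$ with $D_{s''}=D_{s'}$ along it. Thus if $D_s(v)\ge D_{s'}(v)$ then also $D_s(v)\ge D_{s''}(v)$, and we may re-run the extreme-point argument with $s''$ in place of $s'$.

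Moreover, $q$ cannot lie in $\mathrm{int}(\mathcal{C})$: the two boundary edges meeting at $q$ would otherwise partition $\mathcal{C}$ so that some corner falls outside $\mathcal{P}$, contradicting $V_{\mathcal{D}}(s)\subseteq\mathcal{P}$. Walking the finite chain of adjacent sites along $\partial\mathcal{P}$, the reduction must terminate at an extreme point that is a corner of $\mathcal{C}$, where case (a) yields the required strict inequality. The main obstacle is precisely step (b): verifying via the local $L_\infty$ geometry at a polyhedral vertex that the $45^\circ$ bisector between $s'$ and $s''$ bounds both $\partial\mathcal{Z}^+(s')$ and $\mathcal{Z}^+(s'')$, and that the adjacency reduction terminates in finitely many steps.
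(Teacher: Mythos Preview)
Your route is genuinely different from the paper's. Instead of comparing $D_s$ to each $D_{s'}$ via an extreme-point argument, the paper shows that for any $p\in\mathcal{C}$ the ball $B_\infty(p,D_s(p))$ is contained in $\bigcup_i B_\infty(v_i,D_s(v_i))$, and each corner ball is empty of sites because $v_i\in V_{\mathcal{D}}(s)$. This single covering step gives $D_s(p)\le \mu_\infty(p,s')\le D_{s'}(p)$ for \emph{all} $s'$ at once; what remains is to exclude equality (i.e.\ $p\in\mathcal{V}_{\mathcal{D}}(\mathcal{P})$), which is dispatched by a short topological argument using that Voronoi regions are simply connected, Voronoi edges are straight, and $\mathcal{C}$ is a square. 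The covering lemma is the idea you are missing; it replaces your entire case~(b) machinery.

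Your case~(b) has two concrete gaps. First, $\partial\mathcal{Z}^+(s')$ is not only the two $45^\circ$ rays: it also contains $s'$ itself. You never exclude $s'\cap\mathcal{C}\neq\emptyset$; your argument that the endpoint $q\notin\mathrm{int}(\mathcal{C})$ does not stop a site from crossing $\mathcal{C}$ with both endpoints outside, and even that endpoint argument is incomplete at a reflex vertex of $\mathcal{P}$, where the single exterior quadrant need not contain any cell corner. If an extreme point $v$ of $K$ lies on $s'$, then $D_{s'}(v)=0\le D_s(v)$ and the desired inequality fails there. Second, the termination of the adjacency reduction is not established. Replacing $s'$ by the neighbour $s''$ changes $K$ to a new polygon $K''$, and the maximizer of $D_s-D_{s''}$ on $K''$ may again be of type~(b), adjacent to yet another site. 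Nothing you have written prevents cycling around a hole of $\mathcal{P}$; you would need some monotone quantity along the walk, and none is supplied. The paper's covering argument sidesteps both issues entirely.
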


\begin{proof}
Let $v_1,\dots, v_4 \in V_{\mathcal{D}}(s)$ and $p \in \mathcal{C}$. Then, $p \in \mathcal{Z}^+(s)$, since $\mathcal{Z}^+(s)$ is convex in 2D and $v_1,\dots, v_4 \in \mathcal{Z}^+(s)$. For $i=1,\dots, 4$ the open ball $B_i:=B_\infty(v_i, \mu_\infty(v_i, s))$ is empty of sites. Since $B_\infty(p, \mu_\infty(p,\text{aff}(s))) \subset \cup_{i\in[4]} B_i $ it holds that $\mu_\infty(p, \mathcal{P}) \ge \mu_\infty(p, \text{aff}(s)) = \mu_\infty(p, s)$.
On the other hand,  $\mu_\infty(p, \mathcal{P}) \le  \mu_\infty(p, s)$.
So, if $p \not \in V_{\mathcal{D}}(s)$ there is a site $s'$ s.t. 
$D_{s'}(p) = D_{s}(p)$ and $p\in \mathcal{V}_\mathcal{D}(\mathcal{P})$. Therefore, since Voronoi regions are simply connected and Voronoi edges are straight lines, $p$ must be on the boundary of $\mathcal{C}$. 
The two possible configurations are shown in Figure~\ref{fig:in} and are contradictory; for the first, use an argument similar to that of Lemma~\ref{Ldegree}(b). For the second, notice that this cannot hold since the cell is square. 
We conclude that $\mathcal{C}\subseteq V_{\mathcal{D}}(s)$.
The other direction is trivial. 
\end{proof}

\begin{figure}[h]
\centering
\begin{subfigure}{.45\textwidth}
  \centering
  \includegraphics[scale=0.53]{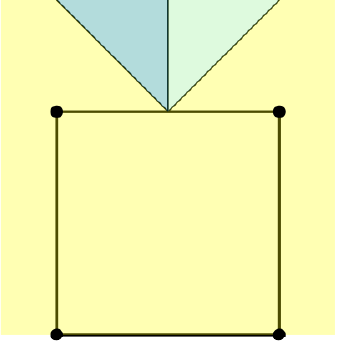}
  \caption{ \label{fig:in}}
\end{subfigure}%
\hspace{0.2cm}
\begin{subfigure}{.45\textwidth}
  \centering
  \includegraphics[scale=0.6]{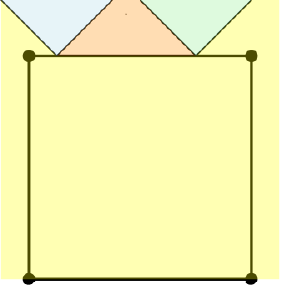}
 \caption{ \label{fig:inb}}
\end{subfigure}
\caption{The two cases in proof of Lemma \ref{term}. Different colors correspond to different Voronoi regions. }
\label{fig:in}
\end{figure}

\begin{lemma}
\label{lemma:out}
For a subdivision cell $\mathcal{C}$ it holds that $int(\mathcal{C})\cap \mathcal{P} =\emptyset$ if and only if $\lambda(p_{_\mathcal{C}}) = \emptyset$ and for every $s\in \widetilde{\phi}(C)$ it holds that $s\cap int(C) =\emptyset$.
\end{lemma}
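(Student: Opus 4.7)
The plan is to prove the two implications separately, with the reverse direction doing most of the work via contradiction.

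For the forward direction, the argument is short. Since $p_{_\mathcal{C}}$ is the center of $\mathcal{C}$, it lies in $int(\mathcal{C})$, so $int(\mathcal{C})\cap\mathcal{P}=\emptyset$ forces $p_{_\mathcal{C}}\notin\mathcal{P}$, which by the stated convention gives $\lambda(p_{_\mathcal{C}})=\emptyset$. Moreover, every site $s\in\mathcal{S}$ is contained in $\partial\mathcal{P}\subseteq\mathcal{P}$, so $s\cap int(\mathcal{C})\subseteq\mathcal{P}\cap int(\mathcal{C})=\emptyset$ for every $s$, in particular for those in $\widetilde{\phi}(\mathcal{C})$.

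For the reverse direction, I would argue by contradiction: assume $\lambda(p_{_\mathcal{C}})=\emptyset$ and no site of $\widetilde{\phi}(\mathcal{C})$ meets $int(\mathcal{C})$, yet some $q\in int(\mathcal{C})\cap\mathcal{P}$ exists. The hypothesis $\lambda(p_{_\mathcal{C}})=\emptyset$ means $p_{_\mathcal{C}}\in\mathcal{P}^c$, so the segment $[p_{_\mathcal{C}},q]$ joins a point outside $\mathcal{P}$ to one inside and must cross $\partial\mathcal{P}$ at some point $r$, which then lies on a site $s$. Since both endpoints lie in the open convex set $int(\mathcal{C})$, the whole segment, and hence $r$, lies in $int(\mathcal{C})$; in particular $s\cap int(\mathcal{C})\neq\emptyset$.

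It remains to show $s\in\widetilde{\phi}(\mathcal{C})$, which will contradict the assumption. From the definitions, $r\in s$ satisfies $\mu_\infty(r,s)=0=\mu_\infty(r,\mathrm{aff}(s))$, so $r\in\mathcal{Z}(s)$; and $s\subseteq\partial\mathcal{H}(s)\subseteq\mathcal{H}(s)$ by construction of $\mathcal{H}(s)$, hence $r\in\mathcal{Z}^+(s)\cap\mathcal{C}$, giving the first part of the active-set condition. For the distance condition, since $r\in\mathcal{C}$ we have $\mu_\infty(p_{_\mathcal{C}},s)\le\mu_\infty(p_{_\mathcal{C}},r)\le r_{_\mathcal{C}}$, and because $p_{_\mathcal{C}}\notin\mathcal{P}$ the slack term is $\delta_{_\mathcal{C}}=0$, so $\mu_\infty(p_{_\mathcal{C}},s)\le r_{_\mathcal{C}}\le 2r_{_\mathcal{C}}+\delta_{_\mathcal{C}}$. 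Thus $s\in\widetilde{\phi}(\mathcal{C})$, contradicting the hypothesis that no member of $\widetilde{\phi}(\mathcal{C})$ intersects $int(\mathcal{C})$.

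The only subtle point, which I expect to be the main (minor) obstacle, is verifying carefully that $s\subseteq\mathcal{Z}^+(s)$ straight from the definitions of $\mathcal{Z}(s)$ and $\mathcal{H}(s)$, and noting that the convexity of $int(\mathcal{C})$ (which holds because $\mathcal{C}$ is an axis-aligned box) is what places the crossing point $r$ in the open cell rather than merely in its closure; the rest is routine metric manipulation.
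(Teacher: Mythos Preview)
Your proof is correct and follows essentially the same approach as the paper: both directions match, with the reverse direction argued by contradiction via a boundary crossing inside $int(\mathcal{C})$ that yields a site in $\widetilde{\phi}(\mathcal{C})$. The paper's version is terser---it simply asserts that such a site ``belongs to $\widetilde{\phi}(\mathcal{C})$'' and calls the forward direction trivial---whereas you spell out both conditions in the definition of $\widetilde{\phi}(\mathcal{C})$ explicitly; this extra care is fine and does not constitute a different route.
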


\begin{proof}
Suppose that $\lambda(p_{_\mathcal{C}}) = \emptyset$, i.e. $p_{_\mathcal{C}} \not \in \mathcal{P}$, and that $s\cap int(C) =\emptyset$, for every $s\in \widetilde{\phi}(C)$. If is $int(\mathcal{C})\cap \mathcal{P}\neq \emptyset$, since the center of the cell is not in $\mathcal{P}$, a segment (site) on the boundary of $\mathcal{P}$ must intersect $int(\mathcal{C})$. This site belongs to $\widetilde{\phi}(C)$, leading to a contradiction. Proof of the other direction is trivial. 
\end{proof}

Hence one decides (T1) by checking the vertices' labels.
(T2) is valid for $\mathcal{C}$ iff $\lambda(p_{_\mathcal{C}}) = \emptyset$ and $\forall s\in \widetilde{\phi}(\mathcal{C})$, $s\cap int(\mathcal{C}) =\emptyset$. Fot (T4), the presence of a Voronoi vertex in $\mathcal{C}$ is verified through constructor \texttt{VoronoiVertexTest}: given $\mathcal{C}$ with $|\widetilde{\phi}(\mathcal{C})| \ge 3$, the affine bisectors of sites in $\widetilde{\phi}(\mathcal{C})$ are intersected. If the intersection point is in $\mathcal{C}$ and in $\mathcal{Z}^+(s)$ for every $s\in \widetilde{\phi}(\mathcal{C})$ then it is a Voronoi vertex. We do not need to check whether it is in $\mathcal{P}$ or not; since (T1) fails for $\mathcal{C}$, if $v \not \in \mathcal{P}$, there must be $s$ intersecting $\mathcal{C}$ such that $v\not \in \mathcal{Z}^+(s)$: contradiction. 

\subsection{Reconstruction Phase}\label{3.3}

We take the quadtree of the subdivision phase and output a planar straight-line graph (PSLG) $G=(V,E)$ representing the Voronoi diagram of $\mathcal{P}$. $G$ is a (vertex) labeled graph and its nodes are of two types: \textit{bisector nodes} and \textit{Voronoi vertex nodes}. Bisector nodes span Voronoi edges and are labeled by the two sites to which they are equidistant. Voronoi vertex nodes correspond to Voronoi vertices and so are labeled by at least 3 sites. 

The reconstruction can be briefly described as follows:
we visit the leaves of the quadtree and, whenever the Voronoi diagram intersects the cell, bisector or vertex nodes are introduced. By connecting them accordingly with straight-line edges, we obtain the exact Voronoi diagram and not an approximation.  We process leaves with $|\widetilde{\phi}(\cdot)| \ge 2$ that do not satisfy (T1) nor (T2). 

\subsubsection{Cell with two active sites.}
 When $\widetilde{\phi}(\mathcal{C}) = \{s_1,s_2\}$, $\mathcal{C}$ intersects $V_{\mathcal{D}}(s_1)$ or $V_{\mathcal{D}}(s_2)$ or both; at this point $\mathcal{C}$ cannot be wholly contained in $V_D(s_1)$ or $V_D(s_2)$, since (T1) is not satisfied.  
The intersection of $\text{bis}_\mathcal{D}(s_1,s_2)$ with the cell, when non empty, is part of the Voronoi diagram: for each $p \in \text{bis}_\mathcal{D}(s_1,s_2) \cap \mathcal{C}$ it holds that $D_{s_1}(p) = D_{s_2}(p)$ and $\lambda(p) \subseteq \widetilde{\phi}(\mathcal{C}) = \{s_1,s_2\}$. Therefore $p \in \overline{V_\mathcal{D}(s_1)}\cap\overline{V_\mathcal{D}(s_2)}$. 

\begin{remark}\label{Ltwosites}
If there is no Voronoi vertex in $\mathcal{C}$ and $p_1, p_2 \in \text{bis}_\mathcal{D}(s_1,s_2) \cap \mathcal{C}$ for $s_1,s_2 \in \widetilde{\phi}(\mathcal{C})$, then $p_1p_2\subset \text{bis}_\mathcal{D}(s_1,s_2) $. 
\end{remark}

Since $\text{bis}_\mathcal{D}(s_1,s_2) \subset \text{affbis}(s_1,s_2)$ we intersect the affine bisector with the boundary of the cell. An intersection point $p \in bis_\infty(\text{aff}(s_1), \text{aff}(s_2))$ is  in $bis_{\mathcal{D}}(s_1,s_2)$ iff $p \in \mathcal{Z}^+(s_1) \cap \mathcal{Z}^+(s_2)$ (Lemma~\ref{Ltwosites}). 
If intersection points are both in $ \mathcal{Z}^+(s_1) \cap \mathcal{Z}^+(s_2)$ (Figure \ref{cornervertex2}), we introduce a bisector node in the middle of the line segment joining them, labeled by $\{s_1, s_2\}$. When only one intersection point is in $ \mathcal{Z}^+(s_1) \cap \mathcal{Z}^+(s_2)$, then $s_1, s_2$ must intersect in $\mathcal{C}$ (Figure \ref{cornervertex1}). We insert a bisector node at their intersection point labeled by $\{s_1, s_2\}$.

\begin{figure}
\begin{subfigure}{.5\textwidth}
  \centering
  \includegraphics{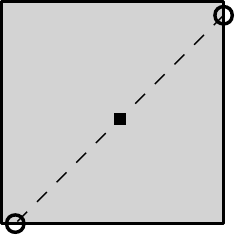}
  \caption{ }
  \label{cornervertex2}
\end{subfigure}%
\begin{subfigure}{.5\textwidth}
  \centering
  \includegraphics{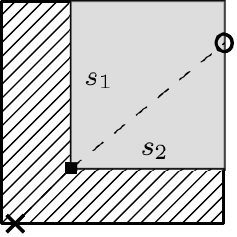}
 \caption{ }
  \label{cornervertex1}
\end{subfigure}
\caption{The intersection of the affine bisector and the cell. The circled points are in $ \mathcal{Z}^+(s_1) \cap \mathcal{Z}^+(s_2)$, whereas the crossed point is not. Square nodes are the bisector nodes inserted to the output graph. }
\label{cornervertex}
\end{figure}

\subsubsection{Cell with 3 active sites or more.}
When $|\widetilde{\phi}(\mathcal{C})|= 3$ and the \texttt{VoronoiVertexTest} finds a vertex in $\mathcal{C}$ or when $|\widetilde{\phi}(\mathcal{C})|\ge 4$ (a vertex has already been found), 
we introduce a Voronoi vertex node at the vertex, labeled by corresponding sites. In the presence of corners of $\mathcal{P}$ in $\mathcal{C}$, bisector nodes are introduced and connected to the vertex node (Figure~\ref{vertexnode}).

If no Voronoi vertex is in $\mathcal{C}$, we repeat the procedure described in previous paragraph for each pair of sites. Even if a bisector node is found, it is not inserted at the graph if it is closer to the third site. 

\vspace{0.2cm}
\begin{figure}
\centering
\begin{subfigure}{.3\textwidth}
  \centering
  \includegraphics{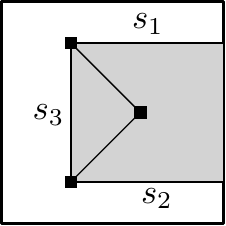}
  \caption{ }
\end{subfigure}%
\begin{subfigure}{.3\textwidth}
  \centering
  \includegraphics{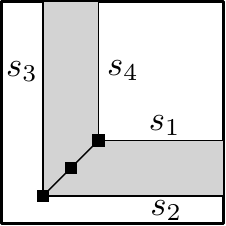}
 \caption{ }
\end{subfigure}
\caption{(a) A Voronoi vertex node labeled with $\{s_1, s_2, s_3\}$ connected with two bisector nodes labeled with $\{s_1,s_3\}$ and $\{s_2,s_3\}$. (b) A (degenerate) Voronoi vertex node labeled with $\{s_1, s_2, s_3, s_4\}$ connected with two bisector nodes. }
\label{vertexnode}
\end{figure}
\vspace{-0.5cm}
\subsubsection{Connecting the graph nodes.}
Once the graph nodes are fixed they have to be connected appropriately. The only graph edges added so far are those that are completely contained in a subdivision cell. The remaining graph edges must cross two subdivision cells.
We apply ``dual marching cubes" \cite{Scot-2005} to enumerate pairs of neighboring cells in time linear in the size of the quadtree: 
cells are neighboring if they share a facet. 

The traversal method involves two recursive functions: the \texttt{faceProc} and the \texttt{edgeProc} (see Figure \ref{dmc}). At first we call the \texttt{faceProc} for the bounding box of the input polygon. This function, for every internal node of the quadtree, recursively calls itself for each of its children and the \texttt{edgeProc} for every pair of neighboring children. When the \texttt{faceProc} reaches a leaf it terminates. When the \texttt{edgeProc} reaches two adjacent leaves, then the corresponding cells share a facet. Let $v_1, v_2$ be graph nodes in neighboring cells. We connect them if and only if:
\begin{itemize}
    \item  $v_1, v_2$ are bisector nodes and $\lambda(v_1)=\lambda(v_2).$
    \item $v_1$ is a bisector node, $v_2$ is a Voronoi vertex node and $\lambda(v_1)\subset \lambda(v_2)$.
    \item $v_1$, $v_2$ are Voronoi vertex nodes, $\lambda(v_1)\cap \lambda(v_2) = \{s, s'\} $ and $v_1v_2 \subset \mathcal{P}$.
\end{itemize}

\noindent See Figure~\ref{fig:recex} for an example where $v_1,v_2$ are Voronoi vertex nodes with $\lambda(v_1)\cap \lambda(v_2) = \{s, s'\} $ and $v_1v_2 \not \subset \mathcal{P}$.

\begin{figure}
    \centering
    \includegraphics[scale=0.5]{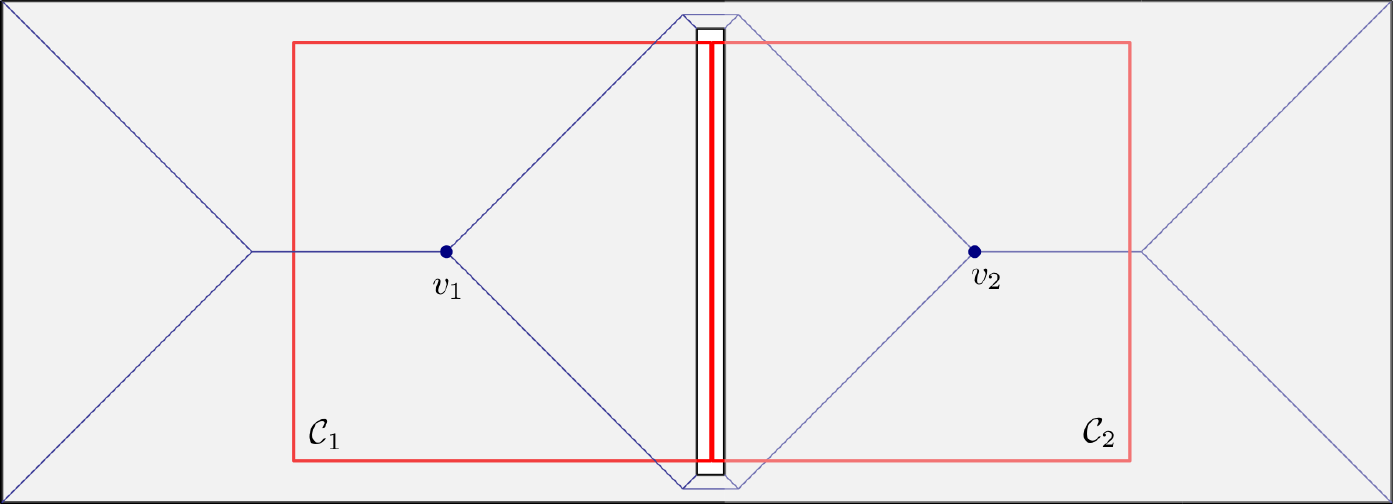}
    \caption{$\mathcal{C}_1, \mathcal{C}_2$ are two neighboring subdivision cells and the Voronoi vertices $v_1, v_2$ have two common sites as labels but are not connected with a Voronoi edge.}
    \label{fig:recex}
\end{figure}


\begin{restatable}[Correctness]{theorem}{reconstruction}
The output graph is isomorphic to $\mathcal{V}_\mathcal{D}(\mathcal{P}).$
\end{restatable}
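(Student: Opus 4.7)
My plan is to establish a bijection between the nodes and edges of $G$ and the 0- and 1-dimensional features of $\mathcal{V}_\mathcal{D}(\mathcal{P})$, viewed as a planar graph whose vertices are the Voronoi vertices (plus the endpoints where Voronoi edges meet $\partial\mathcal{P}$) and whose edges are maximal line segments of constant label pair. The proof then reduces to three correspondences: Voronoi-vertex nodes correspond to true Voronoi vertices, bisector nodes lie on and subdivide true Voronoi edges, and the connection rules add exactly the incidences of $\mathcal{V}_\mathcal{D}(\mathcal{P})$.

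For the vertex correspondence, I would first show that a genuine Voronoi vertex $v$ lies in a unique leaf cell $\mathcal{C}$ for which termination criterion (T3) or (T4) holds, with $\lambda(v)\subseteq\widetilde{\phi}(\mathcal{C})$. The existence of such a leaf follows from Theorem~\ref{halt} combined with Lemma~\ref{lem:yap}(b): any infinite nested sequence converging to $v$ eventually satisfies $\widetilde{\phi}(\mathcal{C}_i)=\lambda(v)$ and has size $\le M\le 8$, forcing termination. The \texttt{VoronoiVertexTest} then places a Voronoi vertex node exactly at $v$ with label $\lambda(v)$. Conversely, any node inserted by \texttt{VoronoiVertexTest} is, by construction, equidistant to the sites of $\widetilde{\phi}(\mathcal{C})$ and lies in the intersection $\bigcap_{s\in \widetilde{\phi}(\mathcal{C})}\mathcal{Z}^+(s)\cap\mathcal{C}$; using Lemma~\ref{lem:yap}(a) and the fact (justified in the text via failure of (T1)) that it lies in $\mathcal{P}$, it is a true Voronoi vertex.

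For the edge correspondence, let $e$ be a maximal open segment of $\mathcal{V}_\mathcal{D}(\mathcal{P})$ with label $\{s_1,s_2\}$. Because bisectors in $\mathcal{D}$ lie on a unique line $\text{affbis}(s_1,s_2)$, intersecting this line with each leaf cell $\mathcal{C}$ encountered along $e$ produces either a subsegment or a single point; Remark~\ref{Ltwosites} guarantees the reconstruction rules for the $|\widetilde\phi|=2$ case correctly detect the contributing endpoints, while in $|\widetilde\phi|\ge 3$ cells the third-site filter discards only points where a closer site dominates. Thus every leaf cell through which $e$ passes inserts bisector nodes on $e$ labelled $\{s_1,s_2\}$, or a Voronoi vertex node containing $\{s_1,s_2\}\subseteq\lambda(v)$ at an endpoint. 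The \texttt{faceProc}/\texttt{edgeProc} traversal enumerates all neighboring leaf pairs, and the three connection rules precisely capture: continuation of $e$ between interior cells (equal label sets), termination of $e$ at a vertex (label containment), and collinear Voronoi vertices whose connecting segment realizes a bisector portion (the $v_1v_2\subset\mathcal{P}$ check).

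The principal obstacle is verifying that no spurious edges are inserted and no genuine ones omitted in the presence of degeneracies. Two issues arise. First, degenerate Voronoi vertices (degree up to $8$) are handled by (T4); I must check that all incident Voronoi edges leave the cell through its boundary and hence are picked up by a neighboring leaf's bisector node, so that the vertex–bisector connection rule reconstructs every incidence. Second, and more delicate, is the configuration in Figure~\ref{fig:recex}: two Voronoi vertex nodes in adjacent leaves may share two labels yet not be joined by an edge of $\mathcal{V}_\mathcal{D}(\mathcal{P})$; the condition $v_1v_2\subset\mathcal{P}$ is precisely what rules this out, which I would justify by observing that when $v_1v_2$ lies on $\text{affbis}(s_1,s_2)\cap\mathcal{Z}^+(s_1)\cap\mathcal{Z}^+(s_2)$ and stays inside $\mathcal{P}$, simple connectivity of $\overline{V_\mathcal{D}(s_1)}$ (which follows from Lemma~\ref{star}) forces the segment to lie on the common boundary of the two Voronoi regions. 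Together these verifications yield that the map sending Voronoi vertex nodes to Voronoi vertices and each maximal $\{s_1,s_2\}$-labelled path of $G$ to the corresponding Voronoi edge is a graph isomorphism.
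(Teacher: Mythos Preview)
Your proposal is correct and follows the same core strategy as the paper: a case analysis over the three connection rules (bisector--bisector, bisector--vertex, vertex--vertex), using Remark~\ref{Ltwosites} for the first case and the $v_1v_2\subset\mathcal{P}$ test for the third. The paper's own proof is considerably terser: it argues only the soundness direction (every inserted edge lies on $\mathcal{V}_\mathcal{D}(\mathcal{P})$), handling the vertex--vertex case by observing that if $v_1v_2$ were not the Voronoi edge equidistant to the two common sites, both endpoints would have to border a third Voronoi region, contradicting their label sets. Your version is more complete in that you explicitly set up the bijection in both directions --- establishing that every genuine Voronoi vertex is detected by some leaf via Lemma~\ref{lem:yap}(b) and (T3)/(T4), and that every Voronoi edge is covered by the chain of bisector nodes through the leaves it traverses --- whereas the paper leaves completeness implicit in the subdivision guarantees. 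Your justification of the vertex--vertex rule via simple connectivity of $\overline{V_\mathcal{D}(s_1)}$ (Lemma~\ref{star}) is a slightly different packaging of the same obstruction the paper uses; both are at comparable levels of rigor.
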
 

\begin{proof}
We need to prove that the nodes in the graph are connected correctly. 
Let neighboring cells $\mathcal{C}_1, \mathcal{C}_2$ and $v_1, v_2$ graph nodes in each of them respectively. If $v_1, v_2$ are bisector nodes and $\lambda(v_1) = \lambda(v_2)$, then the line segment $v_1v_2$ is in $bis_{_\mathcal{D}}(s_1,s_2)$, for $s_1, s_2 \in \lambda(v_1) $, and on the Voronoi diagram (Rem.~\ref{Ltwosites}). 
If $v_1$ is a bisector node and $v_2$ is a Voronoi vertex node s.t. $\lambda(v_1) \subseteq \lambda(v_2)$, then  $v_1v_2 \subset \text{bis}_\infty(s_1,s_2)$. If the segment $v_1v_2$ is not on the Voronoi diagram then, there is a Voronoi vertex node different than $v_2$ in $\mathcal{C}_1$ or  $\mathcal{C}_2$; contradiction. 
At last, let $v_1$ and $v_2$ be Voronoi vertex nodes such that their labels have two sites in common, say $s, s'$, and the edge $v_1v_2\subset \mathcal{P}$. Vertices $v_1,v_2$ are both on the boundary of $\overline{V_\mathcal{D}(s)}\cap\overline{V_\mathcal{D}(s')}$. Since $v_1v_2\subset\mathcal{P}$, if it does not coincide with the Voronoi edge equidistant to $s, s'$, then both $v_1,v_2$ must also be on the boundary of a Voronoi region other than $\overline{V_\mathcal{D}(s)}$ and $\overline{V_\mathcal{D}(s')}$. This leads to a contradiction.  
\end{proof}

\begin{figure}[t]
\centering
\begin{subfigure}{\textwidth}

  \centering
  \includegraphics[scale=0.8]{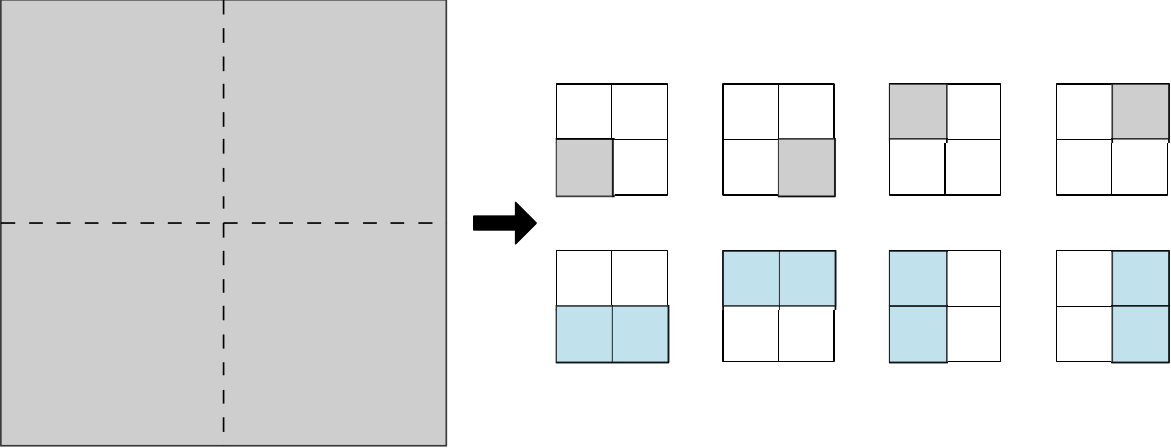}
  \caption{ }
  \label{dmc1}
\end{subfigure}%
 \vskip\baselineskip
\begin{subfigure}{\textwidth}
  \centering
  \includegraphics[scale=0.8]{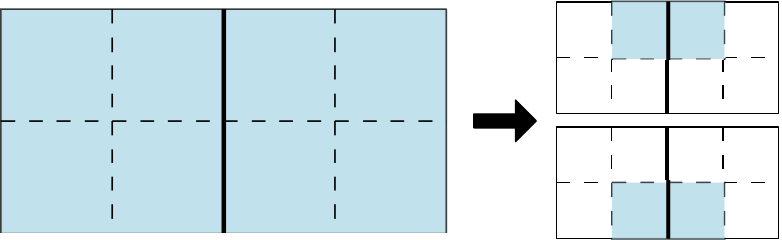}
 \caption{ }
  \label{dmc2}
\end{subfigure}
\caption{Illustration of (a) the \texttt{faceProc} function (b) the \texttt{edgeProc} function in the dual Marching Cubes method.}
\label{dmc}
\end{figure}


\subsection{Primitives, Data-structures, Complexity}\label{SSpredicates}

Assuming the input vertices are rational, Voronoi vertices are rational \cite{PapLee01}. Computing Voronoi vertices, and intersections between affine bisectors and cell facets require linear operations, distance evaluations and comparisons. Therefore, they are exact. 
 The above operations, computing  $\widetilde{\phi}$ and deciding site-cell intersection are formulated to allow for a direct extension to 3D. In the sequel we discuss design of predicates, computation of label sets and construction of a Bounding Volume Hierarchy. We also provide a complexity analysis of the algorithm.

\subsubsection{Primitives. }

Membership in $\mathcal{H}(s)$ is trivial to decide, thus we focus on predicates that decide membership in $\mathcal{Z}(s)$.
Given $p\in \mathbb{R}^2$ and $s\in \mathcal{S}$, let $pr_{\text{aff}(s)}(p)$ the projection of $p$ to aff$(s)$ and $I_{p,s}$ the $1d-$interval on $\text{aff}(s)$ centered at $pr_{\text{aff}(s)}(p)$ with radius $\mu_\infty(p,\text{aff}(s))$.
\textbf{\texttt{inZone($p,s$)}} decides if 
$p\in \mathcal{Z}(s)$; this holds if and only if $I_{p,s} \cap s \neq \emptyset$ (Figure~\ref{inzone}). 

\begin{figure}[b]
\begin{center}
\includegraphics[scale=0.4]{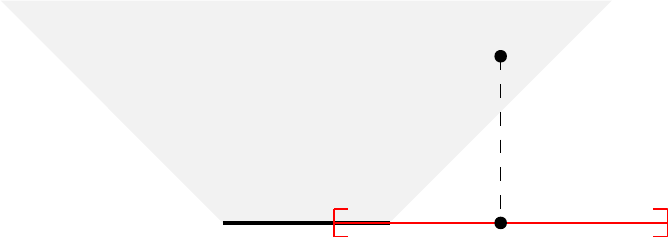}
\end{center}
\caption{Test performed by \texttt{inZone($p,s$)}.\label{inzone}}
\end{figure}

Given $\mathcal{C}$,$s\in \mathcal{S}$, \textbf{\texttt{ZoneInCell($s, \mathcal{C}$)}} decides if $\mathcal{Z}(s)\cap \mathcal{C} \neq \emptyset$. For this evaluation see Lemmma~\ref{zoneincell} and Figure~\ref{fig:zoneincell}.

\begin{lemma}\label{zoneincell}
Let $s\in \mathcal{S}$, $f_1,f_2$ the two facets of  $\mathcal{C}$ parallel to aff$(s)$, $\rho_i = \mu_\infty(f_i, \text{aff}(s))$ for $i = 1,2$ and $p_{_\mathcal{C}}' = pr_{\text{aff}(s)}(p_{_\mathcal{C}}).$ 
Then, $\mathcal{Z}(s) \cap \mathcal{C} \neq \emptyset$ iff $\exists i\in \{1,2\}$ s.t.\ $B_\infty(p_{_\mathcal{C}}', r_{_\mathcal{C}}+\rho_i)\cap s \neq \emptyset$.
\end{lemma}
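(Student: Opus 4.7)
\medskip

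\noindent\textbf{Proof proposal.} The plan is to reduce the statement to an elementary 1D interval-intersection check after aligning coordinates. Assume without loss of generality $\text{aff}(s) = \{y = c\}$, so that $s = [a,b] \times \{c\}$, and write $\mathcal{C} = [x_0 - r, x_0+r] \times [y_0 - r, y_0+r]$ with $r = r_{_\mathcal{C}}$ and center $p_{_\mathcal{C}} = (x_0,y_0)$. Then $p'_{_\mathcal{C}} = (x_0, c)$ and $\rho_i = |y_i - c|$ where $y_1 = y_0 - r$ and $y_2 = y_0 + r$ are the $y$-coordinates of $f_1$, $f_2$. Since $s \subset \text{aff}(s)$ and $p'_{_\mathcal{C}}\in \text{aff}(s)$, the condition $B_\infty(p'_{_\mathcal{C}}, r+\rho_i)\cap s\neq\emptyset$ reduces to the 1D statement $[x_0-r-\rho_i,\, x_0+r+\rho_i]\cap[a,b]\neq\emptyset$. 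Similarly, unwinding the definition of $\mathcal{Z}(s)$ (as used in the proof of Lemma~\ref{lem:prop}), a point $p = (p_x,p_y)$ lies in $\mathcal{Z}(s)$ iff there exists $q_x \in [a,b]$ with $|p_x - q_x| \le |p_y - c|$, i.e.\ iff $I_{p,s}\cap s\neq\emptyset$.

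\smallskip

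\noindent\emph{Forward direction.} Assume $p = (p_x,p_y)\in\mathcal{Z}(s)\cap\mathcal{C}$ and pick the corresponding $q_x\in[a,b]$. Since $p_y\in[y_0-r,y_0+r]$, we have $|p_y-c|\le \max(\rho_1,\rho_2)$; choose $i$ attaining this maximum, so $|p_y - c|\le\rho_i$. Then by the triangle inequality
\[
|x_0 - q_x| \le |x_0-p_x| + |p_x-q_x| \le r + |p_y - c| \le r + \rho_i,
\]
which places $(q_x,c)$ in $B_\infty(p'_{_\mathcal{C}}, r+\rho_i)\cap s$.

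\smallskip

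\noindent\emph{Reverse direction.} Suppose for some $i\in\{1,2\}$ there is $q_x \in [a,b]$ with $|x_0 - q_x|\le r+\rho_i$. Build a witness point $p = (p_x, y_i)\in f_i\cap\mathcal{C}$ by projecting $q_x$ onto the $x$-range of the cell: set $p_x := \min(x_0+r,\, \max(x_0-r,\, q_x))$. Then $|p_x - x_0|\le r$, so $p\in\mathcal{C}$, and a short case split on the sign of $q_x - x_0$ relative to $[x_0-r,x_0+r]$ shows $|p_x - q_x| \le \max(0,\,|x_0-q_x|-r) \le \rho_i = |y_i - c| = |p_y - c|$. Hence $I_{p,s}\ni q_x$ meets $[a,b]$, which gives $p\in\mathcal{Z}(s)\cap\mathcal{C}$.

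\smallskip

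The only subtlety is picking the correct facet $f_i$ in the forward direction (the \emph{farther} facet from $\text{aff}(s)$), since the closer facet may fail the bound $|p_y-c|\le\rho_i$ when $\text{aff}(s)$ crosses $\mathcal{C}$; beyond this, the argument is a routine chase of $L_\infty$ distances. The proof extends verbatim to 3D with $s$ a facet on $\text{aff}(s) = \{x_k = c\}$, which is why the predicate is stated in this form.
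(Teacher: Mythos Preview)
Your proof is correct and follows essentially the same approach as the paper's: the forward direction uses the triangle inequality after passing to the farther facet (you pick $i$ with $\rho_i=\max(\rho_1,\rho_2)$, the paper projects $p$ onto that facet first), and the reverse direction constructs a witness on $f_i$ by clamping the projection of the ball point to the cell's extent. The only difference is that you work in explicit coordinates and collapse the paper's two-step reduction ($\mathcal{Z}(s)\cap\mathcal{C}\neq\emptyset \Leftrightarrow \mathcal{Z}(s)\cap f_i\neq\emptyset$, then $\mathcal{Z}(s)\cap f_i\neq\emptyset \Leftrightarrow B_\infty(p'_{_\mathcal{C}},r_{_\mathcal{C}}+\rho_i)\cap s\neq\emptyset$) into a single direct argument.
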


\begin{proof}
$\mathcal{Z}(s) \cap \mathcal{C} \neq \emptyset$ iff $\mathcal{Z}(s) \cap f_i \neq \emptyset$ for at least one $i\in \{1,2\}$: Let $p \in \mathcal{Z}(s) \cap C $ s.t. $p \not\in f_1\cup f_2$ and $\text{pr}_{f_i}(p)$ be the projection of $p$ on $f_i$. There exists $i\in\{1,2\}$ s.t.  $\mu_\infty (\text{pr}_{f_i}(p), \text{aff}(s)) > \mu_\infty(p, \text{aff}(s))$. 
 Then, $\text{pr}_{f_i}(p) \in \mathcal{Z}(s)$. 
It holds that $\mathcal{Z}(s)\cap f_i \neq \emptyset$ iff $B_\infty(p_{_\mathcal{C}}', r_{_\mathcal{C}}+\rho_i)\cap s \neq \emptyset$: Let $q\in \mathcal{Z}(s)\cap f_i $ and $q'$ its projection on $\text{aff}(s)$. Then $\mu_\infty(q', s) \le \mu_\infty(q,\text{aff}(s))= \rho_i$ and $\mu_\infty(p_{_\mathcal{C}}',q') \le r_{_\mathcal{C}}$. We deduce that $B_\infty(p_{_\mathcal{C}}', r_{_\mathcal{C}}+\rho_i)\cap s \neq \emptyset$, since
$
\mu_\infty(p_{_\mathcal{C}}', s) \le \mu_\infty(p_{_\mathcal{C}}',q')+\mu_\infty(q',s) \le r_{_\mathcal{C}}+\rho_i.
$
For the inverse direction, 
let $B_\infty(p_{_\mathcal{C}}', r_{_\mathcal{C}}+\rho_i) \cap s \neq \emptyset$ and $q'$ in $s$ s.t. $\mu_\infty(p_{_\mathcal{C}}',q') \le r_{_\mathcal{C}}+\rho_i$. Let $q$ be its projection on $\text{aff}(f_i)$. If $q\in f_i$ we are done. Otherwise, $q$ is at $L_\infty$ distance from $f_i$ equal to $\mu_\infty(p_{_\mathcal{C}}', q') - r_{_\mathcal{C}}$, attained at a boundary point $q''\in f_i$. Then,  
$
\rho_i \le \mu_\infty(q'',s) \le  \mu_\infty(q'',q') = \max \{ \rho_i, \mu_\infty(p_{_\mathcal{C}}', q') - r_{_\mathcal{C}} \}  = \rho_i.
$
It follows that $q'' \in \mathcal{Z}(s)$.
\end{proof}

\begin{figure}\centering
\begin{subfigure}{.45\textwidth}
  \centering
  \includegraphics[scale=0.5]{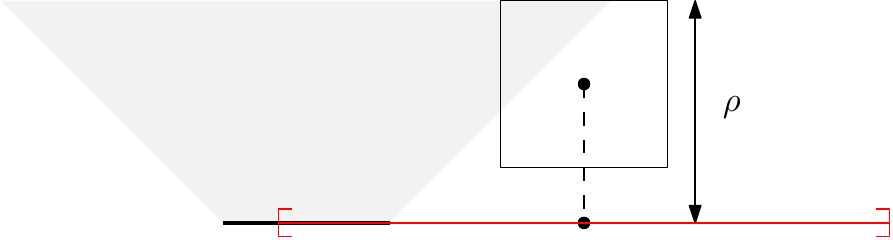}
  \caption{  \label{fig:zone1}}
\end{subfigure}%
\hspace{0.2cm}
\begin{subfigure}{.45\textwidth}
  \centering
  \includegraphics[scale=0.5]{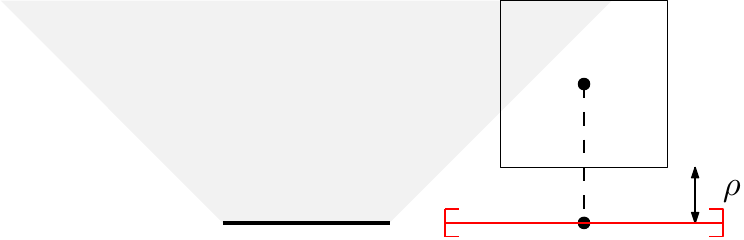}
 \caption{   \label{fig:zone2}}
\end{subfigure}
\caption{Illustration of test performed by \texttt{ZoneInCell} \label{fig:zoneincell}}
\end{figure}

To decide if $s\cap \mathcal{C}\neq \emptyset$ and if $s\cap int(\mathcal{C}) \neq \emptyset$, we use \textbf{\texttt{isIntersecting($s, \mathcal{C}$)}} and \textbf{\texttt{isStrictlyIntersecting($s, \mathcal{C}$)}} respectively. Design is trivial.  All these predicates are computed in $O(1)$ time.

Deciding whether a site belongs to $\widetilde{\phi}(\mathcal{C})$, is done using the predicates:
\begin{itemize}
\item \texttt{isIntersecting}: to decide whether the site intersects the cell centered at $p_{_\mathcal{C}}$ with radius $2\cdot r_{_\mathcal{C}} + \delta_{_\mathcal{C}} $. 
\item \texttt{ZoneInCell}: to decide whether the zone of the site intersects $\mathcal{C}$.
\end{itemize}

\paragraph{Computing label sets.}
If $p\in\mathcal{P}\cap \mathcal{C}$ then its closest sites are in $\widetilde{\phi}(\mathcal{C})$. Deciding if $p\in \mathcal{P}$ is done by \texttt{LocationTest}, which identifies position based on the sites that intersect $\mathcal{C}$: among these we select those with minimum $L_\infty$ distance to $p$ and for whom \texttt{inZone($p,s$)} is true. If a convex (resp.\ concave) corner with respect to the interior of $\mathcal{P}$ is formed by these sites then $p\in \mathcal{P}$ iff it belongs to the intersection (resp.\ union) of the oriented zones. If no corner is formed or even if $\mathcal{C}$ is not intersected by any site, decision is trivial. This takes $O(|\widetilde{\phi}(\mathcal{C})|)$ time. In more details, we define:
\begin{itemize}
\item $T := \{ s\in \mathcal{S}\mid s\cap \mathcal{C} \neq \emptyset \}$
\item $d(p, T) := \min\{ \mu_\infty(p,s) \mid s\in T\text{ and } p\in \mathcal{Z}(s) \}$ 
\item $R(p, T):= \{s \in T \mid \mu_\infty(p,s) = d(p)  \text{ and } p \in \mathcal{Z}(s)\}$
\item $R'(p,s, T) = \{ s'\in R(p,T) \mid s\cap s' \neq \emptyset \}$, $s\in R(p, T)$.

\end{itemize}

If $T = \emptyset $ then the cell is contained in $\mathcal{P}$. Otherwise, even if $p \not \in \mathcal{P}$ there always exists a site $s$ intersecting $\mathcal{C}$ with $p \in \mathcal{Z}(s)$. Therefore $d(p,T)$ is well defined and $|R(p, T)| \ge 1$ for every $p \in \mathcal{C}$. We pick a site $s\in R(p,T)$ and then perform the \texttt{LocationTest} for $R'(p,s, T)$ that returns true if and only if $p\in \mathcal{P}$; this test identifies the position of $p$ relative to $\mathcal{P}$ using one site if $|R'(p,s, T)| = 1$ or a corner formed by two sites in $R'(p,s,T)$ otherwise. Note that $R'(p,s,T)$ may contain 1 or 2 sites.

\begin{algorithm}[!htp]
\texttt{LocationTest}$(p,R'(p,s,T))$
\begin{algorithmic}[1]
\If {$|R'(p,s,T)|=1$} \Comment $R' (p,s,T)= \{s\}$
	\State \Return  $p\in \mathcal{Z}^+(s_1)$
\ElsIf {$|R'(p,s,T)|=2$}  \Comment $R'(p,s)= \{s, s'\}$
	\If {$\angle s's$ is convex}
	\State \Return $p\in \mathcal{Z}^+(s)\cap \mathcal{Z}^+(s')$
	\Else
	\State \Return $p\in \mathcal{Z}^+(s)\cup \mathcal{Z}^+(s')$
	\EndIf
\EndIf
\end{algorithmic}
\end{algorithm}
\vspace{0.5\baselineskip}

\begin{lemma}
Let $p \in \mathcal{C}$. The following equivalence holds:
$p \in \mathcal{P} \text{ if and only if } T = \emptyset \text{ or \texttt{LocationTest}}(p,R'(p,s,T)) \text{ returns \texttt{true} for some }  s\in R(p,T)$
\end{lemma}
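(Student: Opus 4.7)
\emph{Plan overview.} The statement is an iff whose right-hand side is a two-branch disjunction, so I would split on whether $T=\emptyset$. The $T=\emptyset$ branch follows from the paragraph preceding the lemma: if no site of $\mathcal{S}$ meets $\mathcal{C}$ then $\partial\mathcal{P}$ does not cross $\mathcal{C}$, so $\mathcal{C}$ lies wholly inside $\mathcal{P}$ (the alternative $\mathcal{C}\subseteq\mathcal{P}^c$ is ruled out by the invariant that \texttt{LocationTest} is only invoked on cells that did not already satisfy (T2) upstream in the quadtree). In particular $p\in\mathcal{P}$, so the equivalence is immediate on this branch. The substance of the proof is therefore the case $T\neq\emptyset$, where $d(p,T)$ is well defined, $R(p,T)\neq\emptyset$, and the claim reduces to: $p\in\mathcal{P}$ iff \texttt{LocationTest}$(p,R'(p,s,T))$ returns \texttt{true} for at least one $s\in R(p,T)$.

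\emph{Only if direction.} Assume $p\in\mathcal{P}$ and $T\neq\emptyset$. Pick any $s\in R(p,T)$; then $\mu_\infty(p,s)=d(p,T)$ and the projection $pr_{\text{aff}(s)}(p)$ realises this distance. By Lemma~\ref{lem:prop} applied to the site of $\mathcal{S}$ witnessing $\mu_\infty(p,\partial\mathcal{P})$, no portion of $\partial\mathcal{P}$ is strictly closer to $p$ than $s$ is. If $|R'(p,s,T)|=1$, then $s$ is the only relevant nearest boundary piece at the projection point; since $p\in\mathcal{P}$ lies in the halfspace $\mathcal{H}(s)$ and $p\in\mathcal{Z}(s)$ by assumption, we get $p\in\mathcal{Z}^+(s)$ and \texttt{LocationTest} returns \texttt{true}. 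If $|R'(p,s,T)|=2$, say $R'=\{s,s'\}$, the manifold hypothesis forces $s$ and $s'$ to meet at a single vertex $v$ of $\mathcal{P}$; a direct case distinction on whether $v$ is convex or reflex matches the two branches of \texttt{LocationTest} and yields \texttt{true}.

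\emph{If direction.} Conversely, assume some $s\in R(p,T)$ makes \texttt{LocationTest} return \texttt{true}. The open ball $B_\infty(p,d(p,T))$ meets no site of $T$ in its interior, so the part of $\partial\mathcal{P}$ lying at distance $d(p,T)$ from $p$ inside $\mathcal{C}$ consists exactly of $s$ together with the (at most one) adjacent site in $R'(p,s,T)$. The three \texttt{true}-branches of \texttt{LocationTest} correspond, respectively, to $p$ lying on the interior side of an isolated nearest edge, inside the wedge $\mathcal{Z}^+(s)\cap\mathcal{Z}^+(s')$ carved out by a convex corner, or inside the reflex region $\mathcal{Z}^+(s)\cup\mathcal{Z}^+(s')$ carved out by a concave corner. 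Each of these conditions is the standard local characterisation of being inside $\mathcal{P}$ at $p$, giving $p\in\mathcal{P}$.

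\emph{Principal difficulty.} The delicate point is that $R(p,T)$ may in general contain sites not touching the chosen $s$, so a priori the corner test restricted to $R'(p,s,T)$ could miss equidistant boundary information. The key observation that dispels this is that any $s''\in R(p,T)\setminus R'(p,s,T)$ still lies at distance exactly $d(p,T)$ from $p$ with $p\in\mathcal{Z}(s'')$, hence determines the same ``interior side'' of $\partial\mathcal{P}$ at $p$ as $s$ does; combined with the manifold assumption (which bounds $|R'|\le 2$) and the minimality of $d(p,T)$, this is precisely what makes the disjunction over $s\in R(p,T)$ both sound and complete.
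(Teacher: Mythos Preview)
Your approach is essentially the paper's: split on $T=\emptyset$ versus $T\neq\emptyset$, and in the latter case argue the equivalence via the two geometric configurations $|R'(p,s,T)|=1$ (single nearest edge, test reduces to $p\in\mathcal{Z}^+(s)$) and $|R'(p,s,T)|=2$ (convex or concave corner, test reduces to intersection or union of oriented zones). The paper's own proof is terser---it simply refers to the two corner figures---so your elaboration and your ``principal difficulty'' remark about extra sites in $R(p,T)\setminus R'(p,s,T)$ add useful detail the paper omits.

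Two small corrections. First, your handling of $T=\emptyset$ via an invariant that ``(T2) did not hold upstream'' is not how the paper justifies it: the paper states flatly, just before the lemma, that $T=\emptyset$ implies $\mathcal{C}\subseteq\mathcal{P}$, and then calls the equivalence ``obvious.'' Second, your appeal to Lemma~\ref{lem:prop} in the only-if direction is misplaced: that lemma asserts $\mu_\infty(p,s)=\mu_\infty(p,\mathrm{aff}(s))$ for $p\in V_\infty(s)$, which is not the comparison you need here. The fact you actually use is just the definition of $R(p,T)$---that $s$ already minimises $\mu_\infty(p,\cdot)$ among sites in $T$ with $p\in\mathcal{Z}(\cdot)$---together with the local description of $\partial\mathcal{P}$ near the projection point.
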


\begin{proof}
When $T = \emptyset$ the equivalence is obvious. When $T \neq \emptyset$, we distinguish two cases according to the cardinality of $R'(p,s, T)$: When $|R'(p,s,T)|=1$ it is straightforward that $p\in \mathcal{P} \Leftrightarrow p \in \mathcal{Z}^+(s)$.
When $|R'(p,s,T)|=2$, the configuration of sites in $R'(p,s)$ is like in Figures \ref{fig:sub1} and \ref{fig:sub2}. The test returns \texttt{true} if and only if $p\in \mathcal{P}$ in any case. 
\end{proof}

\begin{figure}
\centering
\begin{subfigure}{.40\textwidth}
  \centering
  \includegraphics{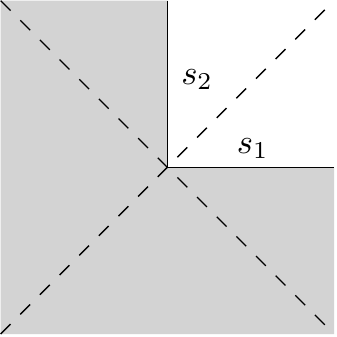}
  \caption{ }
  \label{fig:sub1}
\end{subfigure}%
\begin{subfigure}{.40\textwidth}
  \centering
  \includegraphics{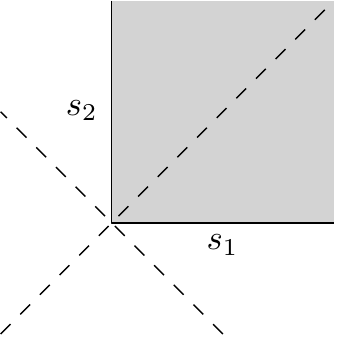}
 \caption{ }
  \label{fig:sub2}
\end{subfigure}

\caption{The two possible configurations of a corner.}
\label{fig:test}
\end{figure}



\subsubsection{Rectangular Decomposition and Bounding Volume Hierarchy.}

We decompose $\mathcal{P}$ into a collection of rectangles such that any two of them have disjoint interior. 
The resulting decomposition may not be optimal with respect to the number of rectangles but allows for an immediate construction of a data structure (a Bounding Volume Hierarchy) that can answer point and rectangle-intersection queries on the decomposition's rectangles efficiently.
This data structure can accelerate the 2D algorithm for certain inputs and is essential for efficiency of the 3D version of the algorithm.

\paragraph{Rectangular decomposition.}

It is known \cite{Soltan1993} that the minimum number of rectangles in a partition of
a polygon with $n$ vertices and $h$ holes is $n/2+h-g-1$, where $g$ is the maximum number of nonintersecting chords
that can be drawn either horizontally or vertically between reflex vertices. To the direction of minimizing the number of rectangles, the optimal algorithm has $O(n^{3/2}\log n)$ time complexity \cite{Lipski:1984:OMP:1653.1661}. 

However, for our purpose, the decomposition does not need to be optimal. We observe that drawing an axis-parallel edge, from every reflex vertex results to a rectangular decomposition. Thus, we construct a kd-tree on the reflex vertices of the polygon, splitting always at a vertex. Assuming the polygon has $r$ reflex vertices, the kd-tree subdivides the plane into at most $r+1$ regions. Every terminal region contains a disjoint collection of rectangles (nonempty). We denote by $t$ the maximum number of rectangles in a terminal region.

\begin{lemma}
For an orthogonal polygon on $n$ vertices with $h$ holes, let $r$ be the number of its reflex vertices. It holds that: 
$$r = \frac{n}{2}+2(h-1)$$
\end{lemma}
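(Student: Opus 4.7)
The plan is to prove this by classifying vertices according to interior angle and using a turning-angle (total-curvature) argument separately on each boundary component of the polygon. Since the polygon is orthogonal, every vertex has interior angle either $\pi/2$ (convex) or $3\pi/2$ (reflex), so if we let $c_j,r_j$ denote the number of convex and reflex vertices on the $j$-th boundary component, we immediately have $c_j + r_j = n_j$, where $n_j$ is the number of vertices on that component.

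Next I would analyze each boundary component. Traversing the outer boundary counterclockwise, the total exterior turning equals $2\pi$; each convex vertex contributes a turn of $+\pi/2$ and each reflex vertex contributes $-\pi/2$, so $c_0 - r_0 = 4$. Combined with $c_0+r_0=n_0$ this yields $r_0 = n_0/2 - 2$. For a hole, the boundary is traversed clockwise relative to the polygon's interior (equivalently, counterclockwise relative to the hole's interior), and what is convex from the hole's perspective is reflex from the polygon's perspective and vice versa. Thus for hole $i$ the total turning argument gives $c_i - r_i = -4$, so $r_i = n_i/2 + 2$.

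Summing over all boundary components, letting $n = n_0 + \sum_{i=1}^h n_i$,
\begin{equation*}
r = r_0 + \sum_{i=1}^h r_i = \Bigl(\tfrac{n_0}{2} - 2\Bigr) + \sum_{i=1}^h \Bigl(\tfrac{n_i}{2} + 2\Bigr) = \tfrac{n}{2} - 2 + 2h = \tfrac{n}{2} + 2(h-1),
\end{equation*}
as claimed.

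The argument is essentially a Gauss--Bonnet / discrete-turning computation, so the only subtlety I expect is keeping the orientation conventions consistent between the outer boundary and the holes; once that is fixed the arithmetic is routine. An alternative route would be to invoke the interior-angle-sum formula for a polygon with holes directly, but the turning-angle formulation makes the opposite role of hole vertices transparent and avoids case analysis.
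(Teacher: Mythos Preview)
Your proof is correct and follows essentially the same route as the paper: decompose into the outer boundary and the $h$ hole boundaries, obtain $r_0 = n_0/2 - 2$ and $r_i = n_i/2 + 2$ for each hole, and sum. The paper simply asserts the per-component counts as known facts, whereas you justify them via the turning-angle argument, which is a welcome addition but not a different method.
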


\begin{proof}
A simple orthogonal polygon without holes on $N$ vertices has $N/2 - 2$ reflex vertices. So, if we denote by $n_0$ the number of the outer contour's vertices, and by $n_i$ the respective number for every hole ($i \in [h]$) we have that:
$ r= n_0/2-2 + \sum_{i= 1}^h \big( n_i/2 +2 \big) = \frac{n}{2}+2(h-1)$. 
\end{proof}

\paragraph{Bounding Volume Hierarchy. }

A \textit{Bounding Volume Hierarchy} (BVH) \cite{Haverkort2004} is a tree structure on a set of objects stored at the leaves along with their bounding volume while internal nodes store information of their descendants' bounding volume. Two important properties are \textit{minimal volume} and \textit{small overlap}.

In our setting, the geometric objects are the axis aligned rectangles obtained by the decomposition. Consequently, the most appropriate bounding shape is the Axis Aligned Bounding Box (AABB) offering a good tradeoff between minimal volume and simplicity of representation. As for minimizing the overlap of bounding boxes at the same level in the hierarchy, we group rectangles in a specific way in order to accomplish that:

The BVH is built  in a bottom-up manner, by traversing the kd-tree previously constructed and adding some additional information to its nodes. At every leaf of the kd-tree we compute the AABB of its rectangles (namely a \textit{terminal bounding box}) and for every internal node we find the AABB of its two children. In that way, the bounding volumes of a node's children intersect only at their boundary (see Figure \ref{aabb} for a simple illustration). Space complexity is linear in tree size.

\begin{figure}
\begin{center}
\includegraphics[scale=0.6]{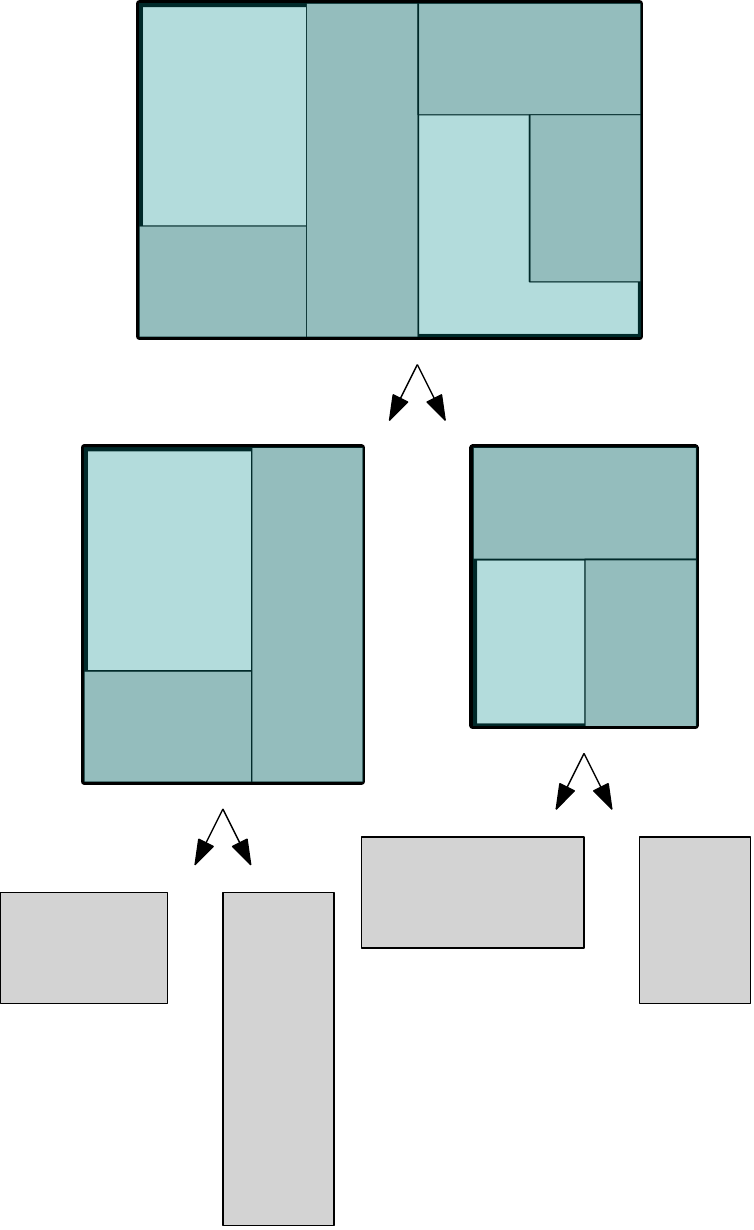}
\end{center}
\caption{A rectangular decomposition for an orthogonal polygon and the corresponding BVH tree \label{aabb}}
\end{figure}

\vspace{0.2cm}
\noindent
\textbf{Rectangle-Intersection queries:}
Given query rectangle $Q$ the data structure reports all rectangles in the decomposition overlapping with $Q$. Starting from the root, for every internal node, we check whether $Q$ intersects its bounding rectangle or not. In the latter case the data structure reports no rectangles. In the former, we check the  position of $Q$ relative to the bounding boxes of the node's children so as to decide for each one if it should be traversed or not. We continue similarly: when we reach a terminal bounding box, we check the position of $Q$ relative to every rectangle in it.
Let $k$ be the number of terminal bounding boxes intersected by $Q$. Following \cite{Agarwal2001BoxtreesAR}, we count the number of internal nodes visited on each level of the tree and show: 

\begin{restatable}{theorem}{bvh}\label{bvh}
Rectangle intersection queries are answered in $O(k\lg r + kt)$.
\end{restatable}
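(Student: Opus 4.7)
The plan is to bound separately the cost at internal BVH nodes and at terminal bounding boxes. Since the kd-tree splits at reflex vertices and there are $r$ of them, it is balanced with depth $O(\lg r)$, and the BVH inherits this height.

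For the terminal-node contribution, $Q$ intersects $k$ terminal bounding boxes by hypothesis. For each of these, the algorithm tests each of the at most $t$ stored rectangles against $Q$ in $O(1)$ time, which gives an $O(kt)$ contribution.

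For the internal-node contribution, I would count visited internal nodes level by level, following the analysis of \cite{Agarwal2001BoxtreesAR}. The key structural property to exploit is that, at each level of the BVH, the bounding boxes are pairwise interior-disjoint: each node's bbox is contained in the kd-tree cell of that node, and the kd-tree cells at any fixed level tile the plane. Given this, I would show that the number of visited nodes at each level is $O(k)$. The \emph{useful} visits -- ancestors of the $k$ intersected terminal bounding boxes -- number at most $k$ per level, since every intersected terminal contributes exactly one ancestor per level; these contribute $O(k \lg r)$ over the $O(\lg r)$ levels. The remaining \emph{wasted} visits, where the bbox intersects $Q$ only through an empty portion of the tight AABB and no terminal descendant is intersected, would be charged to a useful neighbor at the same level using interior-disjointness, keeping the per-level bound at $O(k)$ and the total at $O(k \lg r)$. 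Adding the $O(kt)$ terminal cost yields the claimed bound.

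The main obstacle will be making the wasted-visit accounting fully rigorous. The subtlety is that, because each internal bbox is the \emph{tight} AABB of the rectangles in its subtree and not the whole kd-tree cell, it may contain sizable gap regions; $Q$ lying in such a gap intersects the node's bbox yet misses every descendant terminal, and this may cascade through several levels. Handling this requires a careful charging argument that leverages both the interior-disjoint tiling at each level and the fact that a wasted subtree cannot persist once $Q$ separates from every child's tight AABB, so that the amortized work of each wasted node is $O(1)$ per useful ancestor.
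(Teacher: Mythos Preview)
Your overall framework---splitting cost between terminal work $O(kt)$ and internal work counted level by level, with $O(\lg r)$ levels and a target of $O(k)$ visits per level---matches the paper exactly. The treatment of \emph{useful} internal visits (ancestors of the $k$ intersected terminals) is also the same.

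The gap is precisely where you flag it: your handling of \emph{wasted} visits is not an argument yet. ``Charge each wasted node to a useful neighbor at the same level using interior-disjointness'' is a hope, not a proof; interior-disjointness of the level-$i$ boxes does not by itself bound how many wasted boxes can be adjacent to (or otherwise charged to) a single useful one, and you give no mechanism preventing a long chain of wasted nodes along $Q$'s boundary at a fixed level. The paper does \emph{not} use a charging scheme here. Instead it argues geometrically, and crucially exploits the specific construction of this BVH (kd-tree splits at reflex vertices of $\mathcal{P}$, leaves are tight AABBs of rectangles that tile $\mathcal{P}$). For a wasted internal node $v$ with box $V$, pick two terminal descendants $b,b'$ and axis-parallel lines $\ell,\ell'$ through facets of $Q$ separating $Q$ from $b$ and $b'$ respectively (distinct such lines must exist, else $V$ would miss $Q$). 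If $\ell\perp\ell'$, their intersection is a vertex of $Q$ lying in $V$; since the level-$i$ boxes are interior-disjoint, each of $Q$'s four vertices lies in $O(1)$ of them, giving $O(1)$ wasted nodes of this type per level. If $\ell\parallel\ell'$ and no vertex of $Q$ lies in $V$, the terminals split into two groups on opposite sides of $Q$, which forces a kd-tree split---hence a reflex vertex of $\mathcal{P}$---inside $V\cap Q$; but $V\cap Q$ meets no terminal box, hence no rectangle of the decomposition, hence no point of $\mathcal{P}$, a contradiction. Thus wasted nodes contribute only $O(1)$ per level, and the $O(k)$ per-level bound follows.

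So to close your proof you should replace the charging sketch with this two-case separator argument; note in particular that the parallel-separator case is where the problem-specific structure (splits at reflex vertices, rectangles covering $\mathcal{P}$) is actually used, which your proposal does not invoke at all.
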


\begin{proof}
Let $Q$ a rectangle-intersection query and $v$ an internal node of the BVH tree visited during the query. We distinguish two cases; first, the subtree rooted at $v$ contains a terminal bounding box that intersects $Q$. There are $O(k)$ such nodes at each level. Otherwise, $Q$ intersects with the bounding rectangle $V$ stored at $v$ but does not intersect any terminal bounding box of the subtree rooted at $v$. There are at least two such terminal bounding boxes, say $b$ and $b'$. Since $Q$ does not intersect $b$ there is a line $\ell$ passing through a facet of $Q$ separating $Q$ from $b$. Similarly, there exists a line $\ell'$ passing through a facet of $Q$ that separates it from $b'$. W.l.o.g. there is a choice of $b, b'$ such that $\ell$ and $\ell'$ are distinct -if all the terminal bounding boxes of the subtree can be separated by the same line, then $V$ cannot intersect $Q$-. If $\ell, \ell'$ are perpendicular, then their intersection also intersects $V$. Since the bounding boxes of each level are strictly non-overlapping, every vertex of $Q$ intersects a constant number of them (up to 4). So, there is a constant number of such nodes at a given level. When $\ell, \ell'$ are parallel and no vertex of $Q$ intersects $V$, then the terminal bounding rectangles of the subtree can be partitioned to those separated by $\ell$ from $Q$ and to those separated by $\ell'$ from $Q$. For these distinct sets of terminal bounding boxes to be formed, there must occur a split of $V$ by a line parallel and in between $\ell, \ell'$. So there is a reflex vertex of the polygon in $V\cap Q$,  causing this split. But $V\cap Q \cap \mathcal{P} = \emptyset$; a contradiction. So there are $O(k)$ internal nodes visited at each level of tree. The visited leaf nodes correspond to the $O(k)$ terminal bounding boxes that intersect $Q$ and since each of them encloses at most $t$ rectangles, the additional amount of performed operations equals $O(kt)$. Summing over all levels of the tree yields a total query complexity of $\sum_{i=0}^{\lceil \lg r \rceil}O(k) + O(kt) = O(k\lg r + kt)$. 
\end{proof}

\noindent
\textbf{Point queries:} Given $p\in \mathbb{R}^2$, we report on the rectangles of the decomposition in which $p$ lies inside (at most~4 rectangles). When zero, the point lies outside the polygon. Since it is a special case of a rectangle-intersection query, the query time complexity is $O(\lg r + t)$.

\medskip\noindent
\textbf{Complexity.}
Analysis requires a bound on the height of the quadtree. The edge length of the initial bounding box is supposed to be~1 under appropriate scaling. 
Let \textit{separation bound} $\Delta$ be the maximum value s.t.\ for every cell of edge length $\ge\Delta$ at least one termination criterion holds. Then, the maximum tree height is $L= O(\lg (1/ \Delta))$. Let $\beta$ be the minimum distance of two Voronoi vertices, and $\gamma$ the relative thickness of $\mathcal{P}^c$, i.e.\ the minimum diameter of a maximally inscribed $L_\infty$-ball in $\mathcal{P}^c$, where $\mathcal{P}^c$ is the complement of $\mathcal{P}$.

\begin{lemma}\label{Lsep}
Separation bound $\Delta$ is $ \Omega(\min\{\gamma, \beta\})$, where the asymptotic notation is used to hide some constants. 
\end{lemma}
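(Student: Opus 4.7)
The plan is to prove the contrapositive: for a cell $\mathcal{C}$ whose $L_\infty$-radius $r_{_\mathcal{C}}$ is at most a suitable constant multiple of $\min\{\gamma,\beta\}$, at least one of (T1)--(T4) must hold. If (T1) or (T2) holds nothing further is needed, so I will assume $int(\mathcal{C})\cap\mathcal{P}\neq\emptyset$ and that $\mathcal{C}$ meets at least two Voronoi regions, so $\mathcal{C}$ crosses $\mathcal{V}_\mathcal{D}(\mathcal{P})$.

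First I will use $\beta$ to cap the number of Voronoi vertices inside $\mathcal{C}$: if two distinct Voronoi vertices both lay in $\mathcal{C}$, they would be at $L_\infty$-distance at most the diameter $2r_{_\mathcal{C}}$, which by the choice of constant is smaller than $\beta$, contradicting the definition of $\beta$. Hence $\mathcal{C}$ contains at most one Voronoi vertex.

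Next I will use $\gamma$ to cap $|\widetilde{\phi}(\mathcal{C})|$ by $M$. Any $s\in\widetilde{\phi}(\mathcal{C})$ is a boundary facet of $\mathcal{P}$ whose oriented zone $\mathcal{Z}^+(s)$ meets $\mathcal{C}$ and whose $L_\infty$-distance from $p_{_\mathcal{C}}$ is at most $2r_{_\mathcal{C}}+\delta_{_\mathcal{C}}$, so it lies in an axis-aligned box $B$ of radius of that order around $p_{_\mathcal{C}}$. If too many distinct facets of $\partial\mathcal{P}$ met $B$ with their oriented zones simultaneously reaching $\mathcal{C}$, they would have to be separated inside $B$ either by a thin piece of $\mathcal{P}^c$ entering $B$ or by a thin protrusion of $\mathcal{P}$; either way the complementary region $\mathcal{P}^c$ between them would admit a maximally inscribed $L_\infty$-ball of diameter strictly less than $\gamma$, contradicting the definition of $\gamma$ once $r_{_\mathcal{C}}$ is chosen a small enough fraction of $\gamma$. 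Thus $|\widetilde{\phi}(\mathcal{C})|\le M$.

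To close, I combine the two bounds. If $\mathcal{C}$ contains a Voronoi vertex it is unique and, being equidistant to a subset of $\widetilde{\phi}(\mathcal{C})$, is determined by these $\le M$ sites via \texttt{VoronoiVertexTest}, triggering (T4). If $\mathcal{C}$ contains no Voronoi vertex, then inside $\mathcal{C}$ the diagram is a disjoint collection of axis-parallel or diagonal bisector segments; using the orientation constraints on bisectors under $\mathcal{D}$ together with the convexity of oriented zones (as in the proof of Lemma~\ref{term}), at most three Voronoi regions can actually meet $\mathcal{C}$, triggering (T3). I expect the main obstacle to be the second step: converting the metric quantity $\gamma$ into a concrete combinatorial cap on $|\widetilde{\phi}(\mathcal{C})|$ requires carefully exploiting the oriented-zone condition $\mathcal{Z}^+(s)\cap\mathcal{C}\neq\emptyset$ to exclude far-away facets, and then a geometric witness argument showing that every additional active facet produces a thin inscribed ball in $\mathcal{P}^c$ violating the $\gamma$ lower bound.
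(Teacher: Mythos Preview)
Your route is genuinely different from the paper's, and it has two gaps that keep it from closing.

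First, in your step~2 you claim that $\gamma$ alone caps $|\widetilde{\phi}(\mathcal{C})|$. The witness you propose, ``a thin protrusion of $\mathcal{P}$ \dots either way $\mathcal{P}^c$ between them would admit a maximally inscribed ball of diameter $<\gamma$'', does not hold: if the extra active sites bound a thin finger of $\mathcal{P}$ itself, the region between them is $\mathcal{P}$, not $\mathcal{P}^c$, and no small ball in $\mathcal{P}^c$ is produced. Such thin fingers make $\beta$ small (Voronoi vertices at the finger's corners are close), not $\gamma$. So the roles of $\beta$ and $\gamma$ cannot be separated the way you try; both jointly control $|\widetilde{\phi}(\mathcal{C})|$.

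Second, and more seriously, your step~3 conflates $\phi$ with $\widetilde{\phi}$. Even granting that at most three Voronoi regions actually meet $\mathcal{C}$ (i.e.\ $|\phi(\mathcal{C})|\le 3$), criterion~(T3) requires $|\widetilde{\phi}(\mathcal{C})|\le 3$, which is strictly stronger. With only $|\widetilde{\phi}(\mathcal{C})|\le M$ from step~2 and no vertex in $\mathcal{C}$, neither (T3) nor (T4) is triggered and the subdivision does not stop. This is exactly why the paper's argument goes the other way: it fixes $s'\in\widetilde{\phi}(\mathcal{C})\setminus\phi(\mathcal{C})$ with $V_\mathcal{D}(s')$ non-neighboring to $V_\mathcal{D}(s)$ (where $p_{_\mathcal{C}}\in V_\mathcal{D}(s)$), and shows that once $r_{_\mathcal{C}}<\tfrac12\big(\mu_\infty(p_{_\mathcal{C}},s')-\mu_\infty(p_{_\mathcal{C}},s)\big)$ the site $s'$ drops out of $\widetilde{\phi}(\mathcal{C})$. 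That difference is then lower-bounded by $\zeta(s,s')$, the $L_\infty$ gap between the closures of the two Voronoi regions, and $\zeta(s,s')=\Omega(\beta)$ when the regions are linked through the diagram, $\Omega(\gamma)$ when a hole separates them. In other words, the paper controls $\widetilde{\phi}$ directly rather than inferring it from $\phi$, and this is precisely the missing ingredient in your plan.
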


\begin{proof}
The algorithm mainly subdivides cells that intersect $\mathcal{V}_{\mathcal{D}}(\mathcal{P})$, since a cell inside a Voronoi region or outside $\mathcal{P}$ is not subdivided (Termination criteria (T1), (T2)).
Most subdivisions occur as long as non neighboring Voronoi regions are ``too close". 
Consider $\mathcal{C}$ centered at $p_{_\mathcal{C}} \in V_{\mathcal{D}}(s)$ and $s' \in \widetilde{\phi}(\mathcal{C}) \setminus \phi(\mathcal{C})$, with $V_{\mathcal{D}}(s), V_{\mathcal{D}}(s')$ non neighboring.
For $r_{_\mathcal{C}} < \frac{\mu_\infty(p_{_\mathcal{C}}, s')-\mu_\infty(p_{_\mathcal{C}}, s)}{2} $ site $s'$ is not in $\widetilde{\phi}(\mathcal{C})$.
It holds that $\mu_\infty(p_{_\mathcal{C}}, s')-\mu_\infty(p_{_\mathcal{C}}, s) \ge \zeta(s,s')$, where $\zeta(s,s') = \min\{\mu_\infty(p,q) \mid p \in \overline{V_\mathcal{D}(s)}, q\in \overline{V_\mathcal{D}(s')}\}$, i.e. the minimum distance of the closure of the two Voronoi regions. 
When $\overline{V_\mathcal{D}(s)},\overline{V_\mathcal{D}(s')}$ are connected with a Voronoi edge, $\zeta(s,s') = \Omega(\beta)$. When a minimum cell size of $\Omega(\beta)$ is not sufficient for $s'$ to not belong in $\widetilde{\phi}(\mathcal{C})$, then
there is a hole between $\overline{V_\mathcal{D}(s)},\overline{V_\mathcal{D}(s')}$ and $\Delta$ is $\Omega(\gamma)$ in this case. 
\end{proof}

This lower bound is tight: in Figure~\ref{Fex1} for $\Delta = 0.8125 \beta$, and in Figure~\ref{Fex2} for $\Delta = \gamma$.
Next we target a realistic complexity analysis rather than worst-case. For this, assume the site distribution in $\mathcal{C}_0$ is ``sufficiently uniform". 
Formally:
\\

\noindent \textbf{Uniform Distribution Hypothesis (UDH):} For $L_\infty$ balls $A_1\subseteq A_0\subset \mathcal{C}_0$, let $N_0$ (resp.\ $N_1$) be the number of sites intersecting $A_0$ (resp.\ $A_1$). We suppose ${N_1}/{N_0} = O({vol(A_1)}/ {vol(A_0)} )$, where $vol(\cdot)$ denotes the volume of a set in $\mathbb{R}^d$, $d$ being the dimension of $\mathcal{C}_0$. 

\begin{theorem}
Under UDH the algorithm complexity is $O( n/ {\Delta} + 1/\Delta^2)$, where $n$ is the total number of boundary edges (including any holes).
\end{theorem}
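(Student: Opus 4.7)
The plan is to express the total running time of Algorithm~\ref{algo1} as a sum of per-cell costs and to analyze this sum by swapping the order of summation. Processing a cell $\mathcal{C}$ (propagating the parent's active set, recomputing $\widetilde{\phi}(\mathcal{C})$, computing the corner and central label sets via \texttt{LocationTest}, evaluating T1--T4, and performing the reconstruction work of Sect.~\ref{3.3}) takes $O(1 + |\widetilde{\phi}(\mathcal{C})|)$ time, since each primitive is $O(1)$ and the only super-constant work is linear in the parent's active set. Hence
\[
W \;=\; O\Bigl(\sum_{\mathcal{C}} 1\Bigr) \;+\; O\Bigl(\sum_{\mathcal{C}} |\widetilde{\phi}(\mathcal{C})|\Bigr),
\]
and the stated bound will follow by controlling these two sums separately.

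For the first sum I would invoke Lemma~\ref{Lsep} to cap the quadtree depth at $L = O(\lg(1/\Delta))$: once $r_{_\mathcal{C}} < \Delta/2$ at least one termination criterion fires, so no cell is subdivided beyond that scale. Since level $\ell$ contains at most $4^\ell$ cells, the total number of quadtree cells is $\sum_{\ell=0}^{L} 4^\ell = O(4^L) = O(1/\Delta^2)$, which already accounts for the $1/\Delta^2$ term.

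For the second sum I would write
\[
\sum_{\mathcal{C}} |\widetilde{\phi}(\mathcal{C})| \;=\; \sum_{s\in\mathcal{S}} \bigl|\{\mathcal{C} : s \in \widetilde{\phi}(\mathcal{C})\}\bigr|
\]
and bound the contribution of each site $s$ per level. A cell $\mathcal{C}$ with $s \in \widetilde{\phi}(\mathcal{C})$ must satisfy $\mu_\infty(p_{_\mathcal{C}}, s) \le 2r_{_\mathcal{C}} + \delta_{_\mathcal{C}}$, so its center lies in an $L_\infty$ neighborhood of $s$ of width $O(r_{_\mathcal{C}} + \delta_{_\mathcal{C}})$. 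Applying UDH to $A_1 = B_\infty(p_{_\mathcal{C}}, r_{_\mathcal{C}}) \subseteq A_0 = \mathcal{C}_0$ yields $\Theta(n r_{_\mathcal{C}}^2)$ sites in $A_1$; as soon as $r_{_\mathcal{C}} \gtrsim 1/\sqrt{n}$ some site lies in $A_1$, whence $\delta_{_\mathcal{C}} = O(r_{_\mathcal{C}})$. At level $\ell$ with $r = 2^{-\ell}$, the cells whose active set contains $s$ therefore lie in a strip of width $O(r)$ along $s$, giving $O(|s|/r + 1)$ cells. Summing over sites and levels,
\[
\sum_{\ell=0}^{L} \sum_{s\in\mathcal{S}} O\bigl(|s|\,2^{\ell} + 1\bigr) \;=\; \sum_{\ell=0}^{L} O\bigl(\mathrm{len}(\partial\mathcal{P})\,2^{\ell} + n\bigr) \;=\; O\bigl(n\,2^{L} + nL\bigr) \;=\; O(n/\Delta),
\]
using that the total boundary length is $O(n)$ in the unit bounding box. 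Adding the two contributions gives $O(n/\Delta + 1/\Delta^2)$, as claimed.

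The main obstacle will be the ``sparse regime'' of coarse cells where $\delta_{_\mathcal{C}}$ can exceed $r_{_\mathcal{C}}$: the bound $\delta_{_\mathcal{C}} = O(r_{_\mathcal{C}})$ fails when $r_{_\mathcal{C}} \lesssim 1/\sqrt{n}$ or when $\mathcal{C}$ sits in a large empty pocket of $\mathcal{P}$, and a site $s$ can be ``remotely'' active. I would handle this by applying UDH directly to the larger ball $B_\infty(p_{_\mathcal{C}}, 2r_{_\mathcal{C}} + \delta_{_\mathcal{C}})$, obtaining $|\widetilde{\phi}(\mathcal{C})| = O\bigl(n(r_{_\mathcal{C}} + \delta_{_\mathcal{C}})^2 + 1\bigr)$, and by charging each such cell to the ball of nearest sites realizing $\delta_{_\mathcal{C}}$: such cells are few because the $4^\ell$ trivial bound is cheap at small $\ell$, and at deeper levels the cell either satisfies T1 (fully inside one $V_\mathcal{D}(s)$) or its active set shrinks after a bounded number of refinements. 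This charging absorbs the sparse-regime contribution into the two terms already derived.
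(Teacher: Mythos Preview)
Your route is genuinely different from the paper's. You swap the order of summation and, for each site $s$, count the cells at level $\ell$ whose active set contains $s$; the paper instead bounds $|\widetilde{\phi}(\mathcal{C})|$ cell by cell via a volume-halving argument on the annulus $A(p_{_\mathcal{C}},\delta_{_\mathcal{C}},r_{_\mathcal{C}})=\{q:\delta_{_\mathcal{C}}\le\mu_\infty(p_{_\mathcal{C}},q)\le 2r_{_\mathcal{C}}+\delta_{_\mathcal{C}}\}$. They show that, passing from a cell to any child and after removing the parent's empty ball $B_\infty(p_{_\mathcal{C}},\delta_{_\mathcal{C}})$, the effective annulus volume at most halves \emph{for every} combination of $\delta_{_\mathcal{C}},\delta_{_{\mathcal{C}_1}},r_{_\mathcal{C}}$. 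Under UDH this gives $|\widetilde{\phi}(\mathcal{C}_i)|=O(n/2^i)$ at level $i$, and summing $4^i\cdot n/2^i$ over $i\le L$ yields $O(n\cdot 2^L)=O(n/\Delta)$; the reconstruction term $O(1/\Delta^2)$ is handled as you do.

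Your argument has a real gap at the step ``$\Theta(nr_{_\mathcal{C}}^2)$ sites in $A_1$, whence $\delta_{_\mathcal{C}}=O(r_{_\mathcal{C}})$''. UDH as stated in the paper is one-sided: it asserts only $N_1/N_0=O(vol(A_1)/vol(A_0))$, an \emph{upper} bound on the number of sites in a sub-ball. It is fully consistent with UDH that $B_\infty(p_{_\mathcal{C}},r_{_\mathcal{C}})$ contain zero sites, so you cannot infer that some site lies within $O(r_{_\mathcal{C}})$ of $p_{_\mathcal{C}}$, and $\delta_{_\mathcal{C}}$ may remain $\Theta(1)$ arbitrarily deep in the tree. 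Once $\delta_{_\mathcal{C}}=O(r_{_\mathcal{C}})$ fails, the ``strip of width $O(r)$ along $s$'' picture collapses and a single site can lie in the active sets of far more than $O(|s|/r+1)$ cells at a given level. Your sparse-regime fallback does not close this: the bound $|\widetilde{\phi}(\mathcal{C})|=O(n(r_{_\mathcal{C}}+\delta_{_\mathcal{C}})^2)$ still carries the uncontrolled $\delta_{_\mathcal{C}}$, and the claim that large-$\delta$ cells are ``few'' at deep levels is precisely the kind of lower-density statement UDH does not supply. The paper's annulus argument sidesteps the issue because the halving $E_1\le E/2$ is established without any control on $\delta_{_\mathcal{C}}$.
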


\begin{proof} At each node, refinement and checking the termination criteria run in time linear in the size of its parent's active set. At the root $|\widetilde{\phi}(\mathcal{C}_0)|=n$. The cardinality of active sets decreases as we move to the lower levels of the quadtree:
Let $A(p,d, R) = \{q\in \mathbb{R}^2 \mid d\le \mu_\infty(p,q) \le 2R+d \}$. For cell $\mathcal{C}$ and $s\in \widetilde{\phi}(\mathcal{C})$, $s \cap A(p_{_\mathcal{C}}, \delta_{_\mathcal{C}}, r_{_\mathcal{C}})\neq \emptyset$. Let $E= vol(A(p_{_\mathcal{C}}, \delta_{_\mathcal{C}}, r_{_\mathcal{C}}))$. For $\mathcal{C}_1$ a child of $\mathcal{C}$ and $s_1\in \widetilde{\phi}(\mathcal{C}_1)$, $s_1 \cap A(p_{_{\mathcal{C}_1}}, \delta_{_{\mathcal{C}_1}}, r_{_{\mathcal{C}_1}})\neq \emptyset$. Since $B_\infty(p_{_\mathcal{C}}, \delta_{_\mathcal{C}})$ is empty of sites and may intersect with $A(p_{_{\mathcal{C}_1}}, \delta_{_{\mathcal{C}_1}}, r_{_{\mathcal{C}_1}})$, we let $E_1=vol(A(p_{_{\mathcal{C}_1}}, \delta_{_{\mathcal{C}_1}}, r_{_{\mathcal{C}_1}})\setminus (A(p_{_{\mathcal{C}_1}}, \delta_{_{\mathcal{C}_1}}, r_{_{\mathcal{C}_1}}) \cap B_\infty(p_{_\mathcal{C}}, \delta_{_\mathcal{C}})))$. We prove that in any combination of $\delta_{_\mathcal{C}}, \delta_{_{\mathcal{C}_1}}, r_{_\mathcal{C}}$ it is $ E_1 \le E/2$. 
Under Hypothesis~1, a cell at tree level $i$  has $|\widetilde{\phi}(\mathcal{C}_i)|= O(n/2^i)$. 
Computation per tree level, is linear in sum of active sets' cardinality, therefore
summing over all levels of the tree, we find the that complexity of the subdivision phase is $O(n/\Delta)$. The complexity of the reconstruction phase is $O(\tilde{n})$, where $\tilde{n}$ is the number of leaf nodes in the quadtree, which is in turn $O(1/\Delta^2)$. This allows to conclude. \end{proof}

Queries in the BVH can be used to compute label sets and the active set of a cell. Assume the number of segments touching a rectangle's boundary is $O(1)$, which is the typical case. Then, we prove the following.

\begin{figure}[h]
\begin{subfigure}{0.45\textwidth}
  \centering
  \includegraphics[scale=0.3]{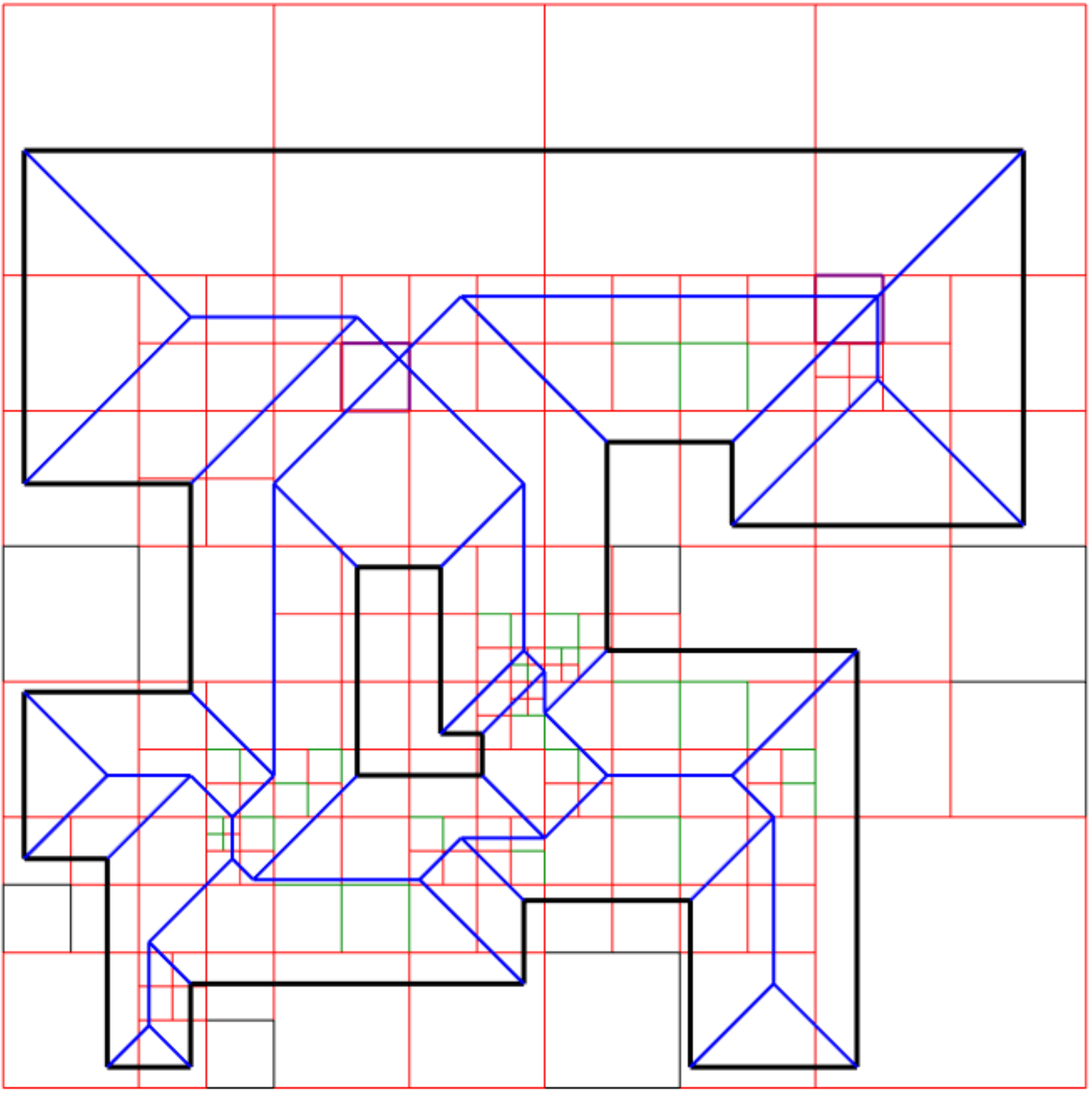}
  \caption{Input consists of 28 sites and 164 cells are generated. Total time is 12.0 ms. Minimum cell size is $0.8125 \cdot \beta$. \label{Fex1}}
\end{subfigure}%
\hspace{0.2cm}
\begin{subfigure}{0.45\textwidth}
  \centering
  \includegraphics[scale=0.245]{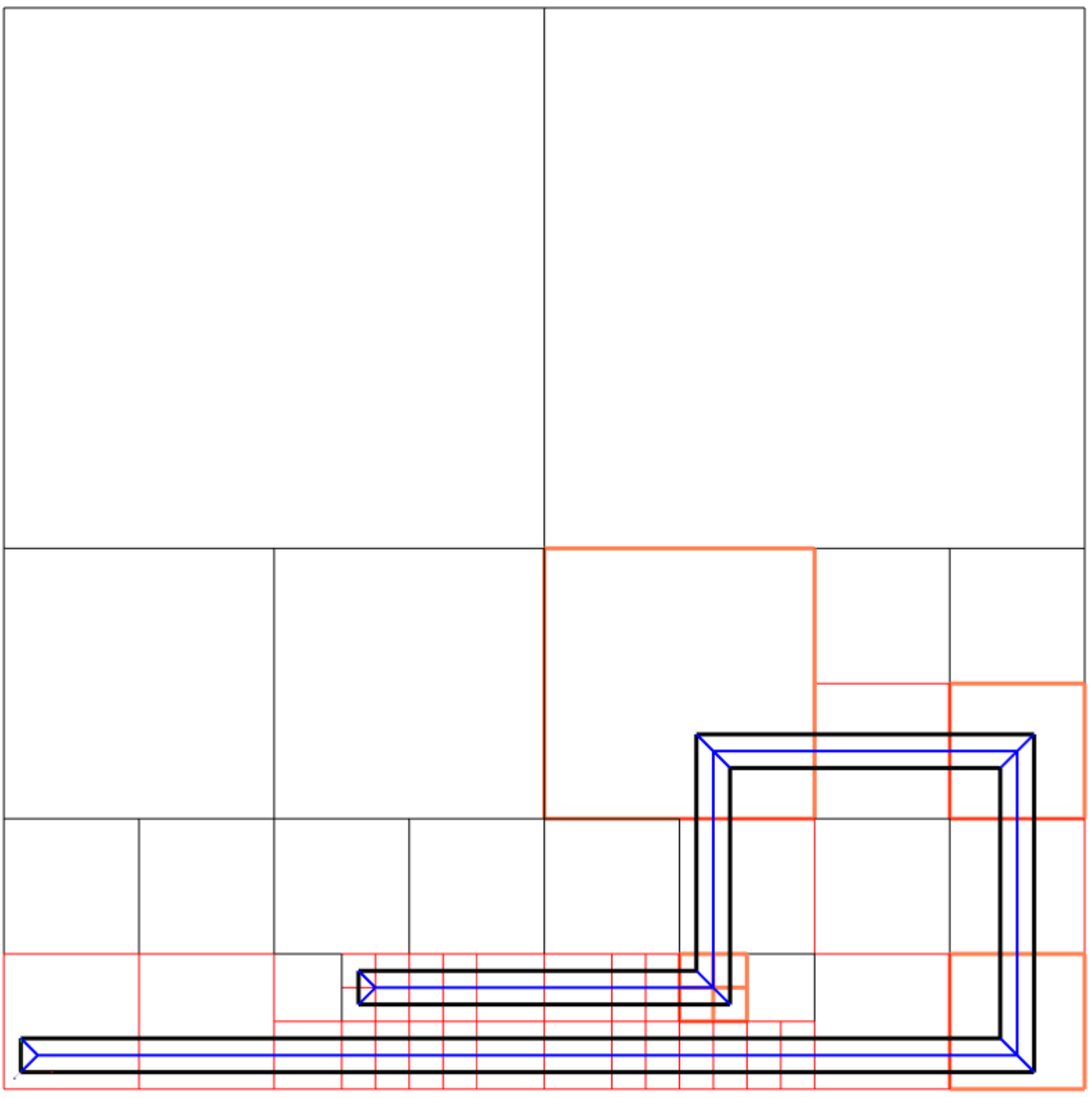}
 \caption{Input consists of 12 sites and 64 cells are generated. Total time is 5.6 ms. Minimum cell size is $\gamma$.\label{Fex2} }
 \end{subfigure}
 \caption{The 1-skeleton of the Voronoi diagram is shown in blue. \label{Fex}}
\end{figure}

\begin{lemma}\label{Lbvh}
We denote by $\mathcal{C}'$ the parent of $\mathcal{C}$ in the subdivision.
Using BVH accelerates the refinement of $\mathcal{C}$
if ${|\widetilde{\phi}(\mathcal{C}')|} / {|\widetilde{\phi}(\mathcal{C})|} = \Omega( \lg n+t)$.
\end{lemma}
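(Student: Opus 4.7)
The plan is to compare two ways of producing $\widetilde{\phi}(\mathcal{C})$: the inherited refinement, which scans the parent's active set, and the BVH-based alternative, which rebuilds it from a single rectangle-intersection query. First I would bound the inherited cost: since $\widetilde{\phi}(\mathcal{C}) \subseteq \widetilde{\phi}(\mathcal{C}')$, one iterates through $\widetilde{\phi}(\mathcal{C}')$ and applies the constant-time predicates \texttt{isIntersecting} and \texttt{ZoneInCell} to each candidate, for a total of $O(|\widetilde{\phi}(\mathcal{C}')|)$.

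Next I would analyse the BVH alternative. The natural query region is the axis-aligned box $B_\infty(p_{_\mathcal{C}}, 2 r_{_\mathcal{C}} + \delta_{_\mathcal{C}})$: by the very definition of $\widetilde{\phi}$, every $s \in \widetilde{\phi}(\mathcal{C})$ satisfies $\mu_\infty(p_{_\mathcal{C}}, s) \le 2 r_{_\mathcal{C}} + \delta_{_\mathcal{C}}$, so $s$ lies on some decomposition rectangle reported by this query. Using the $O(1)$-segments-per-rectangle hypothesis, the number of candidate sites returned is proportional to the number of rectangles reported, which in turn is proportional to the number $k$ of terminal bounding boxes hit. Conversely, every site in $\widetilde{\phi}(\mathcal{C})$ forces at least one rectangle, and hence at least one terminal box, to be hit, which yields $k = O(|\widetilde{\phi}(\mathcal{C})|)$. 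Invoking Theorem~\ref{bvh} together with $r = n/2 + 2(h-1) = O(n)$ then bounds the total query time by $O(|\widetilde{\phi}(\mathcal{C})|(\lg n + t))$; a final constant-time recheck of the two membership conditions per returned candidate does not change the asymptotics.

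Comparing $O(|\widetilde{\phi}(\mathcal{C})|(\lg n + t))$ against $O(|\widetilde{\phi}(\mathcal{C}')|)$, the BVH route is asymptotically cheaper exactly when $|\widetilde{\phi}(\mathcal{C}')|(\lg n + t)^{-1}$ dominates $|\widetilde{\phi}(\mathcal{C})|$, i.e.\ when $|\widetilde{\phi}(\mathcal{C}')|/|\widetilde{\phi}(\mathcal{C})| = \Omega(\lg n + t)$, which is the claimed condition.

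The main obstacle I foresee is the bound $k = O(|\widetilde{\phi}(\mathcal{C})|)$: a terminal box can be intersected by the query without any of its rectangles ultimately producing an active site, so converting each hit box into $\Theta(1)$ distinct candidates requires both the minimal-overlap property of the BVH (established in the proof of Theorem~\ref{bvh}) and the $O(1)$-segments-per-rectangle assumption. The remaining steps are routine accounting of predicate calls.
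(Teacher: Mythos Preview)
Your proposal is correct and follows essentially the same approach as the paper: both compare the inherited $O(|\widetilde{\phi}(\mathcal{C}')|)$ scan against a BVH rectangle-intersection query of radius $2r_{_\mathcal{C}}+\delta_{_\mathcal{C}}$, invoke Theorem~\ref{bvh} with $r=O(n)$, and assert $k_{_\mathcal{C}}=O(|\widetilde{\phi}(\mathcal{C})|)$ to obtain the $O(|\widetilde{\phi}(\mathcal{C})|(\lg n+t))$ bound. The only minor difference is that the paper also folds the label-set computations into ``refinement'' (via an initial point query followed by a rectangle query of radius $d_0$), but it argues the relevant $k(p,d_0)$ is $O(1)$, so this does not affect the asymptotics; your honest flag on the $k_{_\mathcal{C}}=O(|\widetilde{\phi}(\mathcal{C})|)$ step is appropriate, as the paper simply asserts it.
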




\begin{proof}
 A label set $\lambda(p)$ is determined by performing a point and a rectagle-intersection query; once the point is detected to lie inside a rectangle $R_0$ of a leaf $T_0$ we find an initial estimation $d_0$ of $\mu_\infty(p,\mathcal{P})$. Since the closest site to $p$ may be on another leaf, we do a rectangle-intersection query centered at $p$ with radius $d_0$. The closest site(s) to $p$ are in the intersected leaves. Thus, finding $\lambda(p)$ takes $O(k(p,d_0)\lg r + k(p,d_0)t)$ (Thm.~\ref{bvh}), where $k(p,d_0)= O(1)$ is the number of BVH leaves intersected by $\overline{B_\infty(p,d_0)}$. 
Computing the sites in $\widetilde{\phi}(\mathcal{C})$ is accelerated if combined with a rectangle intersection query to find segments at $L_\infty$-distance $\le 2r_{_\mathcal{C}}+\delta_{_\mathcal{C}}$ from $p_{_\mathcal{C}}$. 
Let $k_{_\mathcal{C}}$ be the maximum number of BVH leaves intersected by this rectangle-intersection query.
We obtain a total refinement time for the cell equal to $O(k_{_\mathcal{C}}\lg r + k_{_\mathcal{C}}t)$. 
Since $k_{_\mathcal{C}}= O(|\widetilde{\phi}(\mathcal{C})|)$ and $r = O(n)$ the lemma follows.
\end{proof}


\section{Subdivision algorithm in three dimensions}\label{S3d}

Let $\mathcal{P}$ be a manifold orthogonal polyhedron: every edge of $\mathcal{P}$ is shared by exactly two and every vertex by exactly 3 facets.  
For non-manifold input we first employ trihedralization of vertices, discussed in \cite{DBLP:journals/cvgip/MartinezGA13,straight3d}. Input consists of $\mathcal{S}$ and bounding box $\mathcal{C}_0$ of $\mathcal{P}$. An octree is used to represent the subdivision. 

The main difference with the 2D case is that Voronoi sites can be nonconvex. As a consequence, for site $s$, $\mathcal{Z}^+(s)$ is not necessarily convex and therefore the distance function $D_s(\cdot)$ cannot be computed in $O(1)$ time: it is not trivial to check membership in $\mathcal{Z}^+(s)$. It is direct to extend the 2D algorithm in three dimensions. However, we examine efficiency issues.

For an efficient computation of the basic predicates (of Chapter~\ref{SSpredicates}), we \textbf{preprocess} every facet of the polyhedron and decompose it to a collection of rectangles. Then a BVH on the rectangles is constructed. The basic operation of all these predicates in 2D is an overlap test between an interval and a segment in 1D. In 3D, the analog is an overlap test between a 2D rectangle and a site (rectilinear polygon). Once the BVH is constructed for each facet, the rectangle-intersection query takes time logarithmic in the number of facet vertices (Theorem~\ref{bvh}). 

\subsubsection{Subdivision.} 
The active set $\widetilde{\phi}$, $\phi$ and the label set of a point are defined as in to 2D. Most importantly, Lemma~\ref{lem:yap} is valid in 3D as well. 
The algorithm proceeds as follows: We recursively subdivide $\mathcal{C}_0$ into 8 identical cells. The subdivision of a cell stops whenever at least one of the termination criteria below holds. For each cell of the subdivision we maintain the label set of its central point and $\widetilde{\phi}$. Upon subdivision, we propagate $\widetilde{\phi}$ from a parent cell to its children for further refinement. We denote by $M$ the maximum degree of a Voronoi vertex $(M\le 24)$.

\textbf{3D Termination criteria:} (T1') $int(\mathcal{C} ) \cap \mathcal{P} = \emptyset$, 
(T2') $|\widetilde{\phi}(\mathcal{C} ) | \le 4$, 
(T3') $|\widetilde{\phi}(\mathcal{C} ) | \le M$ and the sites in $\widetilde{\phi}(\mathcal{C} ) $ define a unique Voronoi vertex $v \in \mathcal{C}$.

Subdivision is summarized in Alg.~\ref{algo2}.
(T1') is valid for $\mathcal{C}$ iff $\lambda(p_{_\mathcal{C}}) = \emptyset$ and $\forall s\in \widetilde{\phi}(C)$ it holds that $s\cap int(\mathcal{C}) =\emptyset$. Detecting a Voronoi vertex in $\mathcal{C}$ proceeds like in 2D. A Voronoi vertex is equidistant to at least 4 sites and there is a site parallel to each coordinate hyperplane among them.

(T1) used in 2D is omitted, for it is not efficiently decided: labels of the cell vertices cannot guarantee that $\mathcal{C}\subseteq V_\mathcal{D}(s)$. However, as the following lemma indicates, termination of the subdivision is not affected: cells contained in a Voronoi region whose radius is $<r^*$, where $r^*$ is a positive constant, are not subdivided. 

\begin{restatable}{lemma}{Lrad} \label{Lrad}
Let $\mathcal{C}\subseteq V_{\mathcal{D}}(s)$. There exists $r^*>0$ s.t. if  $r_{_{\mathcal{C}}}<r^*$ it holds that $\widetilde{\phi}(\mathcal{C}) = \{s\}.$
\end{restatable}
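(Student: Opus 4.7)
The plan is to exploit two facts: since $\mathcal{C}\subseteq V_{\mathcal{D}}(s)$, the center $p_{_\mathcal{C}}$ lies in $V_{\mathcal{D}}(s)$, and therefore $\delta_{_\mathcal{C}}=\min_{s''}D_{s''}(p_{_\mathcal{C}})$ is attained at $s''=s$, giving $\delta_{_\mathcal{C}}=\mu_\infty(p_{_\mathcal{C}},s)$; moreover, since $V_{\mathcal{D}}(s)$ is open, the ``gap'' $\mu_\infty(p,s')-\mu_\infty(p,s)$ is strictly positive for every competing $s'$ and every $p\in\mathcal{C}\cap\mathcal{Z}^+(s')$. These two facts will become incompatible with $s'\in\widetilde{\phi}(\mathcal{C})$ once $r_{_\mathcal{C}}$ is small enough.

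Membership of $s$ itself in $\widetilde{\phi}(\mathcal{C})$ is immediate, since $\mathcal{C}\subseteq V_{\mathcal{D}}(s)\subseteq\mathcal{Z}^+(s)$ and $\mu_\infty(p_{_\mathcal{C}},s)=\delta_{_\mathcal{C}}\le 2r_{_\mathcal{C}}+\delta_{_\mathcal{C}}$, so the goal reduces to excluding every other site. First, for each $s'\in\mathcal{S}\setminus\{s\}$ with $\mathcal{Z}^+(s')\cap\mathcal{C}\neq\emptyset$, I would consider the continuous gap function $f_{s'}(p):=\mu_\infty(p,s')-\mu_\infty(p,s)$ on the compact set $\mathcal{Z}^+(s')\cap\mathcal{C}$ (compact because $\mathcal{Z}^+(s')$ is closed as the intersection of the closed halfspace $\mathcal{H}(s')$ with the closed set $\mathcal{Z}(s')$, and $\mathcal{C}$ is closed and bounded). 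Openness of $V_{\mathcal{D}}(s)$ forces $f_{s'}>0$ on this domain, so $f_{s'}$ attains a positive minimum $\epsilon_{s'}$. Since $\mathcal{S}$ is finite, setting $\epsilon^*:=\min_{s'}\epsilon_{s'}$ yields a positive constant, and the choice $r^*:=\epsilon^*/4$ will do.

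The contradiction step is a double application of the $L_\infty$ triangle inequality. Assume $r_{_\mathcal{C}}<r^*$ and, for the sake of contradiction, that some $s'\neq s$ belongs to $\widetilde{\phi}(\mathcal{C})$; pick any $p'\in\mathcal{Z}^+(s')\cap\mathcal{C}$. Using $\delta_{_\mathcal{C}}=\mu_\infty(p_{_\mathcal{C}},s)$, the active-set bound $\mu_\infty(p_{_\mathcal{C}},s')\le 2r_{_\mathcal{C}}+\delta_{_\mathcal{C}}$ rewrites as $\mu_\infty(p_{_\mathcal{C}},s')-\mu_\infty(p_{_\mathcal{C}},s)\le 2r_{_\mathcal{C}}$. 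Transporting each term from $p_{_\mathcal{C}}$ to $p'$ via $|\mu_\infty(p',\sigma)-\mu_\infty(p_{_\mathcal{C}},\sigma)|\le\mu_\infty(p',p_{_\mathcal{C}})\le r_{_\mathcal{C}}$ for $\sigma\in\{s,s'\}$ yields $f_{s'}(p')\le 4r_{_\mathcal{C}}<\epsilon^*$, contradicting $f_{s'}(p')\ge\epsilon^*$.

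The main subtlety to watch for is the positivity of $\epsilon^*$: this requires first that $\mathcal{C}\subseteq V_{\mathcal{D}}(s)$ be read with $V_{\mathcal{D}}(s)$ the \emph{open} region (as in the definition preceding Definition~\ref{DourVD}), so that $f_{s'}$ never vanishes on $\mathcal{C}$, and second that $\mathcal{Z}^+(s')\cap\mathcal{C}$ be compact. Once these two observations are in place, the remainder is a direct triangle-inequality calculation, and the constant $\tfrac{1}{4}$ in $r^*$ is dictated entirely by the two Lipschitz transports.
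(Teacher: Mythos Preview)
Your proof is correct and takes a genuinely different route from the paper's. The paper works entirely at the cell center: it argues directly that $\mu_\infty(p_{_\mathcal{C}},s')>\delta_{_\mathcal{C}}$ whenever $\mathcal{Z}^+(s')\cap\mathcal{C}\neq\emptyset$, via a short geometric case analysis (equality would force $q\in\text{aff}(s)$, hence $s'$ adjacent to $s$, and then $\mathcal{Z}^+(s)\cap\mathcal{Z}^+(s')=\text{bis}_{\mathcal{D}}(s,s')$ would meet $\mathcal{C}$, contradicting $\mathcal{C}\subseteq V_{\mathcal{D}}(s)$); the radius $r(s')$ is then simply half of that center-gap. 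You instead move to an auxiliary point $p'\in\mathcal{Z}^+(s')\cap\mathcal{C}$, where the strict inequality $D_s(p')<D_{s'}(p')$ translates \emph{directly} into $\mu_\infty(p',s)<\mu_\infty(p',s')$ without any adjacency reasoning, use compactness to extract a uniform positive gap $\epsilon_{s'}$, and transport back to $p_{_\mathcal{C}}$ by two triangle inequalities. What you gain is that you never invoke the structural fact about zones of adjacent sites; what you pay is the extra factor of two in the constant ($r^*=\epsilon^*/4$ versus the paper's half-gap at the center) and the small indirection through $p'$. Both arguments produce an $r^*$ that depends on the particular cell $\mathcal{C}$, which is consistent with the lemma as stated.
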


\begin{proof}
Let $s' \in \mathcal{S}\setminus s$. We will prove that if $\mathcal{Z}^+(s')\cap\mathcal{C} \neq \emptyset$, it holds that $\delta_{_\mathcal{C}} < \mu_\infty(p_{_\mathcal{C}}, s')$. Therefore there is $r(s')>0$ such that $2r(s') +\delta_{_\mathcal{C}} < \mu_\infty(p_{_\mathcal{C}}, s')$. Let $r^*$ be the minimum of these radii for every site different than $s$. When $r_{_{\mathcal{C}}}<r^*$,  it holds that $\widetilde{\phi}(\mathcal{C}) = \{s\}.$

Suppose that $\delta_{_\mathcal{C}} = \mu_\infty(p_{_\mathcal{C}}, s') = \mu_\infty(p_{_\mathcal{C}}, q)$ for $q\in s'$. Then $q\in  \text{aff}(s)$ and $s'$ cannot be a subset of $\text{aff}(s)$. So, $s'$ is adjacent to $s$. If $\mathcal{Z}^+(s')\cap \mathcal{C} = \emptyset$ then $s'\not \in \widetilde{\phi}(\mathcal{C})$. If $\mathcal{Z}^+(s')\cap \mathcal{C} \neq \emptyset$, since for adjacent sites it holds that $\mathcal{Z}^+(s)\cap \mathcal{Z}^+(s') = \text{bis}_\mathcal{D}(s,s')$, bisector intersects $\mathcal{C}$ which is a contradiction. Thus, there is no site $s'$ with $\mathcal{Z}^+(s')\cap\mathcal{C} \neq \emptyset$ and $\mu_\infty(p_{_\mathcal{C}}, s')=\delta_{_\mathcal{C}}$. 
\end{proof}

\renewcommand{\thealgorithm}{2}
\begin{algorithm}
	\caption{Subdivision3D($\mathcal{P}$)}
	\label{algo2}
	\begin{algorithmic}[1]
\State root $\gets$ bounding box of $\mathcal{P}$
\State $Q \gets$ root
\While {$Q \neq \emptyset$}
	\State $\mathcal{C} \gets$ pop$(Q)$
        \State Compute the label set of central point and $\widetilde{\phi}(\mathcal{C})$.
	\If {(T1') $\lor$ (T2') $\lor$ (T3')}
		\State \Return
	\Else 
\State Subdivide $\mathcal{C}$ into $\mathcal{C}_1, \dots, \mathcal{C}_8$
\State $Q \gets$ $Q \cup \{ \mathcal{C}_1,\dots, \mathcal{C}_8\}$
	\EndIf
\EndWhile
\end{algorithmic}
\end{algorithm}

\begin{restatable}{theorem}{threedhalts}\label{Tterm}
Algorithm~\ref{algo2} halts.
\end{restatable}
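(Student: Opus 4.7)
The plan is to transplant the 2D termination argument of Theorem~\ref{halt} to 3D, taking care of the one structural difference: criterion (T1), which in 2D aborts subdivision inside a single Voronoi region, is \emph{absent} from the 3D list. So I would argue by contradiction: suppose there is an infinite monotone sequence of nested cells $\mathcal{C}_1 \supseteq \mathcal{C}_2 \supseteq \cdots$ produced by Algorithm~\ref{algo2} for which none of (T1'), (T2'), (T3') ever fires. Since (T1') fails at each step, $int(\mathcal{C}_i)\cap \mathcal{P}\neq\emptyset$ for every $i$, and because edge lengths halve on every subdivision, the cells converge monotonically to some point $p \in \mathcal{P}$.

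Now I would invoke Lemma~\ref{lem:yap}(b) to obtain an index $i_0$ such that $\widetilde{\phi}(\mathcal{C}_i) = \phi(p) = \lambda(p)$ for all $i \ge i_0$, and then split on $|\lambda(p)|$. If $|\lambda(p)| \le 4$, then criterion (T2') triggers at $\mathcal{C}_{i_0}$, contradicting the assumption. Otherwise, $p$ is equidistant to at least $d+1=4$ sites and is thus a Voronoi vertex; by Lemma~\ref{Ldegree}(a), $|\lambda(p)| \le 2^d d = 24 = M$. The oriented bisectors of the sites in $\lambda(p)$ intersect precisely at $p$, so the sites in $\widetilde{\phi}(\mathcal{C}_{i_0})$ define a unique Voronoi vertex, and since $p \in \mathcal{C}_{i_0}$, criterion (T3') triggers — again a contradiction. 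Either way the supposed infinite chain cannot exist, so the algorithm halts.

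The subtle point, and the main thing to check, is the case that in 2D would have been caught by (T1): the limit $p$ lies strictly inside some Voronoi region $V_\mathcal{D}(s)$. Here $\lambda(p)=\{s\}$, and Lemma~\ref{lem:yap}(b) already gives $\widetilde{\phi}(\mathcal{C}_{i_0})=\{s\}$, so (T2') trivially applies; alternatively one may invoke Lemma~\ref{Lrad}, which is the quantitative version of precisely this observation (for small enough $r_{_\mathcal{C}}$, an $\widetilde{\phi}$-singleton is forced). Thus the omission of (T1) does not break termination, it merely delays detection until the cell is small enough for $\widetilde{\phi}$ to collapse; the rest of the argument is verbatim the 2D proof with $8=2^2\cdot 2$ replaced by $24=2^3\cdot 3$.
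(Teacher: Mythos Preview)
Your proposal is correct and follows essentially the same approach as the paper: the paper omits an explicit proof of Theorem~\ref{Tterm}, but its placement of Lemma~\ref{Lrad} immediately before the theorem, together with the 2D argument of Theorem~\ref{halt}, makes clear the intended proof is exactly the contradiction-plus-Lemma~\ref{lem:yap}(b) argument you give, with Lemma~\ref{Lrad} supplying the missing (T1) case. Your write-up is in fact more careful than the paper's 2D proof, since you explicitly handle the case $p\in V_\mathcal{D}(s)$ that the absence of (T1) forces one to consider.
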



\if 0
\begin{restatable}{lemma}{Lrad} \label{Lrad}
Let $\mathcal{C}\subseteq V_{\mathcal{D}}(s)$, $s\in \mathcal{S}$ and $r^*(s) = min\{ \mu_\infty(p,s') \mid p\in s, \text{ } s' \cap s =\emptyset, s'\cap (\mathcal{H}(s))^c\neq \emptyset\}$. For $r_{_\mathcal{C}}< r^*(s)/2$ it holds that $\widetilde{\phi}(\mathcal{C}) = \{s\}.$
\end{restatable}
\fi

\if 0
\noindent \textit{Proof sketch.}
Let $s'\in \widetilde{\phi}(\mathcal{C})$, $s' \neq s$. 
We can prove that $\delta_{_\mathcal{C}} < \mu_\infty(p_{_\mathcal{C}}, s') \le 2r_{_\mathcal{C}} +\delta_{_\mathcal{C}} $ 
and therefore we can choose a sufficiently small $r_{_\mathcal{C}}>0$ such that $2r_{_\mathcal{C}} +\delta_{_\mathcal{C}} < \mu_\infty(p_{_\mathcal{C}}, s')$. Taking $r_{_\mathcal{C}}< r^*(s)/2$ guarantees this condition.
\qed
\fi

\subsubsection{Reconstruction.} 
We construct a graph $G=(V,E)$, representing the 1-skeleton of the Voronoi diagram. The nodes of $G$ are of two types, \textit{skeleton nodes} and \textit{Voronoi vertex nodes}, and are labeled by their closest sites. Skeleton nodes span Voronoi edges and are labeled by 3 or 4 sites. We visit the leaves of the octree and process cells with $|\widetilde{\phi}(\mathcal{C})| \ge 3$ and that do not satisfy (T1'). We 
introduce the nodes to the graph as in 2D. Graph edges are added between corners and Voronoi vertex nodes inside a cell. 
We run dual marching cubes (linear in the octree size) and connect graph nodes $v_1,v_2$ located in neighboring cells, iff:
%
\\\hspace*{2mm} $-$ $v_1, v_2$ are skeleton nodes and $\lambda(v_1)=\lambda(v_2)$, or
    \\\hspace*{2mm} $-$ $v_1$ is a skeleton  node, $v_2$ is a Voronoi vertex node and $\lambda(v_1)\subset \lambda(v_2)$, or 
    \\\hspace*{2mm} $-$ $v_1, v_2$ are Voronoi vertex nodes, $\lambda(v_1)\cap \lambda(v_2)= \{s, s', s''\}$ and $v_1v_2\subset \mathcal{P}$.

\begin{restatable}[Correctness]{theorem}{reconstruction}
The output graph is isomorphic to the 1-skeleton of the Voronoi diagram.
\end{restatable}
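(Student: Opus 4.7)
My plan is to follow the structure of the 2D correctness proof, adapted to the fact that a Voronoi edge in 3D is equidistant to three (not two) sites. Consequently, the role played in 2D by bisector nodes is now played by skeleton nodes, labeled by $3$ or $4$ sites. The argument splits naturally into three parts: (a) every inserted node is a point of the true 1-skeleton, (b) every graph edge produced by the three connection rules coincides with a segment of a Voronoi edge, and (c) every Voronoi edge of $\mathcal{V}_{\mathcal{D}}(\mathcal{P})$ is recovered by some such chain of graph edges. Part (a) is a direct lift of the 2D reasoning: \texttt{VoronoiVertexTest} together with intersection of affine bisectors with cell facets, combined with checking membership in the oriented zones, certifies that each node lies on the 1-skeleton. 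Part (c), completeness, follows because Algorithm~\ref{algo2} halts (Theorem~\ref{Tterm}) with every leaf satisfying a termination criterion, and the dual marching cubes traversal enumerates all face-adjacent pairs of leaves, so no feature of the 1-skeleton that crosses a cell boundary is missed.

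The bulk of the work is part (b), which I would handle by treating each connection rule in turn. For two skeleton nodes with identical labels $\{s_1,s_2,s_3\}$, the segment $v_1 v_2$ lies on the affine line obtained as the common intersection of the three pairwise affine bisectors, which supports the Voronoi edge equidistant to $\{s_1,s_2,s_3\}$; since neither cell contains a Voronoi vertex (otherwise rule~2 would apply), the entire segment stays on this Voronoi edge — a 3D analogue of Remark~\ref{Ltwosites}. For the skeleton-to-Voronoi-vertex rule, $v_2$ is an endpoint of the Voronoi edge supporting $v_1$, and the same argument applies.

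The hard part will be rule~3, where $v_1,v_2$ are Voronoi vertex nodes with $\lambda(v_1)\cap \lambda(v_2) = \{s,s',s''\}$ and $v_1 v_2 \subset \mathcal{P}$. Both vertices lie on $\overline{V_{\mathcal{D}}(s)}\cap \overline{V_{\mathcal{D}}(s')}\cap \overline{V_{\mathcal{D}}(s'')}$, so $v_1 v_2$ lies on the affine line supporting the Voronoi edge of those three regions. I would argue by contradiction: if $v_1 v_2$ did not coincide with that Voronoi edge, then some intermediate point would lie in a fourth region $V_{\mathcal{D}}(s''')$, forcing both endpoints to lie on $\partial V_{\mathcal{D}}(s''')$ as well; by the simple connectedness of Voronoi regions (a consequence of Lemma~\ref{star}), this contradicts $v_1 v_2 \subset \mathcal{P}$. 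The condition $v_1 v_2 \subset \mathcal{P}$ is precisely what rules out the 3D analogue of the configuration in Figure~\ref{fig:recex}, where two Voronoi vertices share three sites in their labels yet are separated by a hole of $\mathcal{P}$ and therefore are not connected by a Voronoi edge. Verifying that this single geometric condition captures all such spurious pairings is the step where I expect the argument to require the most care.
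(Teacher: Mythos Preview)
Your proposal is correct and follows essentially the same approach as the paper. The paper does not supply a separate proof for the 3D correctness theorem; the argument is implicitly the 3D lift of the 2D proof, and your write-up carries out exactly that lift---treating the three connection rules in turn, with the third rule handled by the same contradiction (both endpoints would have to lie on the boundary of a fourth Voronoi region, contradicting $v_1v_2\subset\mathcal{P}$). Your decomposition into parts (a)--(c) is in fact more explicit than what the paper provides even in 2D, where only the analogue of your part~(b) is written out.
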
 

\noindent
{\bf Primitives.}
Deciding membership in $\mathcal{H}(\cdot)$ is trivial.
 The predicates of Sect.~\ref{SSpredicates} extend to 3D and the runtime of each is that of a rectangle-intersection query on the BVH constructed for the corresponding site at preprocessing: 
 Let $pr_{\text{aff}(s)}(p)$ be the projection of $p$ to aff$(s)$ and $B_{p,s}$ the $2d-$box on $\text{aff}(s)$ centered at $pr_{\text{aff}(s)}(p)$ with radius $\mu_\infty(p,\text{aff}(s))$. 
Then, $p\in \mathcal{Z}(s)$ iff $B_{p,s} \cap s \neq \emptyset$. A query with $B_{p,s}$ is done by \textbf{\texttt{inZone($p,s$)}}. 
For \textbf{\texttt{ZoneInCell($p,\mathcal{C}$)}} we do a 
query with $\overline{B_\infty(p_{_\mathcal{C}}', r_{_\mathcal{C}}+\rho_i)}$ where $p_{_\mathcal{C}}'=pr_{\text{aff}(s)}(p_{_\mathcal{C}})$, $\rho_i = \mu_\infty(f_i, \text{aff}(s))$ and $f_1,f_2$ the two facets of $\mathcal{C}$ parallel to aff$(s)$. Queries with $ \overline{B_\infty(p_{_\mathcal{C}}', r_{_\mathcal{C}})}$ are also performed by \textbf{\texttt{isIntersecting($s,\mathcal{C}$)}} and \textbf{\texttt{isSt\-rictlyIntersecting($s,\mathcal{C}$)}}. When computing \textit{label sets}, \texttt{LocationTest} is slightly modified, since the corners used to identify the position of a point can also be formed by 3 sites.
 
\medskip\noindent
{\bf Complexity.} Under appropriate scaling so that the edge length of $\mathcal{C}_0$ be 1, if $\Delta$ is the separation bound, then the maximum height of the octree is $L= O(\lg (1/ \Delta))$. The algorithm mainly subdivides cells intersecting $\mathcal{P}$, unlike the 2D algorithm that mainly subdivides cells intersecting $\mathcal{V}_\mathcal{D}(\mathcal{P})$, because a criterion like (T1) is missing. This absence does not affect tree height, since by Lemma~\ref{Lrad} the minimum cell size is same as when we separate sites whose regions are non-neighboring (handled by Lemma~\ref{Lsep}). If $\beta$ is the minimum distance of two Voronoi vertices and $\gamma$ the relative thickness of $\mathcal{P}^c$, taking $\Delta = \Omega(\min\{\beta, \gamma\})$ suffices, as in 2D (Lemma~\ref{Lsep}).


\begin{theorem}\label{Tcompl}
Under UDH and if $n$ is the number of polyhedral facets, $\alpha$ the maximum number of vertices per facet and $t_\alpha$ the maximum number of rectangles in a BVH leaf, the algorithm's complexity is 
$O( n\, \alpha (\lg\alpha +t_\alpha)/\Delta^{2} + 1/\Delta^3)$.
\end{theorem}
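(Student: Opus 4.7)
The plan is to lift the 2D argument to three dimensions, adjusting for the octree's branching factor $8$ and the extra cost of per-facet BVH queries. The three ingredients are the same as in 2D: a bound on the tree depth, an estimate showing that $|\widetilde{\phi}|$ shrinks geometrically with depth under UDH, and a bound on the per-cell work; then I would sum over levels.

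First, I would bound the octree depth. Lemma~\ref{Lrad} takes over the role played by (T1) in 2D (it limits the refinement inside a single Voronoi region), and together with the 3D analog of Lemma~\ref{Lsep} it gives that choosing $\Delta=\Omega(\min\{\beta,\gamma\})$ guarantees every cell of side $\ge\Delta$ triggers one of (T1')--(T3'). Consequently the tree has depth at most $L=O(\lg(1/\Delta))$.

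Next I would transplant the shell-halving step of the 2D complexity proof. Given a cell $\mathcal{C}$ and a child $\mathcal{C}_1$, set $E=\mathrm{vol}(A(p_{_\mathcal{C}},\delta_{_\mathcal{C}},r_{_\mathcal{C}}))$ and $E_1=\mathrm{vol}\bigl(A(p_{_{\mathcal{C}_1}},\delta_{_{\mathcal{C}_1}},r_{_{\mathcal{C}_1}})\setminus B_\infty(p_{_\mathcal{C}},\delta_{_\mathcal{C}})\bigr)$. The same case analysis as in 2D, now performed on 3D $L_\infty$-boxes rather than squares, should yield $E_1\le E/2$; invoking UDH on 3D volumes then gives $|\widetilde{\phi}(\mathcal{C}_i)|=O(n/2^i)$ at level $i$. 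For the per-cell work, I would observe that every basic predicate reduces to a rectangle-intersection query on the BVH of the corresponding facet. By Theorem~\ref{bvh}, and bounding the number of terminal boxes touched by the worst case $k=O(\alpha)$, each predicate runs in $O(\alpha(\lg\alpha+t_\alpha))$, so refining $\mathcal{C}$ from $\widetilde{\phi}(\mathcal{C}')$ costs $O(|\widetilde{\phi}(\mathcal{C}')|\cdot\alpha(\lg\alpha+t_\alpha))$.

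Summing over the tree finishes the argument. With at most $8^i$ cells at level $i$ and per-cell cost $O((n/2^i)\,\alpha(\lg\alpha+t_\alpha))$, the total work at level $i$ is $O(n\cdot 4^i\cdot\alpha(\lg\alpha+t_\alpha))$; the geometric series up to $i=L$ sums to $O(n\alpha(\lg\alpha+t_\alpha)/\Delta^2)$. Reconstruction visits each leaf and each pair of neighbouring leaves a constant number of times via the 3D dual marching cubes, contributing $O(1/\Delta^3)$ since the octree has at most $8^L=1/\Delta^3$ leaves. Adding the two bounds yields the claimed complexity. The main obstacle is the 3D shell-halving step: the forbidden inner ball $B_\infty(p_{_\mathcal{C}},\delta_{_\mathcal{C}})$ and the child annulus can interact in more configurations than in 2D, so one must verify case by case that at least half of the child's annular volume survives the subtraction, and then check that the UDH assumption on 3D volumes transfers cleanly through these shells.
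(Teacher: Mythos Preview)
Your proposal is correct and mirrors the paper's own proof almost step for step: the annulus-volume halving $E_1\le E/2$ combined with UDH to get $|\widetilde{\phi}(\mathcal{C}_i)|=O(n/2^i)$, the $8^i\cdot n/2^i=n\cdot 4^i$ summation yielding $O(4^L n)=O(n/\Delta^2)$ total queries, the per-query cost $O(\alpha(\lg\alpha+t_\alpha))$ from Theorem~\ref{bvh}, and the $O(1/\Delta^3)$ leaf count for reconstruction. The paper is terser (it simply asserts $E_1\le E/2$ is ``easy to show'' rather than flagging it as the main obstacle), but the structure is identical.
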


\begin{proof} We sum active sets' cardinalities of octree nodes, since refining a cell requires a number of rectangle-intersection queries linear in the size of its parent's active set. Let $\mathcal{C}$ and its child $\mathcal{C}_1$. Any $s\in \widetilde{\phi}(\mathcal{C})$ satisfies $ \delta_{_\mathcal{C}}\le \mu_\infty(p_{_\mathcal{C}},s) \le 2r_{_\mathcal{C}}+\delta_{_\mathcal{C}}$. We denote by $E$ the volume of the annulus $\{q\in \mathbb{R}^2 \mid \delta_{_\mathcal{C}}\le \mu_\infty(p_{_\mathcal{C}},q) \le 2r_{_\mathcal{C}}+\delta_{_\mathcal{C}} \}$ and by $E_1$ the volume of the respective annulus for $\mathcal{C}_1$, minus the volume of the annulus' intersection with $B_\infty(p_{_\mathcal{C}}, \delta_{_\mathcal{C}}) $. It is easy to show $E_1\le E/2 $. Under Hypothesis~1, we sum all levels and bound by 
$O(4^Ln)$ the number of rectangle-intesection queries. Using Thm.~\ref{bvh}, we find that the complexity of the subdivision phase is $O( n\, \alpha (\lg\alpha +t_\alpha)/\Delta^{2})$. The complexity of the reconstruction phase is $O(\tilde{n})$, where $\tilde{n}$ is the number of leaf nodes in the octree, which is in turn $O(1/\Delta^3)$. This allows to conclude. 
\end{proof}

\begin{remark}\label{Rmk3d}
$t_\alpha=O(\alpha)$ so the bound of Thm.~\ref{Tcompl} is $O(n\alpha^2/\Delta^{2}+1/\Delta^3)$.
Let $V$ be the number of input vertices.
It is expected that $n\alpha = O(V)$; 
also $\alpha$ 
is usually constant. 
In this case, the complexity simplifies to $O(V/\Delta^{2} + 1/\Delta^3)$.
\end{remark}

\vspace{-0.6cm}



\begin{figure}
  \centering
  \includegraphics[scale=0.15]{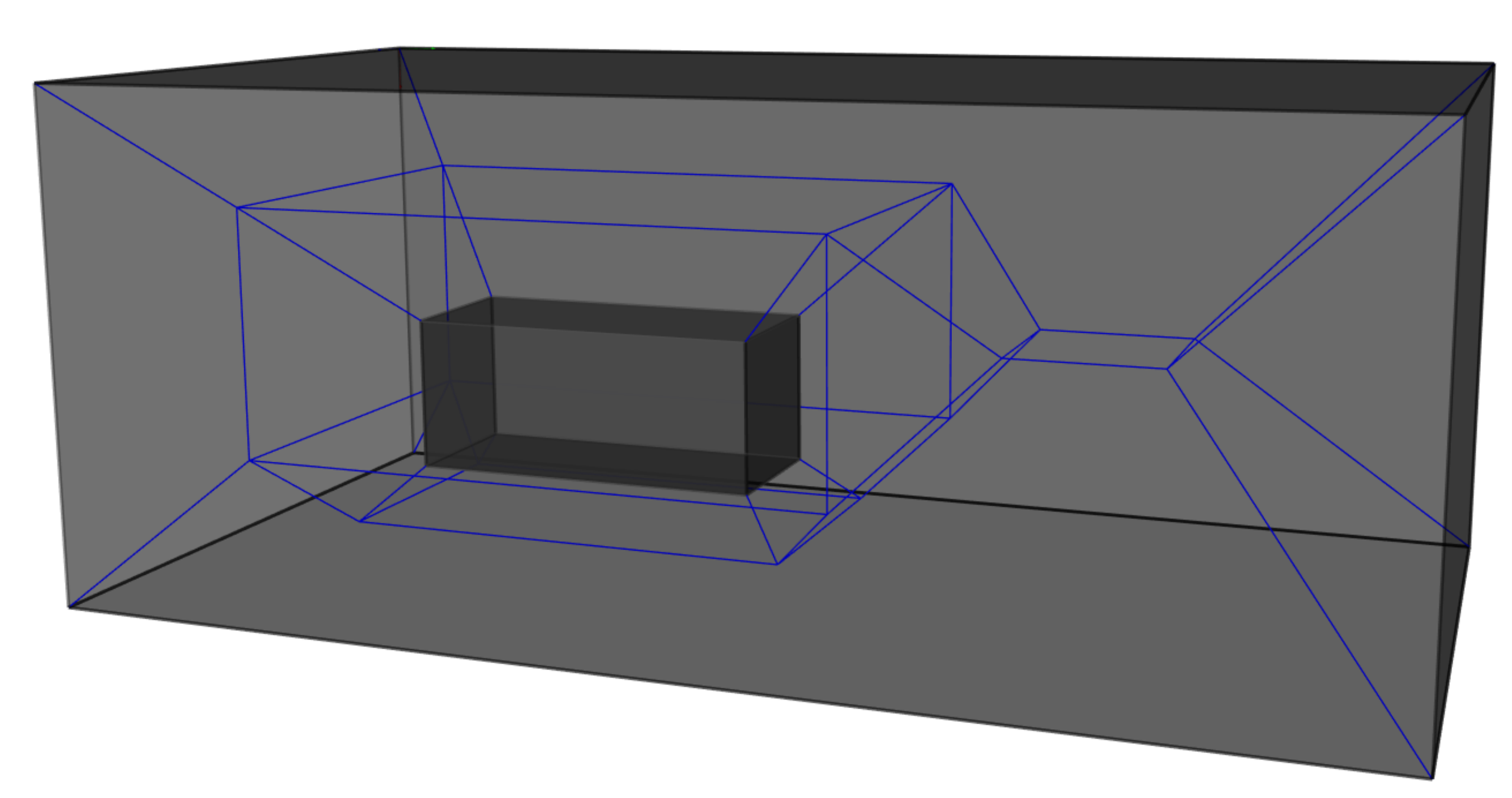}
 \caption{The 1-skeleton of the Voronoi diagram is shown in blue. Input consists of 12 sites and 386 cells are generated (not shown). Total time is 94.8 ms.\label{Fex3}}
\end{figure}

\section{Implementation and concluding remarks}\label{SImpl}

Our algorithms were implemented in Julia 
and use the algebraic geometric modeler \texttt{Axl} 
for visualization. 
 They are available in \url{https://gitlab.inria.fr/ckatsama/L\_infinity\_Voronoi}. 
 They are efficient in practice and their performance scales well in thousands of input sites.
Some examples with runtimes are given in Figure~\ref{Fex} and~\ref{Fex3}. All experiments were run on a 64-bit machine with an Intel(R) Core(TM) i7-8550U CPU @1.80GHz and 8.00~GB of RAM.

Our complexity bounds rely on a hypothesis of "sufficiently uniform" input. 
Even though Uniform Distribution Hypothesis might seem a strong assumption, note that our analysis does not encounter the width of the subdivision tree and considers it as a complete tree. 
We anticipate that in instances where our hypothesis does not hold, the ‘‘non-uniform" distribution of sites is captured by sudden increases or decreases in the width of the tree passing from a higher level to a lower one. 
We expect that there is a trade-off between the number of cells at level $i$ of the subdivision tree variating from $4^i$ and the cardinality of the active set of a cell being independent of the current level, so that the total complexity remains linear in the number of sites (facets).
We plan to extend our research towards proving non trivial complexity bounds that exploit to the maximum degree the adaptivity of subdivision algorithms and do not rely on assumptions.

\vspace{0.2cm}

\noindent
{\small{\bf Acknowledgements.}
We thank Evanthia Papadopoulou for commenting on a preliminary version of the paper and Bernard Mourrain for collaborating on software. 
Both authors are members of AROMATH, a joint team between INRIA Sophia-Antipolis (France) and NKUA.}
\vfill
\bibliographystyle{splncs04}
\bibliography{mybibliography}

\begin{thebibliography}{10}
\providecommand{\url}[1]{\texttt{#1}}
\providecommand{\urlprefix}{URL }
\providecommand{\doi}[1]{https://doi.org/#1}

\bibitem{Agarwal2001BoxtreesAR}
Agarwal, P.K., de~Berg, M., Gudmundsson, J., Hammar, M., Haverkort, H.J.:
  Box-trees and {R-trees} with near-optimal query time. In: Proc. Symposium on
  Computational Geometry (2001)

\bibitem{straight}
Aichholzer, O., Aurenhammer, F., Alberts, D., G\"{a}rtner, B.: A novel type of
  skeleton for polygons. J. Univ. Comp. Science (Springer)  \textbf{1},
  752--761 (Jan 1995)

\bibitem{Aurenhammer2016}
Aurenhammer, F., Walzl, G.: Straight skeletons and mitered offsets of nonconvex
  polytopes. Discrete {\&} Computational Geometry  \textbf{56}(3),  743--801
  (Oct 2016)

\bibitem{straight3d}
Barequet, G., Eppstein, D., Goodrich, M., Vaxman, A.: Straight skeletons of
  three-dimensional polyhedra (05 2008)

\bibitem{Bennett-2016}
Bennett, H., Papadopoulou, E., Yap, C.: Planar minimization diagrams via
  subdivision with applications to anisotropic {Voronoi} diagrams. Computer
  Graphics Forum  \textbf{35} (08 2016). \doi{10.1111/cgf.12979}

\bibitem{CDGP14}
Cheilaris, P., Dey, S.K., Gabrani, M., Papadopoulou, E.: Implementing the
  {L$_\infty$ Segment Voronoi} diagram in {CGAL} and applying in {VLSI} pattern
  analysis. In: Hong, H., Yap, C. (eds.) Proc. Math. Soft: ICMS. pp. 198--205.
  Springer (2014)

\bibitem{Held19}
Eder, G., Held, M., Palfrader, P.: Computing the straight skeleton of an
  orthogonal monotone polygon in linear time. In: {Europ. Workshop Comput.
  Geom., Utrecht} (03 2019), \url{www.eurocg2019.uu.nl/papers/16.pdf}

\bibitem{EMIRIS2013511}
Emiris, I.Z., Mantzaflaris, A., Mourrain, B.: Voronoi diagrams of algebraic
  distance fields. J.\ Comp.-Aided Design  \textbf{45}(2),  511--516 (2013),
  symp.\ Solid Physical Modeling 2012

\bibitem{EppSkeleton}
Eppstein, D., Erickson, J.: Raising roofs, crashing cycles, and playing pool:
  Applications of a data structure for finding pairwise interactions. Discrete
  \& Comput. Geometry  \textbf{22},  58--67 (1998)

\bibitem{Etzion:2002:CVS:511635.511636}
Etzion, M., Rappoport, A.: Computing {Voronoi} skeletons of a 3-{D} polyhedron
  by space subdivision. Comput. Geometry: Theory \& Appl.  \textbf{21}(3),
  87--120 (Jan 2002)

\bibitem{Haverkort2004}
Haverkort, H.J.: Results on geometric networks and data structures. Ph.D.
  thesis, Utrecht University (2004),
  \url{http://igitur-archive.library.uu.nl/dissertations/2004-0506-101707/UUindex.html}

\bibitem{cone}
Held, M., Huber, S.: Topology-oriented incremental computation of {Voronoi}
  diagrams of circular arcs and straight line segments. J. Computer-Aided
  Design  \textbf{41},  327--338 (05 2009)

\bibitem{Held18}
Held, M., Palfrader, P.: Straight skeletons and mitered offsets of polyhedral
  terrains in {3D}. J.\ Comp.-Aided Design \& Applications  \textbf{16},
  611--619 (11 2018)

\bibitem{Koltun:2002:PVD:513400.513428}
Koltun, V., Sharir, M.: Polyhedral {V}oronoi diagrams of polyhedra in three
  dimensions. In: Proc. Symp. Computational Geometry. pp. 227--236. ACM, New
  York (2002)

\bibitem{Lipski:1984:OMP:1653.1661}
Lipski, Jr., W.: An ${O}( {N} \log {N})$ {M}anhattan path algorithm. Inf.
  Process. Lett.  \textbf{19}(2),  99--102 (Sep 1984).
  \doi{10.1016/0020-0190(84)90105-4},
  \url{http://dx.doi.org/10.1016/0020-0190(84)90105-4}

\bibitem{DBLP:journals/cvgip/MartinezGA13}
Mart{\'{\i}}nez, J., Garcia, N.P., Anglada, M.V.: Skeletal representations of
  orthogonal shapes. Graphical Models  \textbf{75}(4),  189--207 (2013)

\bibitem{Milenkovic93robustconstruction}
Milenkovic, V.: Robust construction of the {V}oronoi diagram of a polyhedron.
  In: In Proc. 5th Canad. Conf. Comput. Geom. pp. 473--478 (1993)

\bibitem{PapLee01}
Papadopoulou, E., Lee, D.: The {L$_\infty$ Voronoi} diagram of segments and
  {VLSI} applications. Intern. J. Comput. Geom. \& Applications
  \textbf{11}(05),  503--528 (2001)

\bibitem{Scot-2005}
Schaefer, S., Warren, J.: Dual marching cubes: Primal contouring of dual grids.
  Computer Graphics Forum  \textbf{24} (2005)

\bibitem{Soltan1993}
Soltan, V., Gorpinevich, A.: Minimum dissection of a rectilinear polygon with
  arbitrary holes into rectangles. Discrete {\&} Computational Geometry
  \textbf{9}(1),  57--79 (Jan 1993). \doi{10.1007/BF02189307},
  \url{https://doi.org/10.1007/BF02189307}

\bibitem{polygon}
Srinivasan, V., R.~Nackman, L.: Voronoi diagram for multiply-connected
  polygonal domains {I}: Algorithm. IBM J. Research \& Development
  \textbf{31},  361--372 (May 1987)

\bibitem{Yap-2012}
Yap, C.K., Sharma, V., Jyh-Ming, L.: Towards exact numerical {Voronoi}
  diagrams. In: IEEE Intern. Symp. Voronoi Diagrams in Science and Engineering
  (ISVD). New Brunswick, NJ (Jun 2012). \doi{10.1109/ISVD.2012.31}

\end{thebibliography}

\newpage

\end{document}